\RequirePackage{fix-cm}

\documentclass[smallextended]{svjour3}

\smartqed

\usepackage{graphicx}

\usepackage{amsmath}
\usepackage{amssymb}
\usepackage{braket}
\usepackage{color}

\newtheorem{thm}{Theorem}

\newtheorem{lem}{Lemma}

\bibliographystyle{spphys}

\begin{document}

\title{Limit distribution of a time-dependent quantum walk on the half line
}
\subtitle{}


\author{Takuya Machida}

\authorrunning{T.~Machida} 

\institute{%
T.~Machida \at
              College of Industrial Technology, Nihon University, Narashino, Chiba 275-8576, Japan\\
              \email{machida.takuya@nihon-u.ac.jp}
}

\date{}

\maketitle

\begin{abstract}
We focus on a 2-period time-dependent quantum walk on the half line in this paper.
The quantum walker launches at the edge of the half line in a localized superposition state and its time evolution is carried out with two unitary operations which are alternately cast to the quantum walk.
As a result, long-time limit finding probabilities of the quantum walk turn to be determined by either one of the two operations, but not both.
More interestingly, the limit finding probabilities are independent from the localized initial state.
We will approach the appreciated features via a quantum walk on the line which is able to reproduce the time-dependent walk on the half line.
\keywords{Time-dependent quantum walk \and Half line \and Limit distribution}
\end{abstract}

\section{Introduction}

As one of the quantum counterparts of random walks, quantum walks have been investigated since around 2000 and many features of them have been discovered.
The most appreciated feature is that probability distributions of the quantum walks show ballistic spread as their time evolutions are promoting.
Also the probability distributions are not similar to the Gauss distributions which are known for the approximate distributions of random walks.
The intriguing features have been applied for quantum algorithms and turned out to prove that the algorithms can perform quadratic speed-up~\cite{Venegas-Andraca2012}.

In this paper we focus on a quantum walk on the half line and attempt to get its long-time limit distributions.
This study is motivated on long-time limit distributions because they describe how the quantum walkers behave after long-time, and the limit distributions for time-dependent quantum walks have not been derived.
The walker moves around the locations represented by the set of non-negative integers $\left\{0,1,2,\ldots\right\}$.
Limit distributions were analyzed for several quantum walks on the half line in the past studies~\cite{KonnoSegawa2011,LiuPetulante2013,Machida2016}.
While we take care of a time-dependent quantum walk in this study, the past researches on quantum walks on the half line were all for time-independent walks.
Konno and Segawa~\cite{KonnoSegawa2011} found long-time limit measures of two types of quantum walk on the half line.
Each type had a large mass in distribution and the mass was described as localization.
The presence of localization allowed them to derive limit measures with which where the quantum walker was observed after its unitary evolution ran infinite times.
On the other hand, a limit theorem on a rescaled space by time was proved by Liu and Petulante~\cite{LiuPetulante2013}.
The limit theorem depicted the approximate and global shape of the probability distribution of a quantum walk on the half line.
Machida~\cite{Machida2016} discovered a relation between a quantum walk on the half line and a quantum walk on the line which held the infinite locations represented by the set of integers $\mathbb{Z}=\left\{0, \pm 1, \pm 2,\ldots\right\}$.
From the result, exact representations for the probability distributions of the quantum walk on the half line were revealed, and their limit distributions were computed by Fourier analysis.

As mentioned earlier, we will study a time-dependent quantum walk and aim at its long-time limit distributions.
Starting with a numerical study~\cite{RibeiroMilmanMosseri2004}, time-dependent quantum walks were numerically examined~(e.g.~\cite{BanulsNavarretePerezRoldanSoriano2006,Romanelli2009}) and theoretically analyzed~\cite{MachidaKonno2010,Machida2011,IdeKonnoMachidaSegawa2011,Machida2013b,GrunbaumMachida2015}.
Particularly, a long-time limit distribution of a 2-period time-dependent quantum walk on the line was investigated~\cite{MachidaKonno2010}, and five years after the paper was published, the same kind of limit distribution of a 3-period time-dependent quantum walk on the line came out~\cite{GrunbaumMachida2015}.  
Both quantum walks had specific features and their limit distributions completely reproduced the features. 
With the methods appearing in the papers~\cite{Machida2016,MachidaKonno2010}, we will approach long-time limit distributions of the quantum walk on the half line in this paper.

In the subsequent section, we start off with the definition of the 2-period time-dependent quantum walk on the half line in which the initial state of the walker localizes.
The dependency on time is expressed in the dynamics of the walk by the alternate usage of two unitary operations.
In the same section, we introduce a 2-period time-dependent quantum walk on the line.
If the quantum walk on the line launches with a suitable delocalized initial state, it reproduces all the information of the time-dependent quantum walk on the half line.
Using the limit distributions of the 2-period time-dependent quantum walk on the line, we approach the limit distributions of the time-dependent quantum walk on the half line.
In Sec.~\ref{sec:summary}, this study will be wrapped up along with discussion.
We also see exact representations for probability distributions of a time-independent quantum walk on the half line in Appendix.
Managing a past study~\cite{Konno2002a} and one of the results in this paper, one may lead to the representations.

\section{2-period time-dependent quantum walk on the half line and quantum walk on the line}
\label{sec:HL_QW}

Time-dependent quantum walks are defined as the walks whose unitary operations update in parallel with the evolution of their systems.
In this section we first define the system of quantum walk on the half line and then update it with two unitary operations.
Since the unitary operations are used in temporally alternate shifts, let us call the walk a 2-period time-dependent quantum walk on the half line. 

The quantum walk can be expressed on a linear system and it is given as a tensor of two Hilbert spaces.   
One of the spaces represents the positions of the walker and the other represents the inner states which are interpreted as spin states in terms of physics.
The position space, represented by $\mathcal{H}_p^{HL}$, is spanned by the orthonormal basis $\left\{\ket{x} : x\in\left\{0,1,2,\ldots\right\}\right\}$, and the inner state space, represented by $\mathcal{H}_c$, is spanned by the orthonormal basis $\left\{\ket{0}, \ket{1}\right\}$.
As defined right now, the quantum walker has two inner states $0$ and $1$, expressed as $\ket{0}$ and $\ket{1}$ respectively on the Hilbert space $\mathcal{H}_c$, and they are physically considered as the down-spin state and the up-spin state.
That is, keeping a superposition of two inner states, the walker exists on the half line indicated by the set of non-negative integers $\left\{0,1,2,\ldots\right\}$.  

Now that the system of quantum walk has been defined, let us describe the system at time $t\,(=0,1,2,\ldots)$ by $\ket{\Psi_t}\in\mathcal{H}_p^{HL}\otimes\mathcal{H}_c$ and give an evolution to it.
The evolution is determined dependently on whether the time of the system is even or odd,
\begin{equation}
 \ket{\Psi_{t+1}}=\left\{\begin{array}{ll}
	       \tilde{S}^{HL}\tilde{C_1}^{HL}\ket{\Psi_t} & (t=0,2,4,\ldots),\\
			  \tilde{S}^{HL}\tilde{C_2}^{HL}\ket{\Psi_t} & (t=1,3,5,\ldots),\\
		     \end{array}\right.\label{eq:HL_time_evolution}\
\end{equation}
where
\begin{align}
 \tilde{C_1}^{HL}=&\sum_{x=0}^\infty\ket{x}\bra{x}\otimes C_1,\\
 \tilde{C_2}^{HL}=&\sum_{x=0}^\infty\ket{x}\bra{x}\otimes C_2,\\
 \tilde{S}^{HL}=&\ket{0}\bra{0}\otimes\ket{1}\bra{0}+\sum_{x=1}^\infty\ket{x-1}\bra{x}\otimes\ket{0}\bra{0}+\sum_{x=0}^\infty\ket{x+1}\bra{x}\otimes\ket{1}\bra{1},\label{eq:HL_shift_operator}
\end{align}
with
\begin{align}
 C_1=&\cos\theta_1\ket{0}\bra{0}+\sin\theta_1\ket{0}\bra{1}+\sin\theta_1\ket{1}\bra{0}-\cos\theta_1\ket{1}\bra{1}\nonumber\\
 =&c_1\ket{0}\bra{0}+s_1\ket{0}\bra{1}+s_1\ket{1}\bra{0}-c_1\ket{1}\bra{1},\label{eq:coin-flip_operator_1}\\[2mm]
 C_2=&\cos\theta_2\ket{0}\bra{0}+\sin\theta_2\ket{0}\bra{1}+\sin\theta_2\ket{1}\bra{0}-\cos\theta_2\ket{1}\bra{1}\nonumber\\
 =&c_2\ket{0}\bra{0}+s_2\ket{0}\bra{1}+s_2\ket{1}\bra{0}-c_2\ket{1}\bra{1}.\label{eq:coin-flip_operator_2}
\end{align}
in which $\cos\theta_1, \sin\theta_1, \cos\theta_2$, and $\sin\theta_2$ have been shortly written as $c_1, s_1, c_2$, and $s_2$ respectively.
The values of parameters $\theta_1$ and $\theta_2$ are supposed to stay in the interval $[0,2\pi)$.
In this study, the quantum walker is set in a localized state at time $0$,
\begin{equation}
 \ket{\Psi_0}=\ket{0}\otimes \left(\alpha\ket{0}+\beta\ket{1}\right),\label{eq:HL_initial_state}
\end{equation}
with $|\alpha|^2+|\beta|^2=1\,(\alpha,\beta\in\mathbb{C})$, and the system iterates Eq.~\eqref{eq:HL_time_evolution}.
The letter $\mathbb{C}$ denotes the set of complex numbers.

The study of quantum walks normally aims at finding where the walkers exist after their updates, and their positions are observed with probability laws.
The finding probability of the walker with inner state $j\in\left\{0,1\right\}$ at position $x\in\left\{0,1,2,\ldots\right\}$ is given by
\begin{equation}
 \mathbb{P}(X_t^{HL}=x;j)=\bra{\Psi_t}\Bigl(\ket{x}\bra{x}\otimes\ket{j}\bra{j}\Bigr)\ket{\Psi_t},\label{eq:HL_probability_inner_state}
\end{equation}
where $X_t^{HL}$ indicates the position of the walker on the half line.
If we observe just the position without considering the inner states, the finding probability should be defined as the sum of the finding probabilities based on Eq.~\eqref{eq:HL_probability_inner_state},
\begin{equation}
 \mathbb{P}(X_t^{HL}=x)=\sum_{j=0}^1 \mathbb{P}(X_t^{HL}=x;j)=\bra{\Psi_t}\biggl(\ket{x}\bra{x}\otimes\sum_{j=0}^1\ket{j}\bra{j}\biggr)\ket{\Psi_t}.\label{eq:HL_probability}
\end{equation}
These finding probabilities output Figs.~\ref{fig:160720_01}, \ref{fig:160718_04}, \ref{fig:160718_07}, and \ref{fig:160718_10}.
Viewing Figs.~\ref{fig:160720_01} and \ref{fig:160718_04}, we can get the representative features of quantum walks, that is, the distributions hold a sharp peak and spread out in proportion to time $t$.
Figures~\ref{fig:160718_07} and \ref{fig:160718_10} should be appreciated because there seems to be a region where the distributions are determined by either $\theta_1$ or $\theta_2$, but not both.
For instance, seeing Fig.~\ref{fig:160718_07}-(a), we realize that the distribution seems to be independent from $\theta_1$ as long as the parameter $\theta_1$ picks a value from the region $[0,\pi/4]\cup [3\pi/4,5\pi/4]\cup [7\pi/4,2\pi)$.
That fact will indeed make sense when the limit distributions show up in Theorem~\ref{th:limit}.

\begin{figure}[h]
\begin{center}
 \begin{minipage}{35mm}
  \begin{center}
   \includegraphics[scale=0.3]{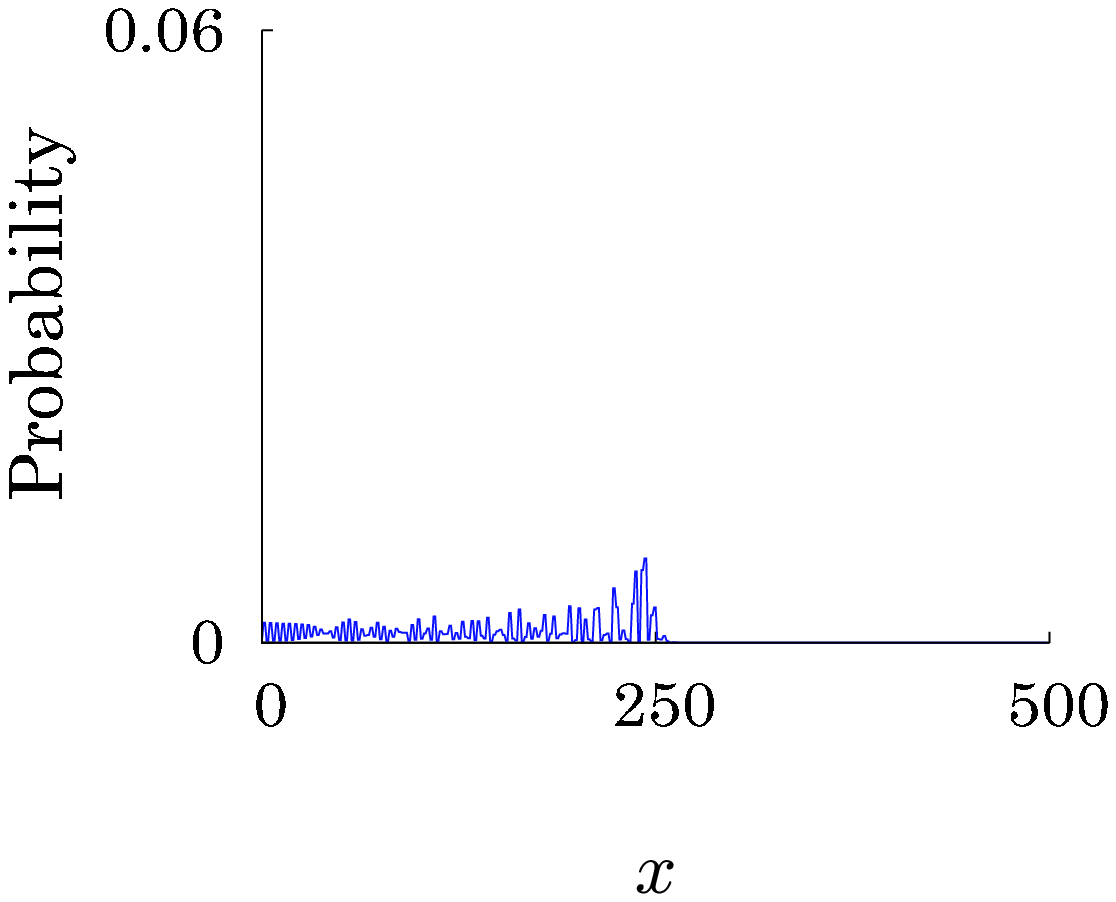}\\[2mm]
  (a) $\mathbb{P}(X_{500}^{HL}=x;0)$
  \end{center}
 \end{minipage}
 \begin{minipage}{35mm}
  \begin{center}
   \includegraphics[scale=0.3]{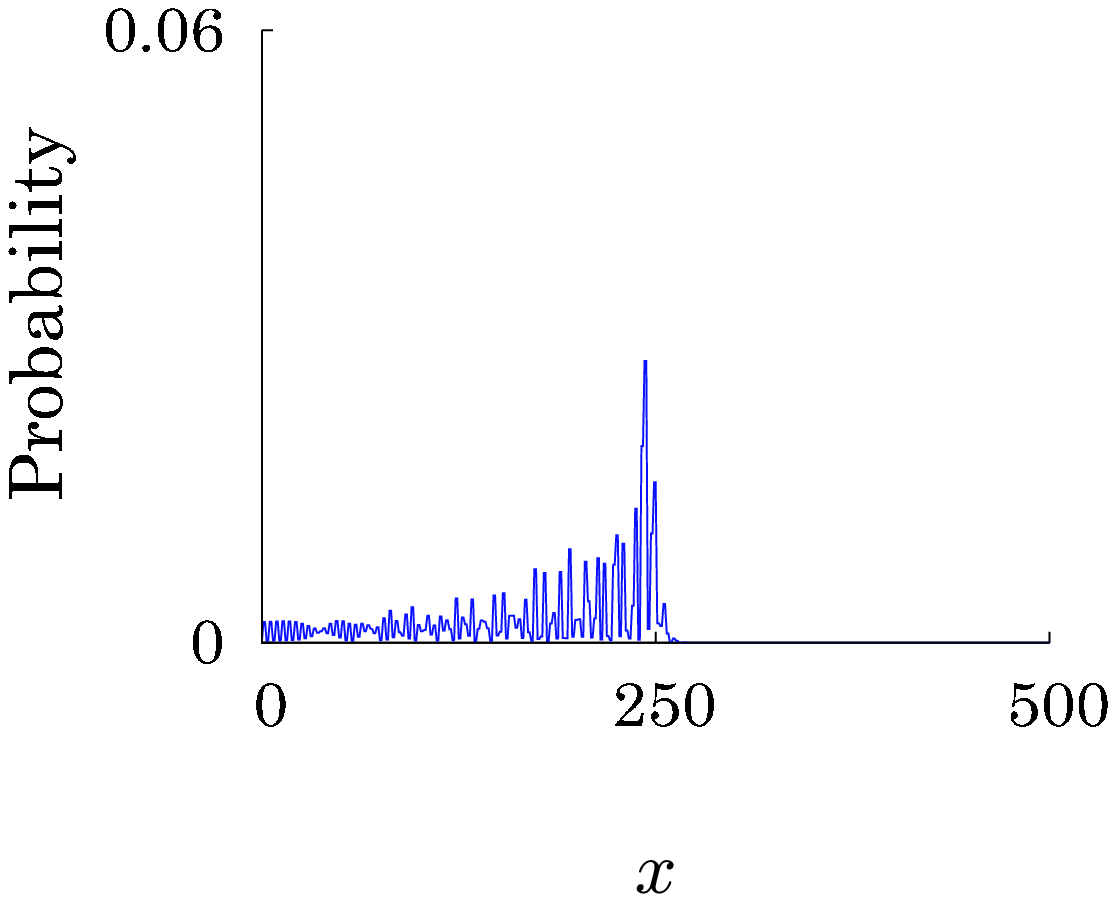}\\[2mm]
  (b) $\mathbb{P}(X_{500}^{HL}=x;1)$
  \end{center}
 \end{minipage}
 \begin{minipage}{35mm}
  \begin{center}
   \includegraphics[scale=0.3]{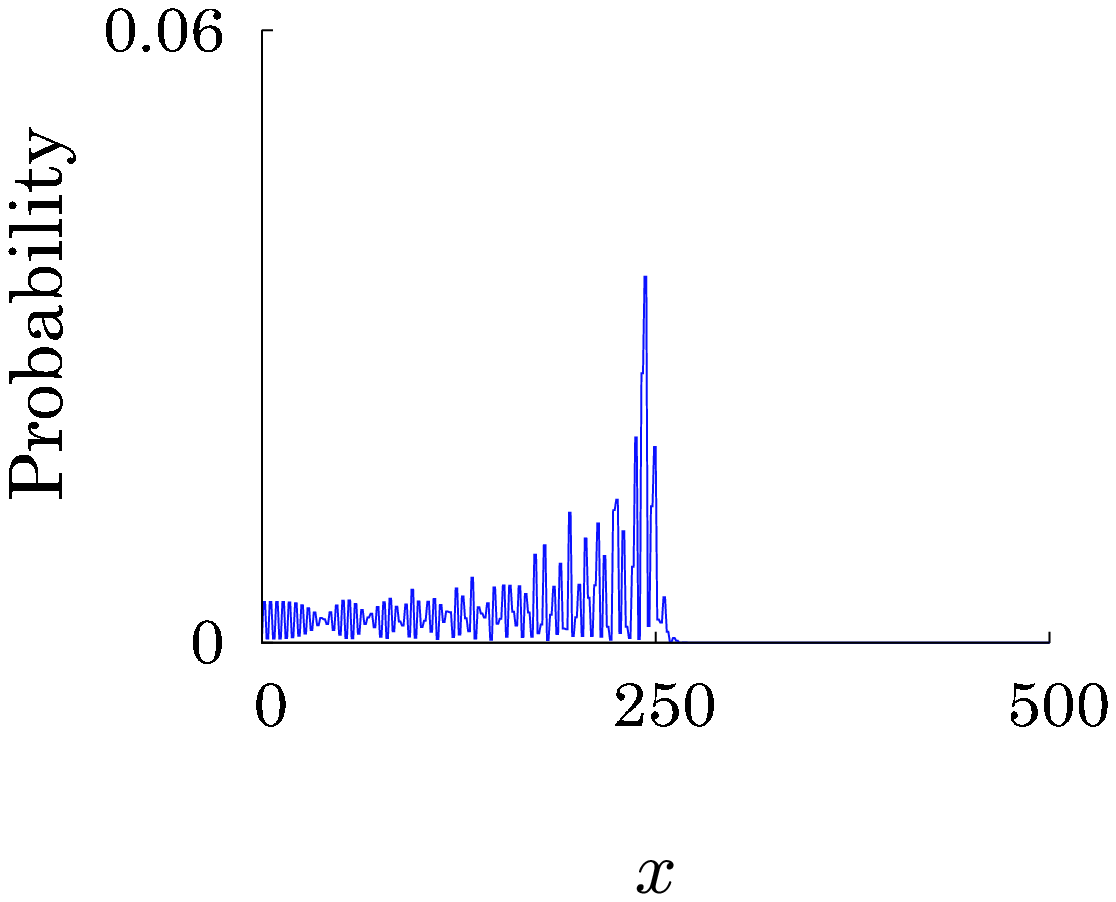}\\[2mm]
  (c) $\mathbb{P}(X_{500}^{HL}=x)$
  \end{center}
 \end{minipage}
\vspace{5mm}
\caption{$\theta_1=\pi/3,\,\theta_2=\pi/4$ : After the quantum walker has iterated its update $500$ times, we get the finding probabilities shown in these pictures. ($\alpha=1/\sqrt{2},\,\beta=i/\sqrt{2}$)}
\label{fig:160720_01}
\end{center}
\end{figure}

\begin{figure}[h]
\begin{center}
 \begin{minipage}{35mm}
  \begin{center}
   \includegraphics[scale=0.2]{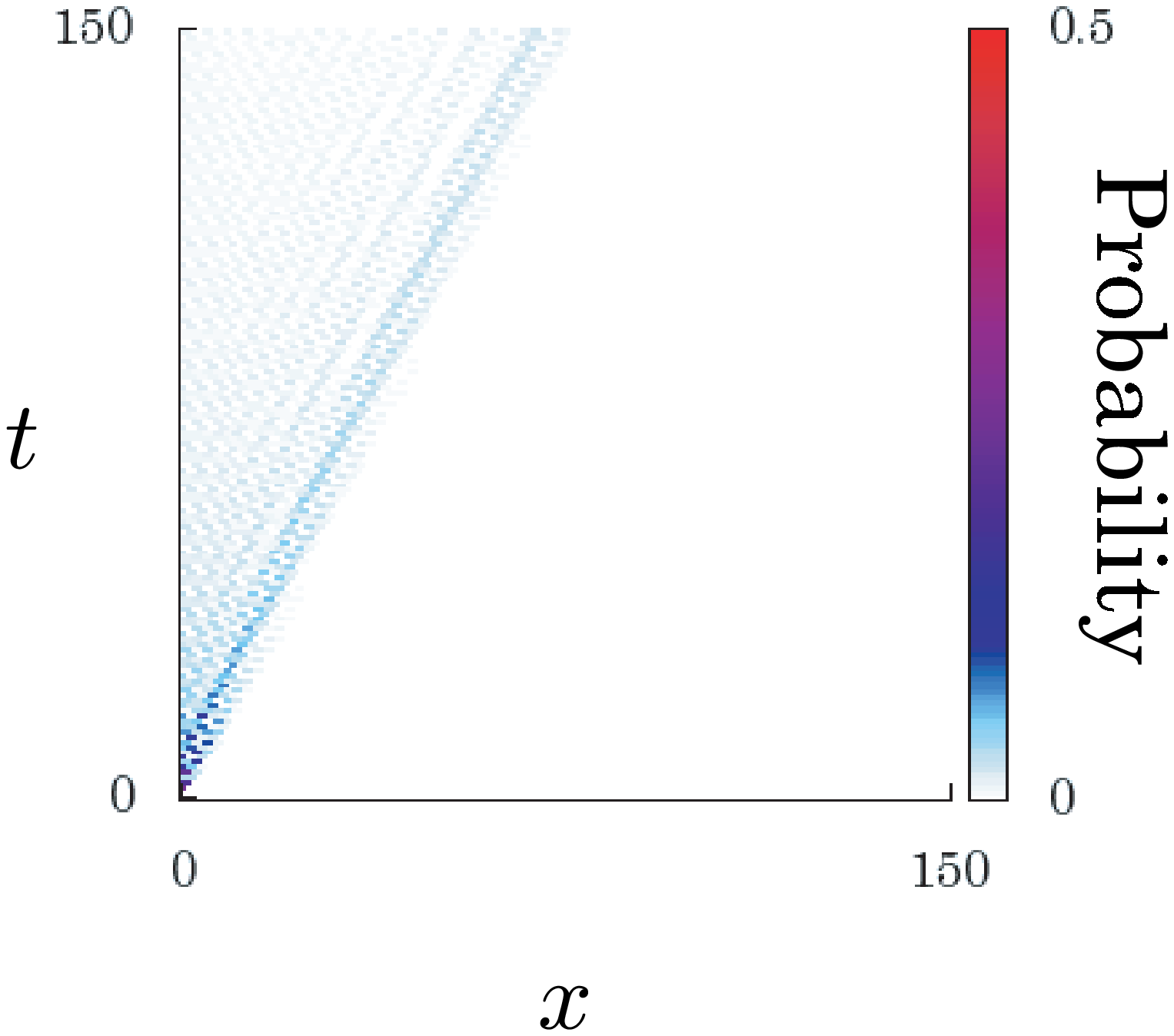}\\[2mm]
  (a) $\mathbb{P}(X_t^{HL}=x;0)$
  \end{center}
 \end{minipage}
 \begin{minipage}{35mm}
  \begin{center}
   \includegraphics[scale=0.2]{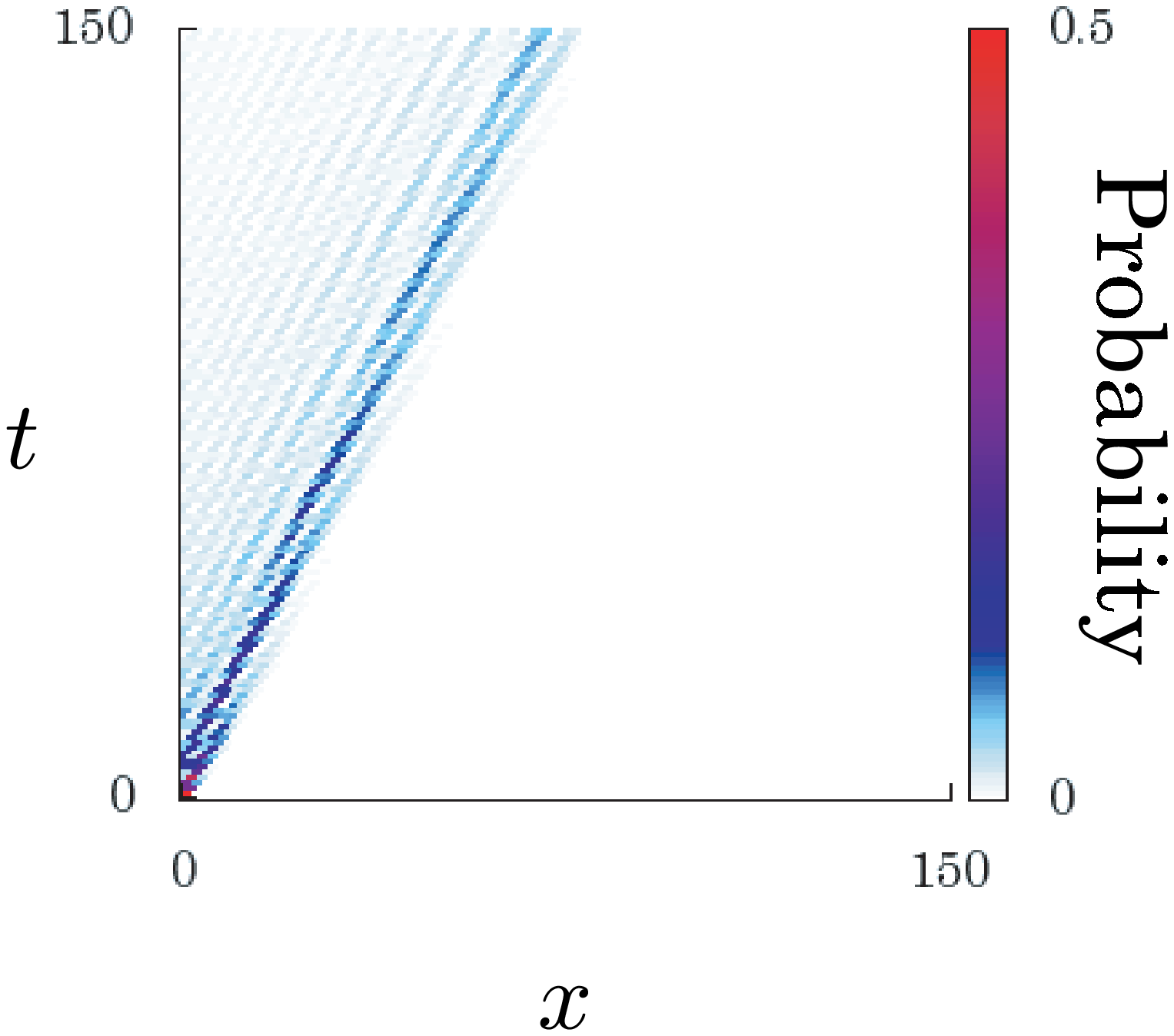}\\[2mm]
  (b) $\mathbb{P}(X_t^{HL}=x;1)$
  \end{center}
 \end{minipage}
 \begin{minipage}{35mm}
  \begin{center}
   \includegraphics[scale=0.2]{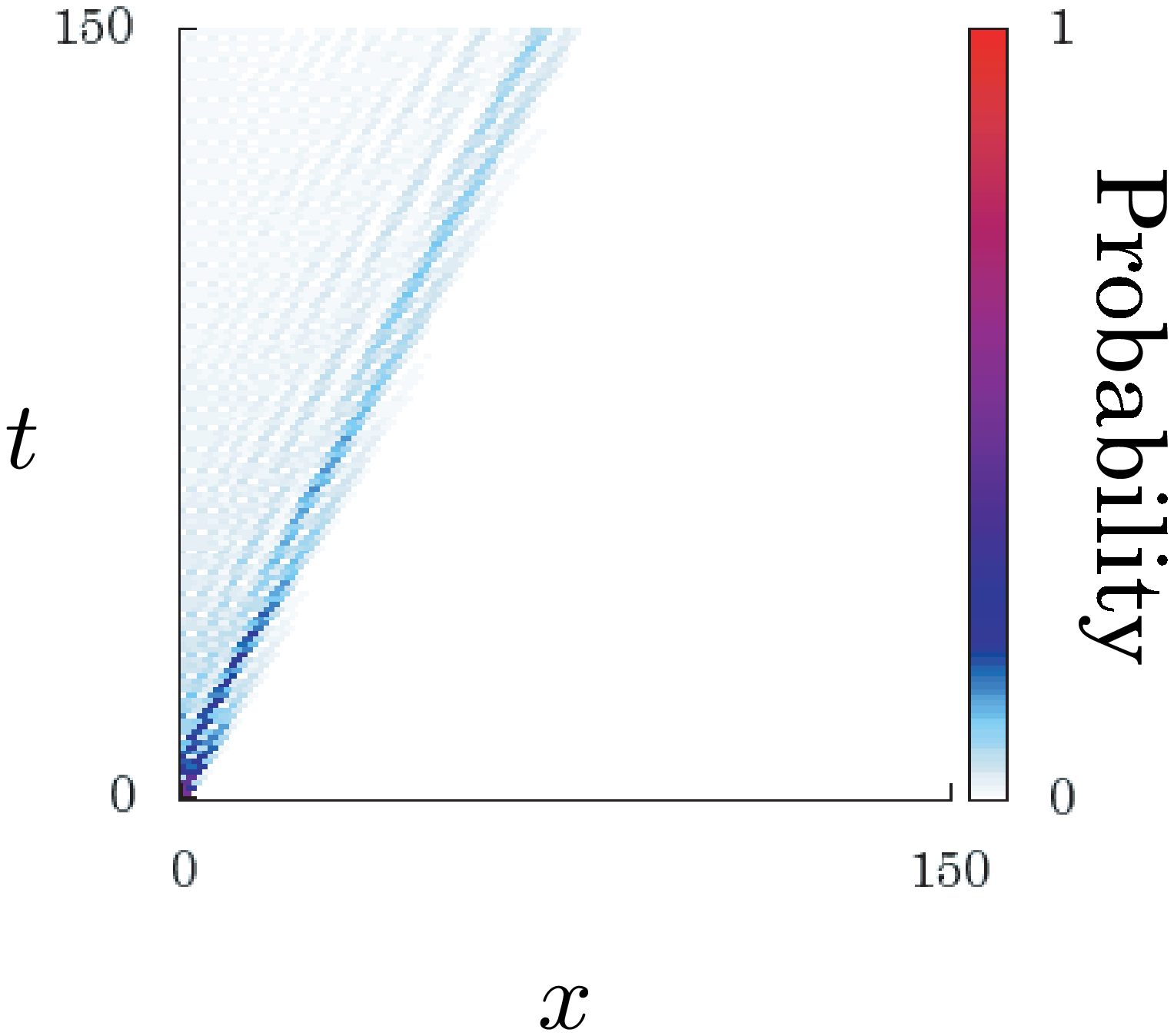}\\[2mm]
  (c) $\mathbb{P}(X_t^{HL}=x)$
  \end{center}
 \end{minipage}
\vspace{5mm}
\caption{$\theta_1=\pi/3,\,\theta_2=\pi/4$ : The distributions spread out in proportion to time $t$ as the quantum walk is getting updated. ($\alpha=1/\sqrt{2},\,\beta=i/\sqrt{2}$)}
\label{fig:160718_04}
\end{center}
\end{figure}

\begin{figure}[h]
\begin{center}
 \begin{minipage}{35mm}
  \begin{center}
   \includegraphics[scale=0.2]{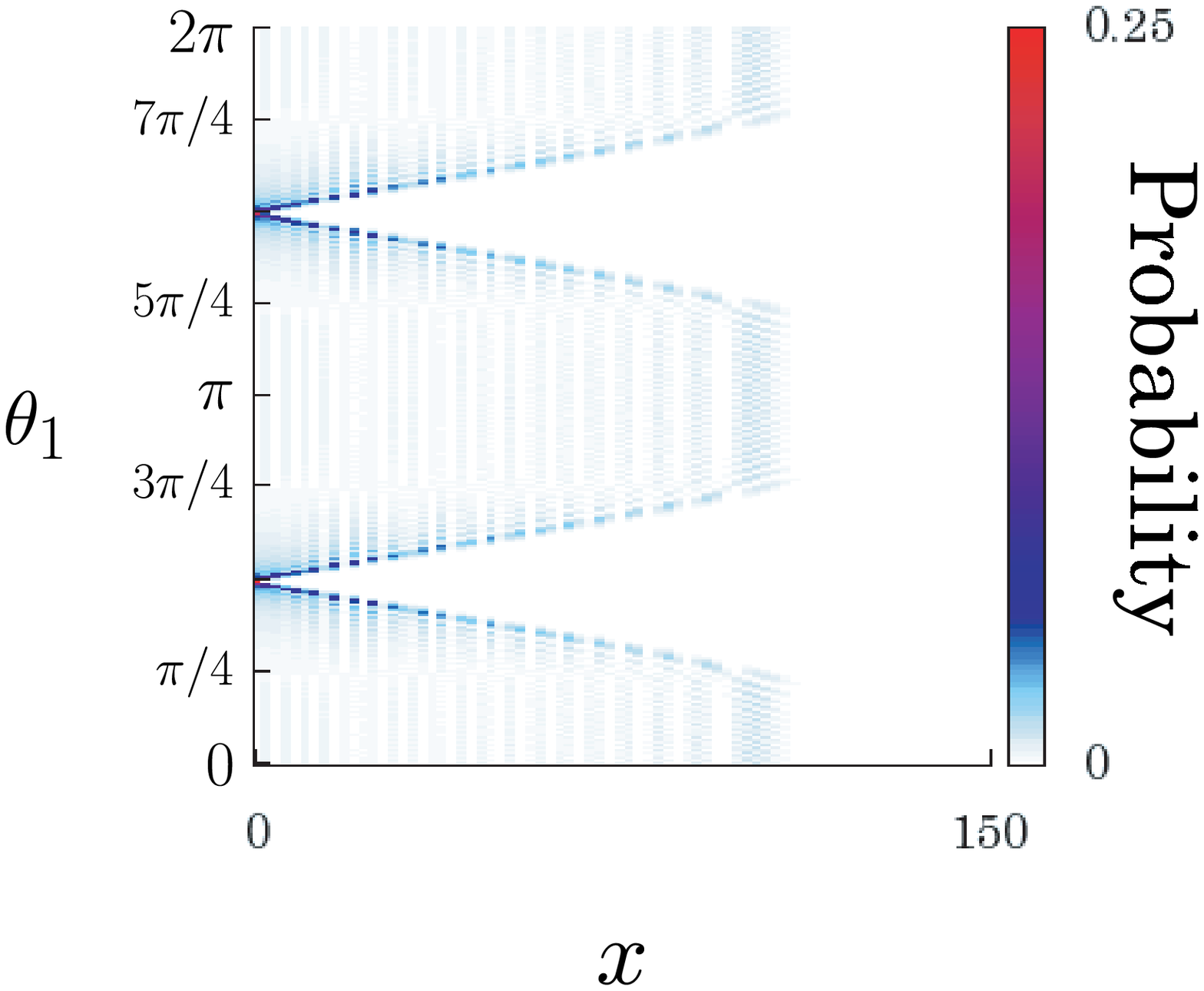}\\[2mm]
  (a) $\mathbb{P}(X_{150}^{HL}=x;0)$
  \end{center}
 \end{minipage}
 \begin{minipage}{35mm}
  \begin{center}
   \includegraphics[scale=0.2]{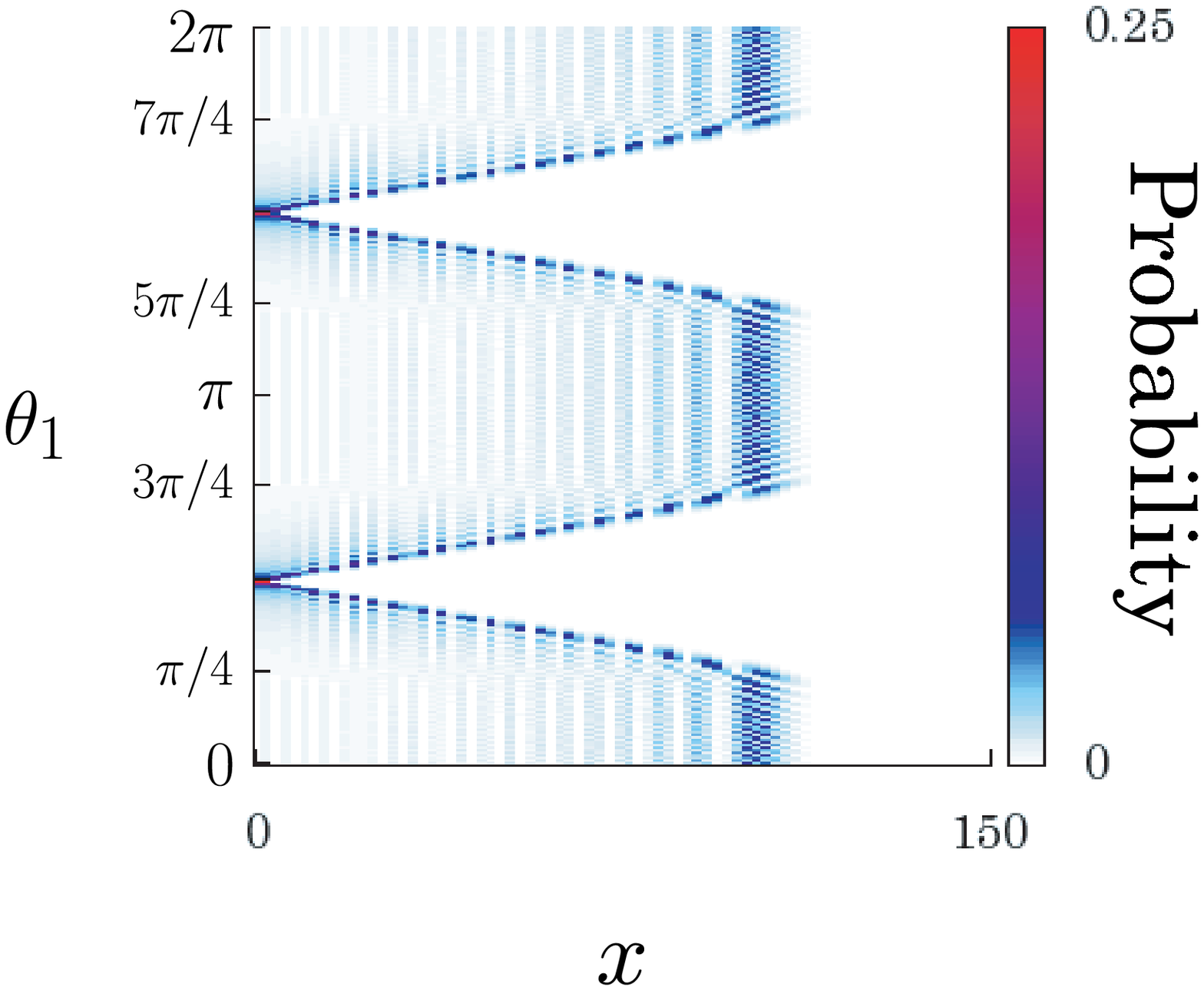}\\[2mm]
  (b) $\mathbb{P}(X_{150}^{HL}=x;1)$
  \end{center}
 \end{minipage}
 \begin{minipage}{35mm}
  \begin{center}
   \includegraphics[scale=0.2]{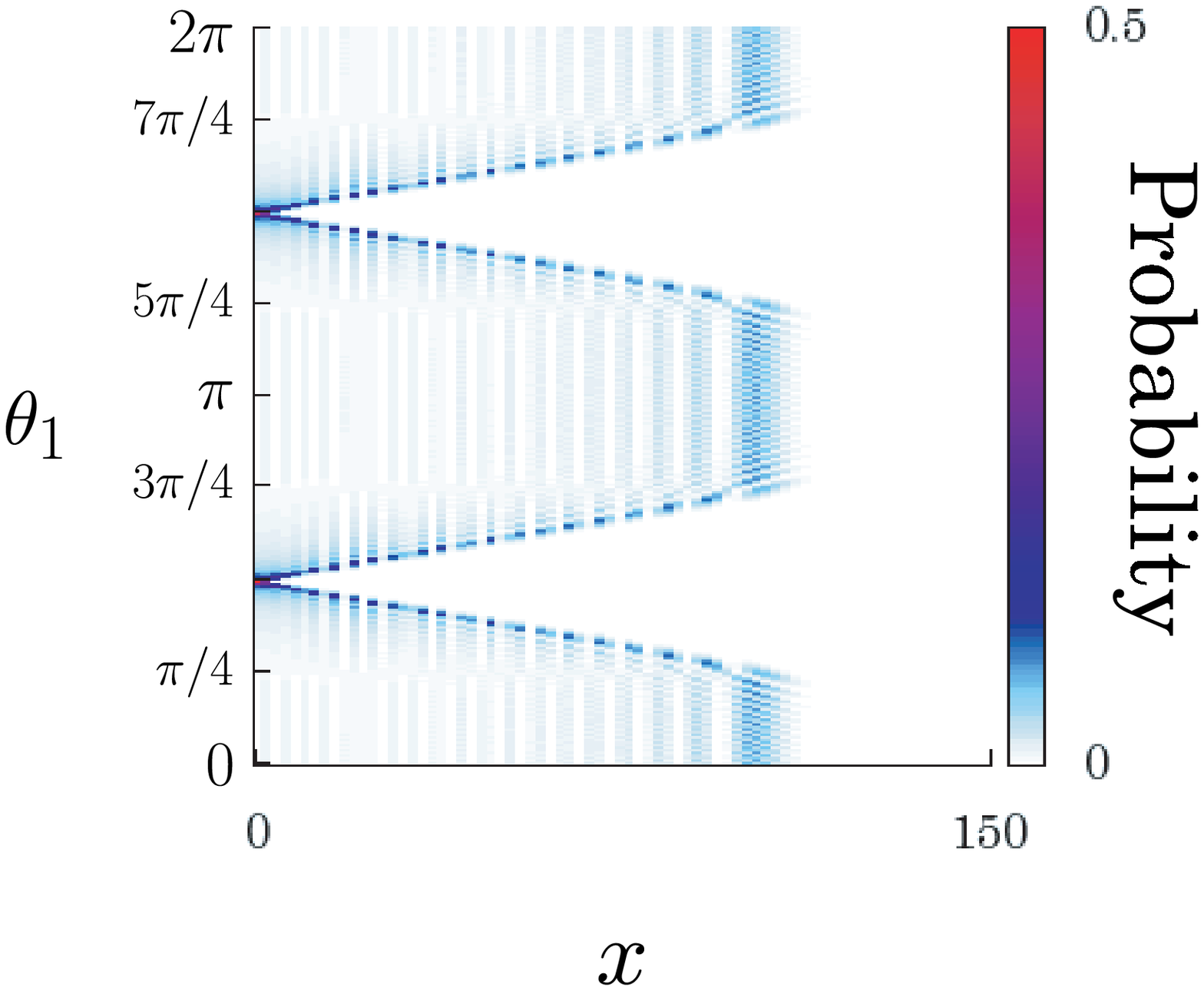}\\[2mm]
  (c) $\mathbb{P}(X_{150}^{HL}=x)$
  \end{center}
 \end{minipage}
\vspace{5mm}
\caption{$\theta_2=\pi/4$ : These pictures show how far the distributions spread out at time $150$, dependently on the parameter $\theta_1$. ($\alpha=1/\sqrt{2},\,\beta=i/\sqrt{2}$)}
\label{fig:160718_07}
\end{center}
\end{figure}

\begin{figure}[h]
\begin{center}
 \begin{minipage}{35mm}
  \begin{center}
   \includegraphics[scale=0.2]{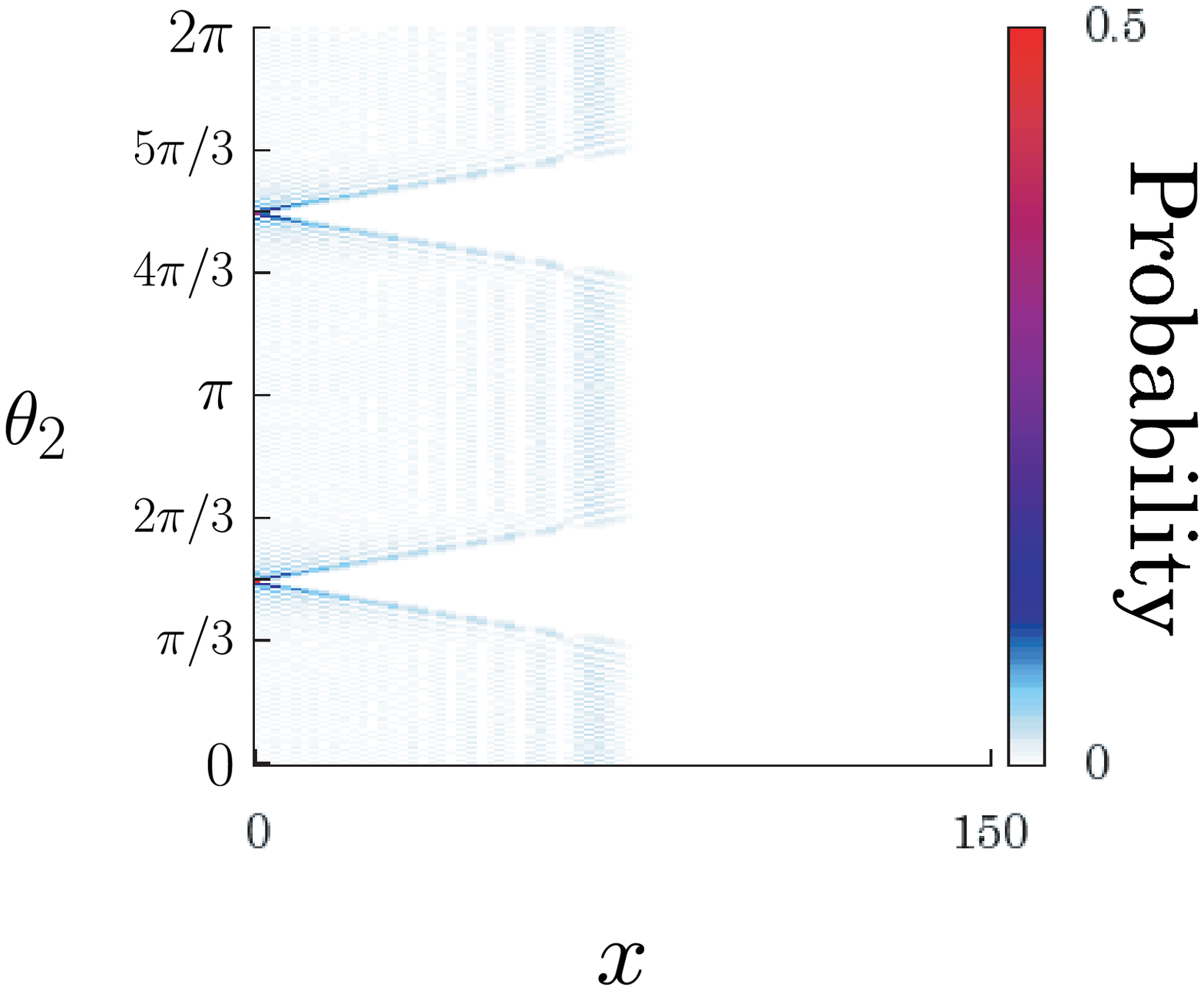}\\[2mm]
  (a) $\mathbb{P}(X_{150}^{HL}=x;0)$
  \end{center}
 \end{minipage}
 \begin{minipage}{35mm}
  \begin{center}
   \includegraphics[scale=0.2]{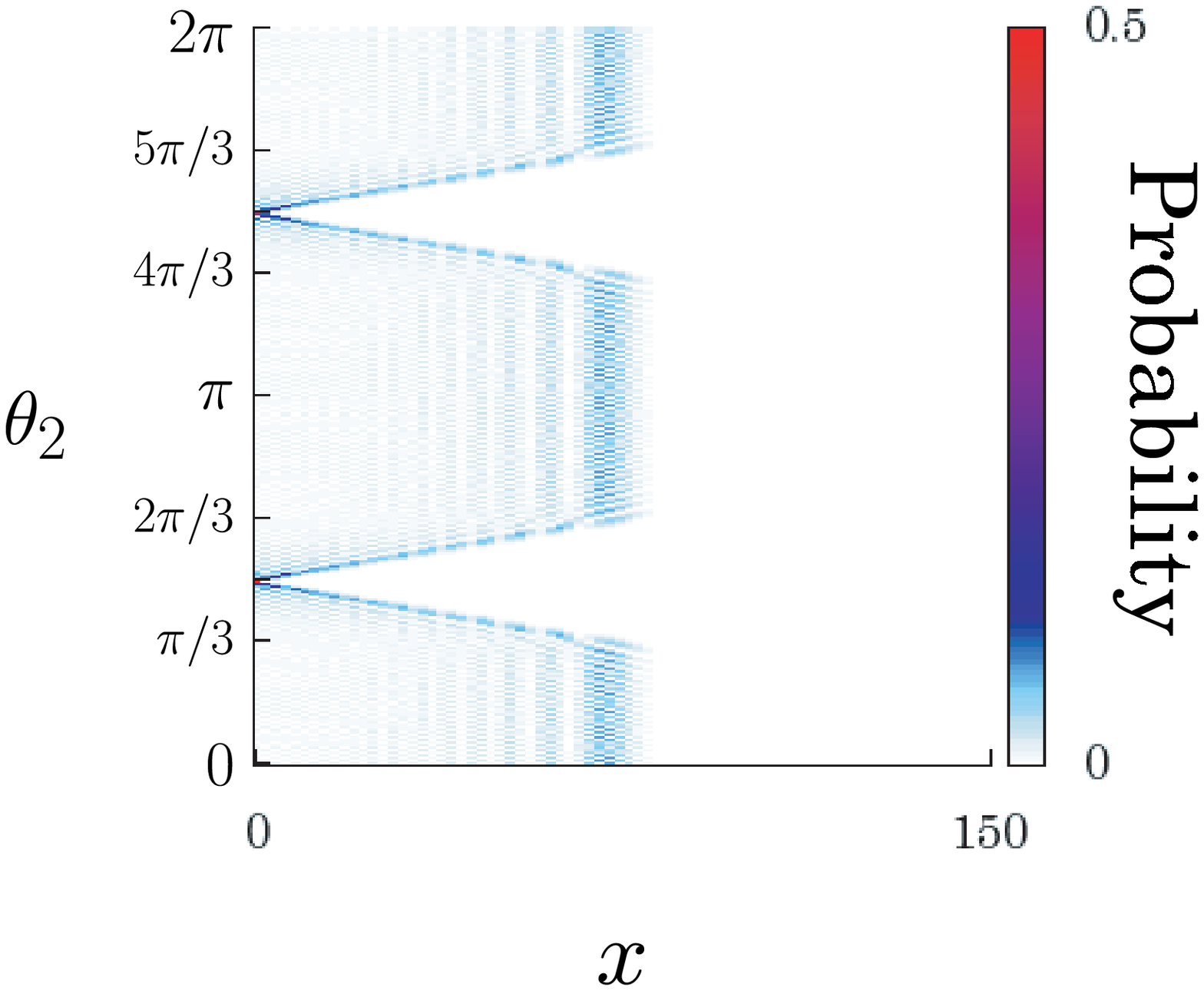}\\[2mm]
  (b) $\mathbb{P}(X_{150}^{HL}=x;1)$
  \end{center}
 \end{minipage}
 \begin{minipage}{35mm}
  \begin{center}
   \includegraphics[scale=0.2]{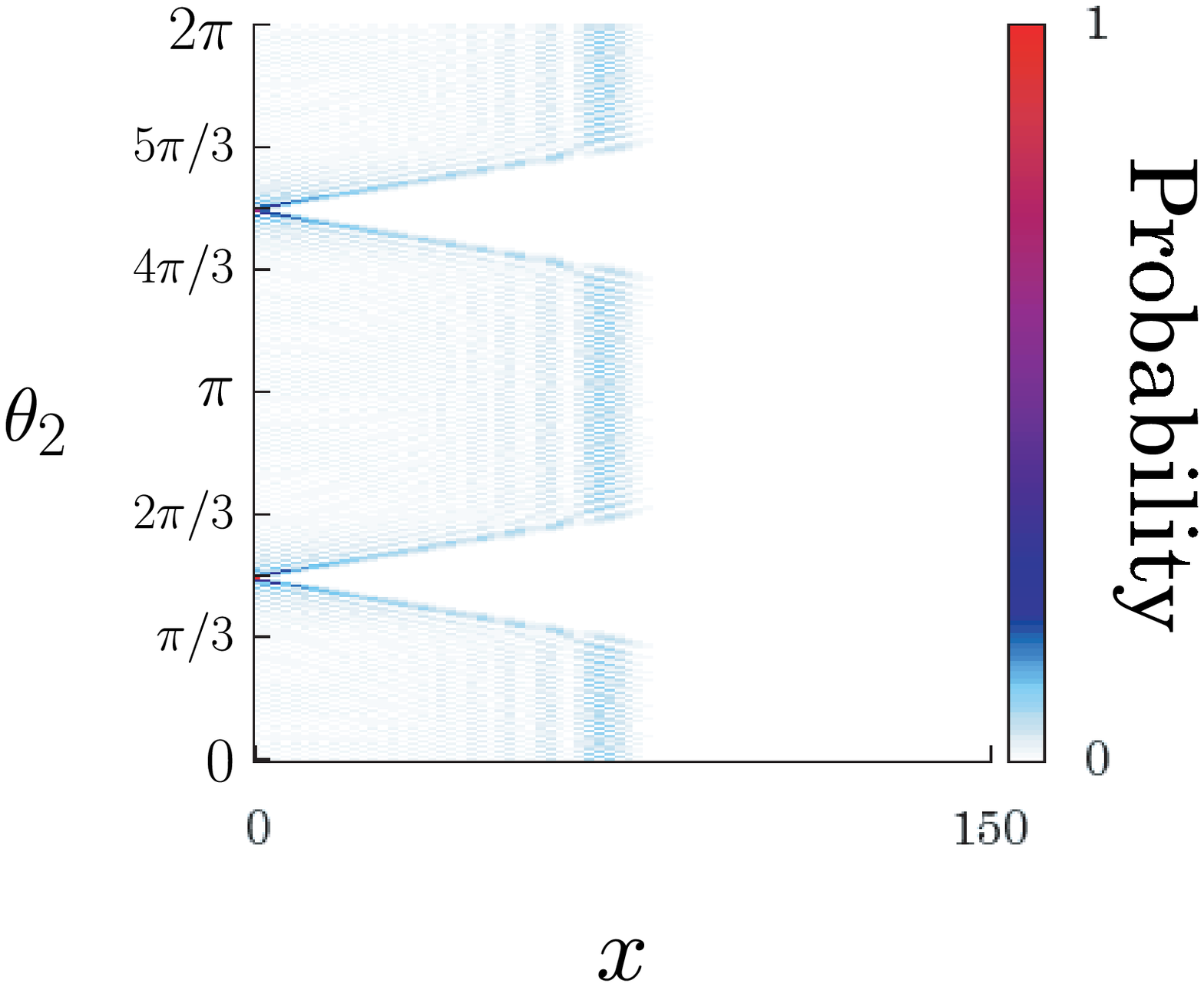}\\[2mm]
  (c) $\mathbb{P}(X_{150}^{HL}=x)$
  \end{center}
 \end{minipage}
\vspace{5mm}
\caption{$\theta_1=\pi/3$ : These pictures show how far the distributions spread out at time $150$, dependently on the parameter $\theta_2$. ($\alpha=1/\sqrt{2},\,\beta=i/\sqrt{2}$)}
\label{fig:160718_10}
\end{center}
\end{figure}

It was proved in Machida~\cite{Machida2016} that a time-independent quantum walk on the half line, that happened when the parameters $\theta_1$ and $\theta_2$ took the same value, was reproduced by a time-independent quantum walk on the line.
In this section we will see the system of time-dependent quantum walk on the half line can be also copied by a time-dependent quantum walk on the line.
The position Hilbert space of the quantum walk on the line, represented by $\mathcal{H}_p^L$, is spanned by the orthonormal basis $\left\{\ket{x} : x\in\mathbb{Z}\right\}$.
On the other hand, the inner state space is described by the same thing as the one for the quantum walk on the half line, the Hilbert space $\mathcal{H}_c$.
Then, the system of quantum walk on the line at time $t\in\left\{0,1,2,\ldots\right\}$, represented by $\ket{\Phi_t}\in\mathcal{H}_p^{L}\otimes\mathcal{H}_c$, gets a 2-periodic unitary evolution,
\begin{equation}
 \ket{\Phi_{t+1}}=\left\{\begin{array}{ll}
	       \tilde{S}^L\tilde{C_1}^L\ket{\Phi_t} & (t=0,2,4,\ldots),\\
			  \tilde{S}^L\tilde{C_2}^L\ket{\Phi_t} & (t=1,3,5,\ldots),\\
		     \end{array}\right.
\end{equation}
where
\begin{align}
 \tilde{C_1}^L=&\sum_{x\in\mathbb{Z}}\ket{x}\bra{x}\otimes C_1,\\
 \tilde{C_2}^L=&\sum_{x\in\mathbb{Z}}\ket{x}\bra{x}\otimes C_2,\\
 \tilde{S}^L=&\sum_{x\in\mathbb{Z}}\ket{x-1}\bra{x}\otimes\ket{0}\bra{0}+\ket{x+1}\bra{x}\otimes\ket{1}\bra{1},
\end{align}
in which the unitary operations $C_1$ and $C_2$ are the ones given in Eqs.~\eqref{eq:coin-flip_operator_1} and \eqref{eq:coin-flip_operator_2}.
Setting up a possibly delocalized initial state, also simply called a delocalized initial state, on the quantum walk on the line,
\begin{equation}
 \ket{\Phi_0}=\ket{-1}\otimes\left(\alpha_{-1}\ket{0}+\beta_{-1}\ket{1}\right)+\ket{0}\otimes\left(\alpha_0\ket{0}+\beta_0\ket{1}\right),\label{eq:L_initial_state}
\end{equation}
with $|\alpha_{-1}|^2+|\beta_{-1}|^2+|\alpha_0|^2+|\beta_0|^2=1\, (\alpha_{-1},\beta_{-1},\alpha_0,\beta_0\in\mathbb{C})$, we will realize a relation between the quantum walk on the half line and the quantum walk on the line, as shown in Lemma~\ref{lem:160612_01} later.
    
Similarly to Eq.~\eqref{eq:HL_probability_inner_state}, the walker in inner state $j\in\left\{0,1\right\}$ is observed at position $x\in\mathbb{Z}$ on the line with probability
\begin{equation}
 \mathbb{P}(Y_t^L=x;j)=\bra{\Phi_t}\Bigl(\ket{x}\bra{x}\otimes\ket{j}\bra{j}\Bigr)\ket{\Phi_t},
\end{equation}
and the finding probability regardless of the inner states should be defined as the sum over the two inner states, 
\begin{equation}
 \mathbb{P}(Y_t^L=x)=\sum_{j=0}^1\mathbb{P}(Y_t^L=x;j)=\bra{\Phi_t}\biggl(\ket{x}\bra{x}\otimes\sum_{j=0}^1\ket{j}\bra{j}\biggr)\ket{\Phi_t},
\end{equation}
where $Y_t^L$ represents the position of the quantum walker on the line launching with the delocalized initial state given in Eq.~\eqref{eq:L_initial_state}.
As shown in Figs.~\ref{fig:160718_13} and \ref{fig:160718_16}, each finding probability at a moment contains two sharp peaks and linearly spreads out as time $t$ goes up.
Figures~\ref{fig:160718_19} and \ref{fig:160718_22} tell us that the dependency on the parameters $\theta_1$ and $\theta_2$ has the same feature that we already viewed in Figs.~\ref{fig:160718_07} and \ref{fig:160718_10}.

\begin{figure}[h]
\begin{center}
 \begin{minipage}{35mm}
  \begin{center}
   \includegraphics[scale=0.3]{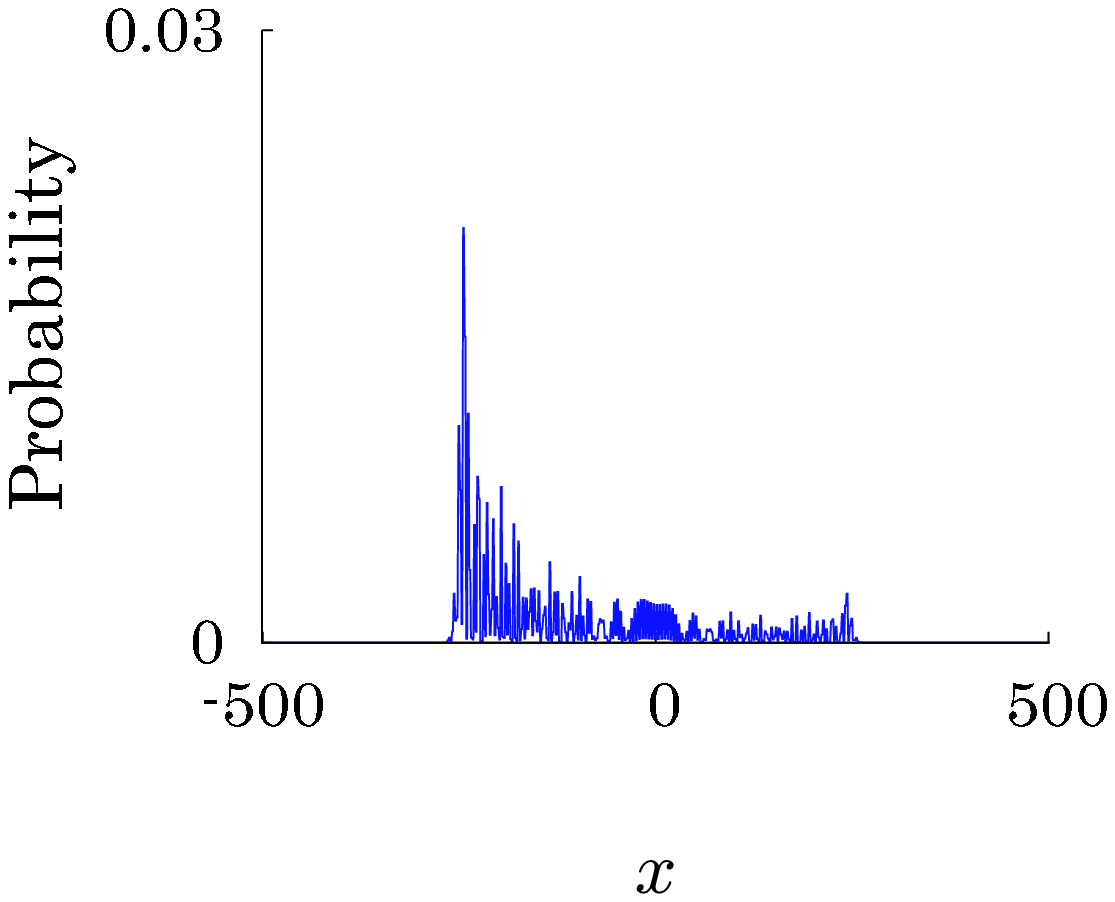}\\[2mm]
  (a) $\mathbb{P}(Y_{500}^L=x;0)$
  \end{center}
 \end{minipage}
 \begin{minipage}{35mm}
  \begin{center}
   \includegraphics[scale=0.3]{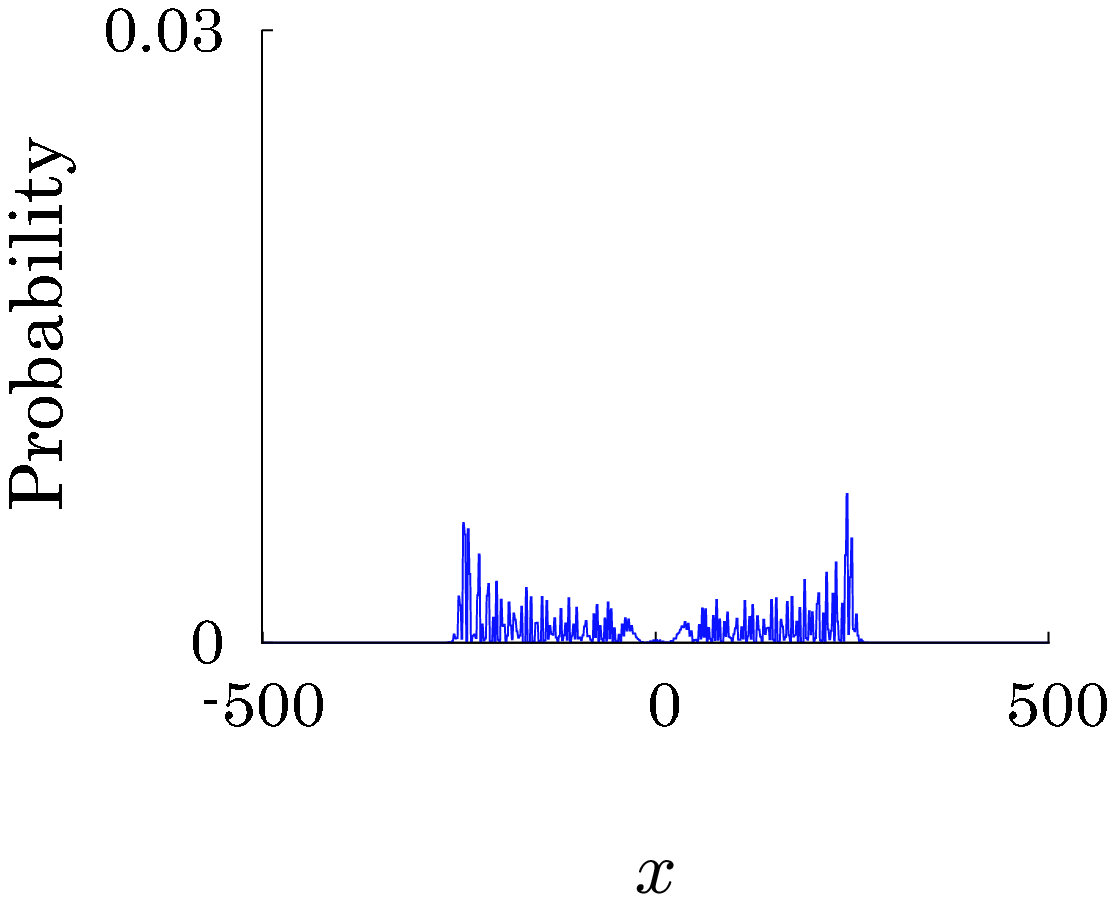}\\[2mm]
  (b) $\mathbb{P}(Y_{500}^L=x;1)$
  \end{center}
 \end{minipage}
 \begin{minipage}{35mm}
  \begin{center}
   \includegraphics[scale=0.3]{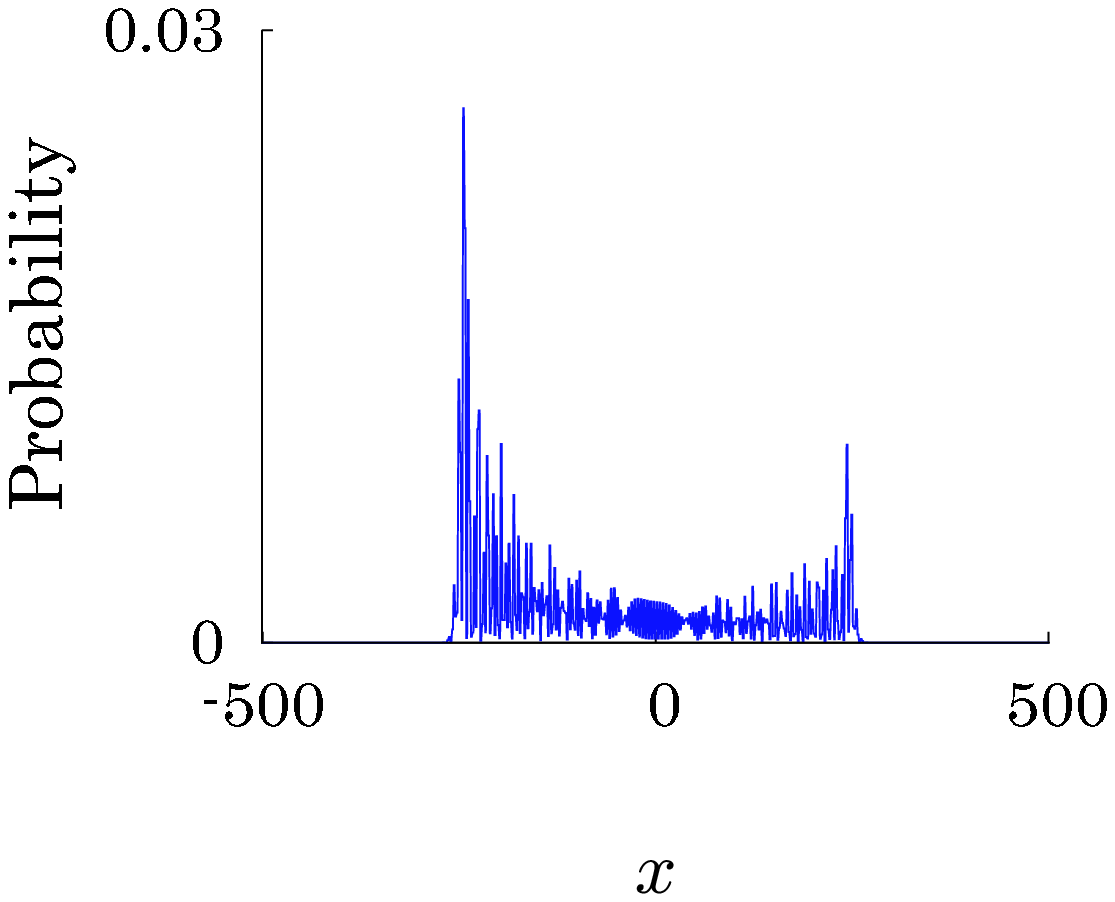}\\[2mm]
  (c) $\mathbb{P}(Y_{500}^L=x)$
  \end{center}
 \end{minipage}
\vspace{5mm}
\caption{$\theta_1=\pi/3,\,\theta_2=\pi/4$ : Finding probabilities of the quantum walker on the line at time $500$ when the walker launches at time $0$ with the delocalized initial state in Eq.~\eqref{eq:L_initial_state}. ($\alpha_{-1}=1/\sqrt{2},\, \beta_{-1}=0,\, \alpha_0=1/\sqrt{2},\,\beta_0=0$)}
\label{fig:160718_13}
\end{center}
\end{figure}

\begin{figure}[h]
\begin{center}
 \begin{minipage}{35mm}
  \begin{center}
   \includegraphics[scale=0.2]{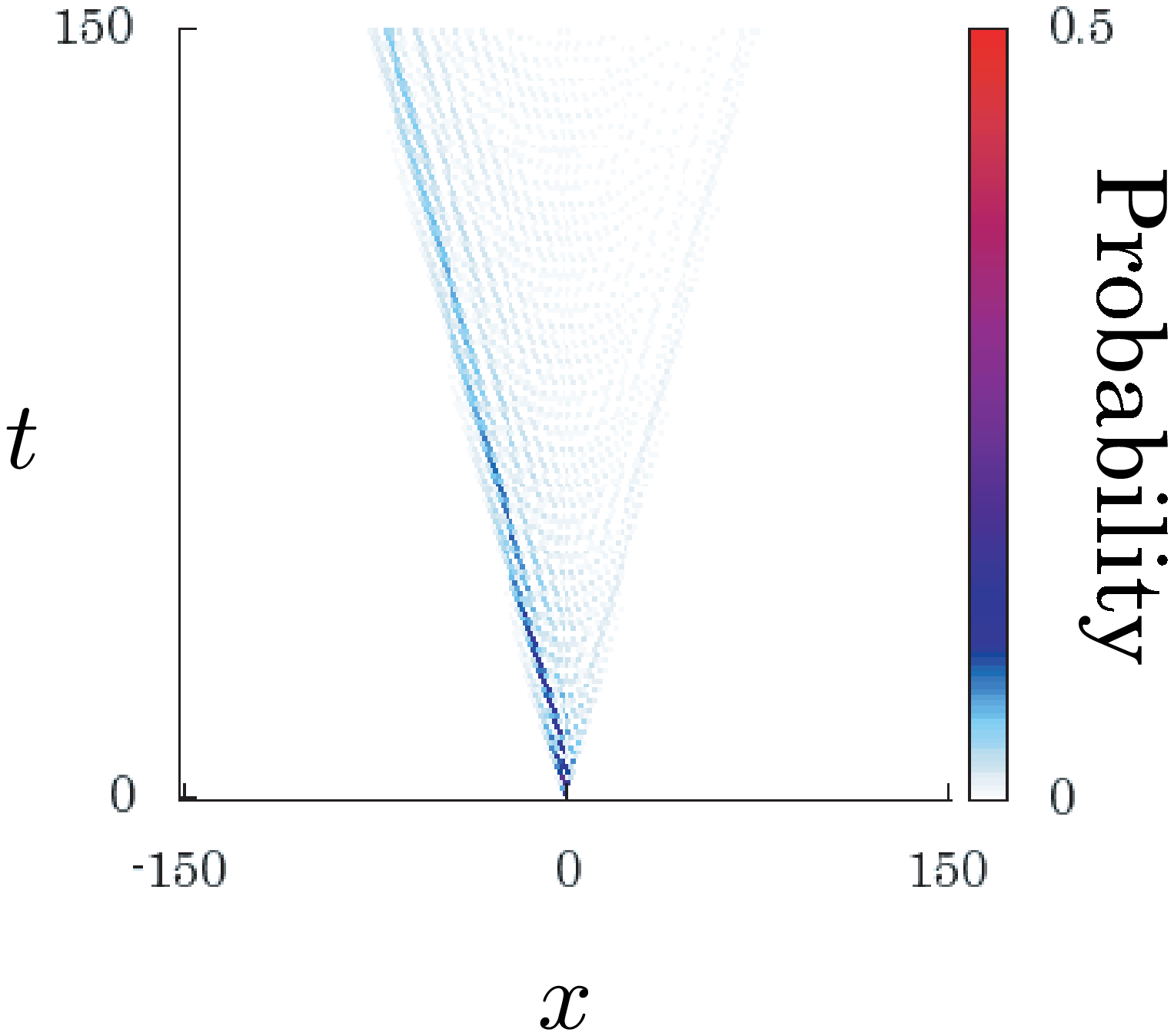}\\[2mm]
  (a) $\mathbb{P}(Y_t^L=x;0)$
  \end{center}
 \end{minipage}
 \begin{minipage}{35mm}
  \begin{center}
   \includegraphics[scale=0.2]{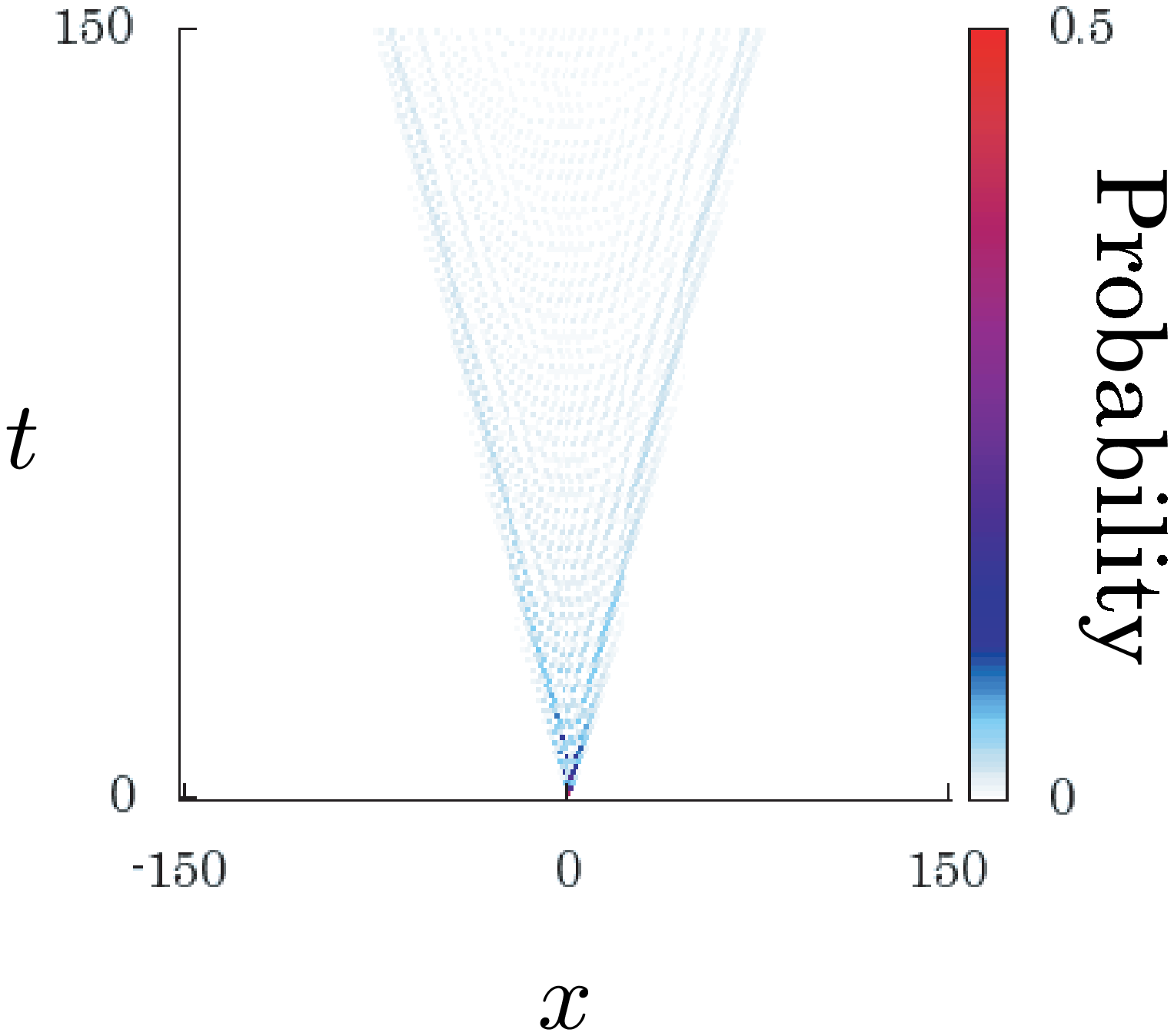}\\[2mm]
  (b) $\mathbb{P}(Y_t^L=x;1)$
  \end{center}
 \end{minipage}
 \begin{minipage}{35mm}
  \begin{center}
   \includegraphics[scale=0.2]{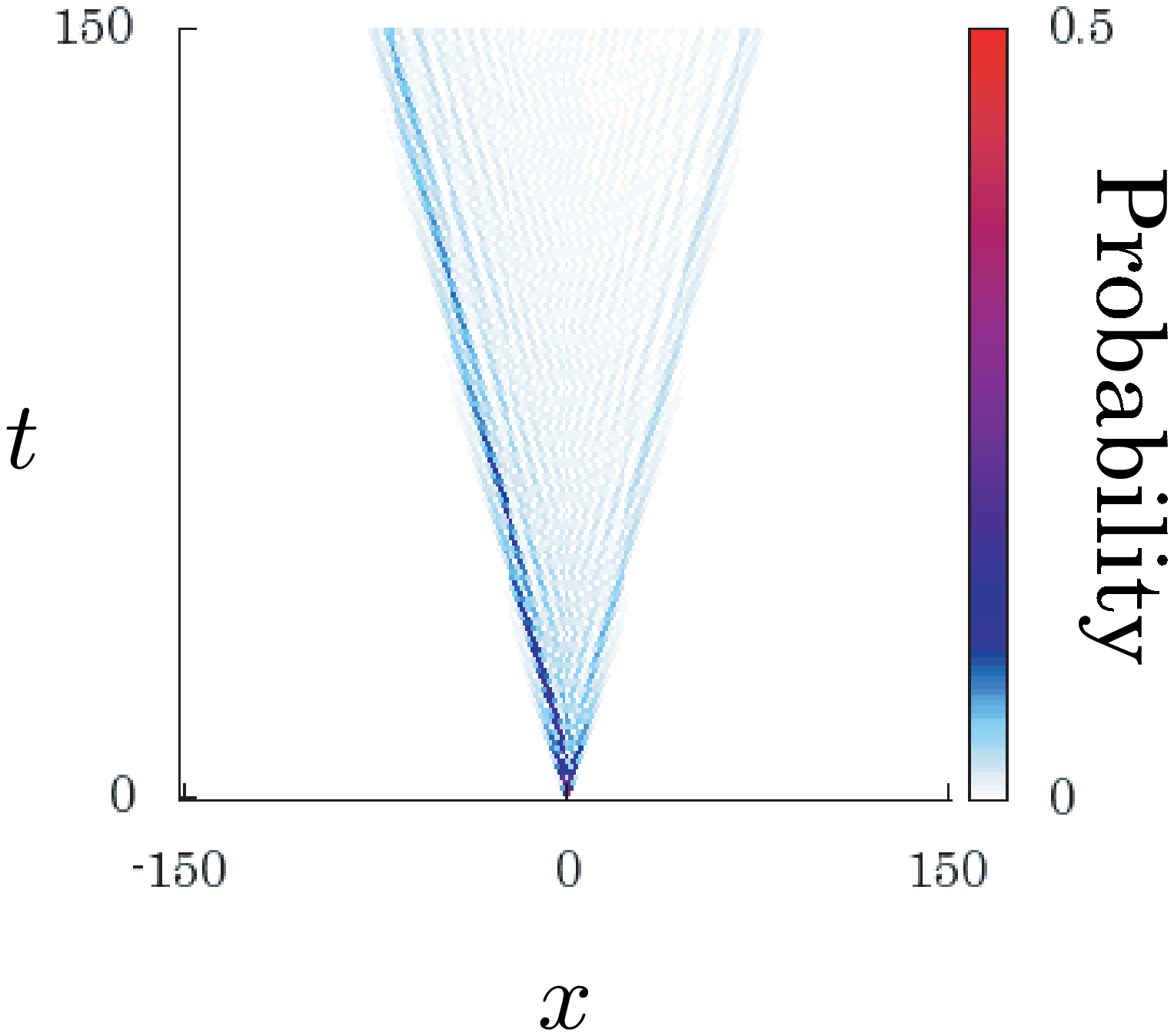}\\[2mm]
  (c) $\mathbb{P}(Y_t^L=x)$
  \end{center}
 \end{minipage}
\vspace{5mm}
\caption{$\theta_1=\pi/3,\,\theta_2=\pi/4$ : Each distribution is spreading out as time $t$ goes up, and its behavior is ballistic. ($\alpha_{-1}=1/\sqrt{2},\, \beta_{-1}=0,\, \alpha_0=1/\sqrt{2},\,\beta_0=0$)}
\label{fig:160718_16}
\end{center}
\end{figure}

\begin{figure}[h]
\begin{center}
 \begin{minipage}{35mm}
  \begin{center}
   \includegraphics[scale=0.2]{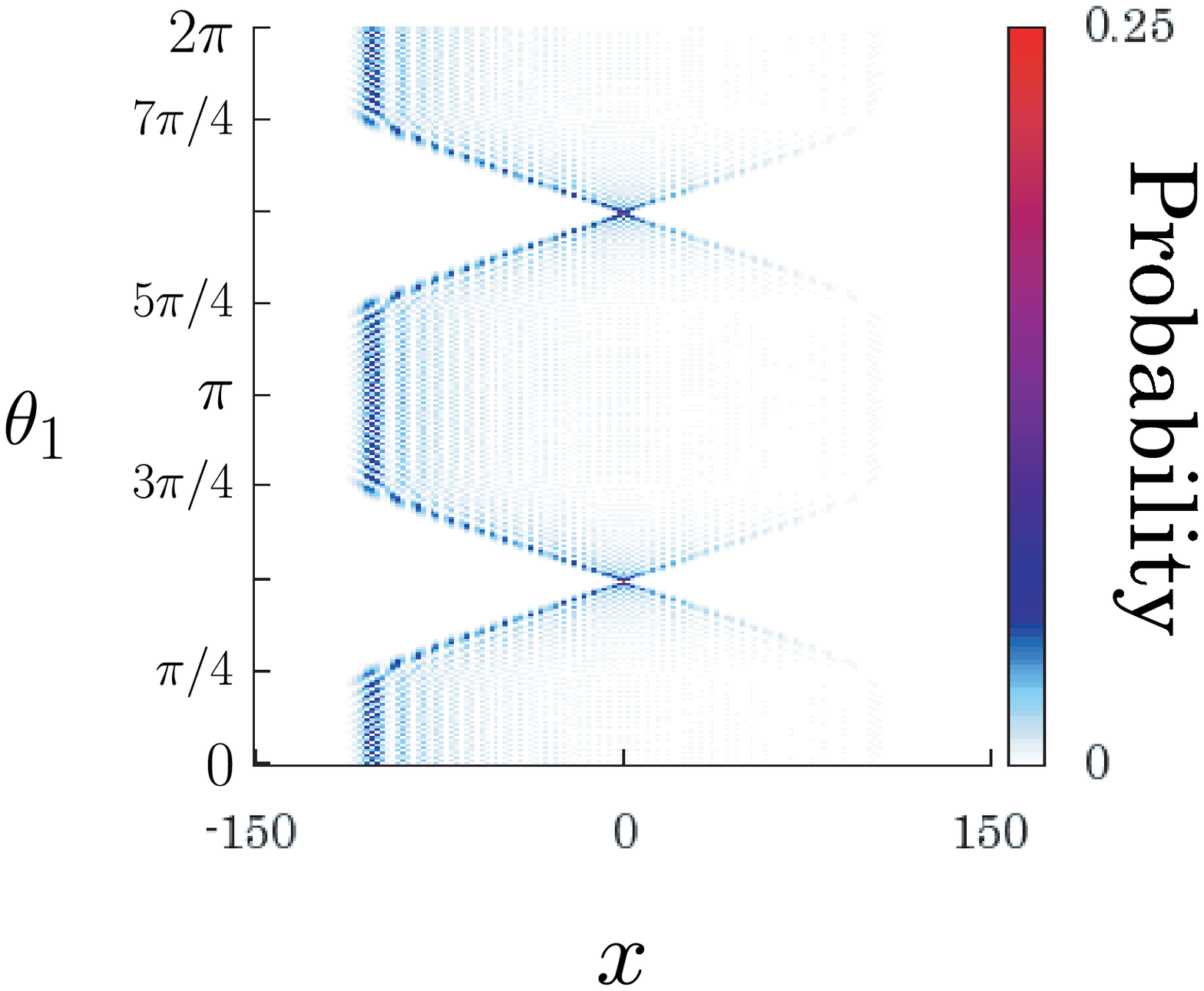}\\[2mm]
  (a) $\mathbb{P}(Y_{150}^L=x;0)$
  \end{center}
 \end{minipage}
 \begin{minipage}{35mm}
  \begin{center}
   \includegraphics[scale=0.2]{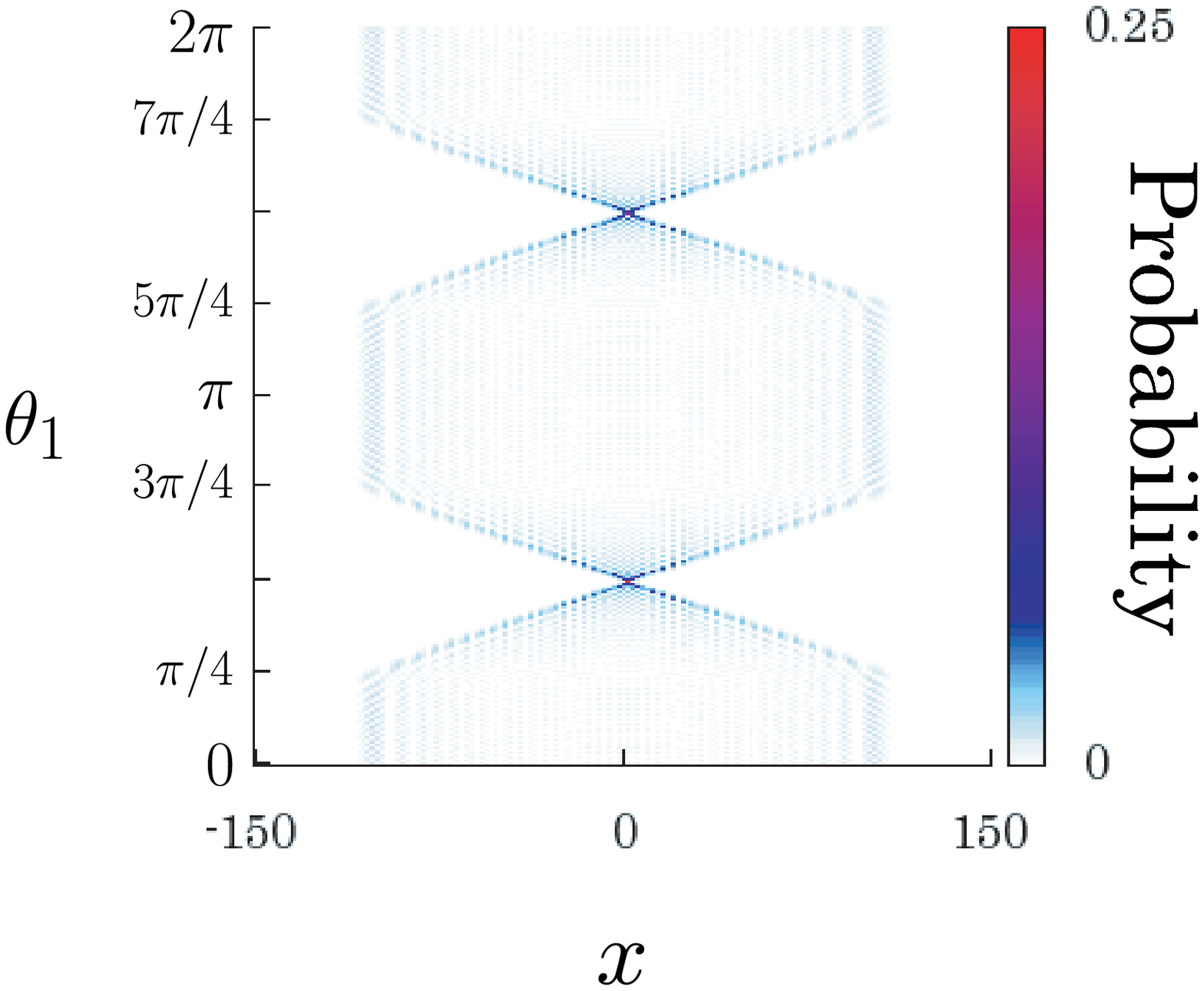}\\[2mm]
  (b) $\mathbb{P}(Y_{150}^L=x;1)$
  \end{center}
 \end{minipage}
 \begin{minipage}{35mm}
  \begin{center}
   \includegraphics[scale=0.2]{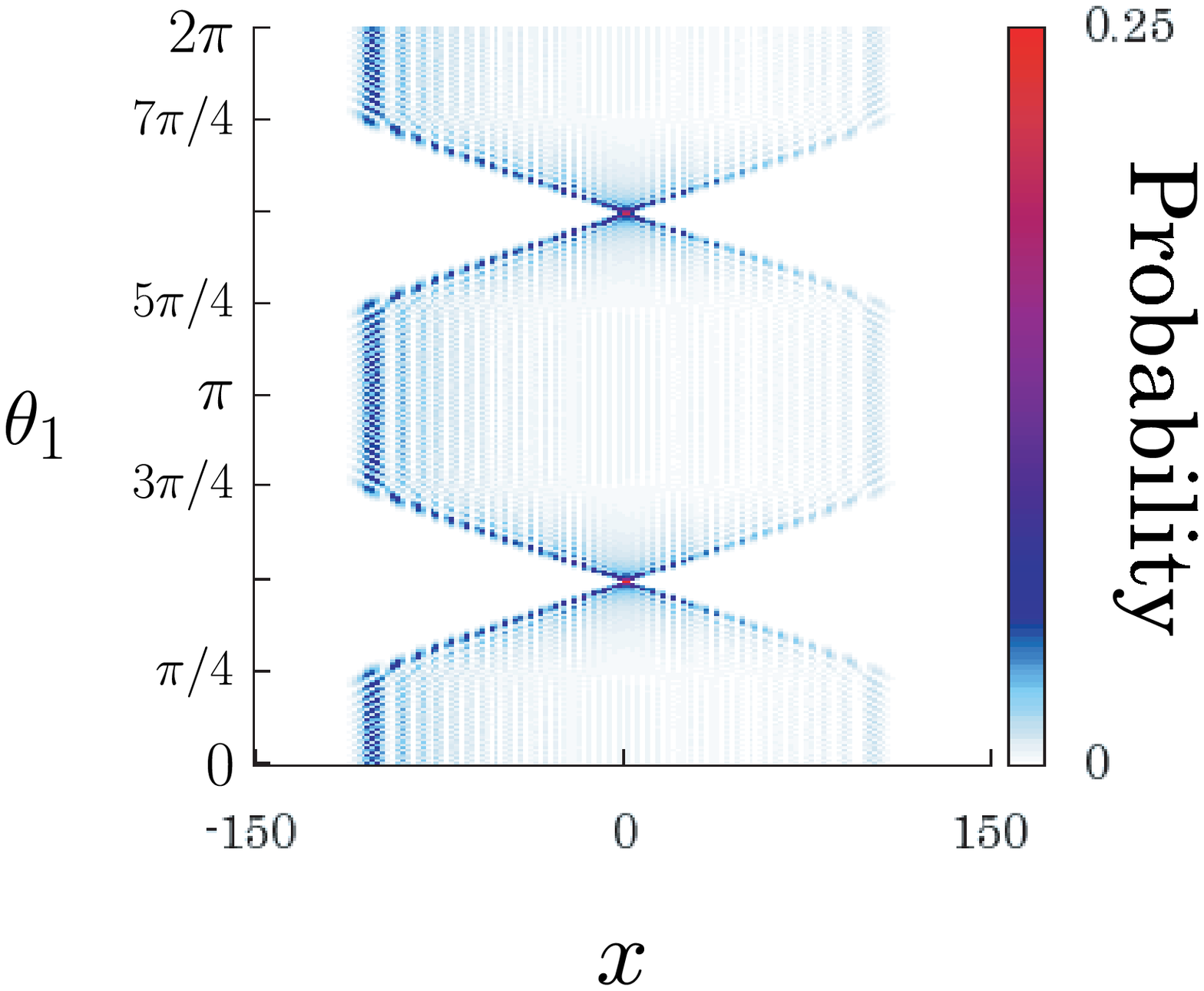}\\[2mm]
  (c) $\mathbb{P}(Y_{150}^L=x)$
  \end{center}
 \end{minipage}
\vspace{5mm}
\caption{$\theta_2=\pi/4$ : These pictures show how the distributions depend on the parameter $\theta_1$. ($\alpha_{-1}=1/\sqrt{2},\, \beta_{-1}=0,\, \alpha_0=1/\sqrt{2},\,\beta_0=0$)}
\label{fig:160718_19}
\end{center}
\end{figure}

\begin{figure}[h]
\begin{center}
 \begin{minipage}{35mm}
  \begin{center}
   \includegraphics[scale=0.2]{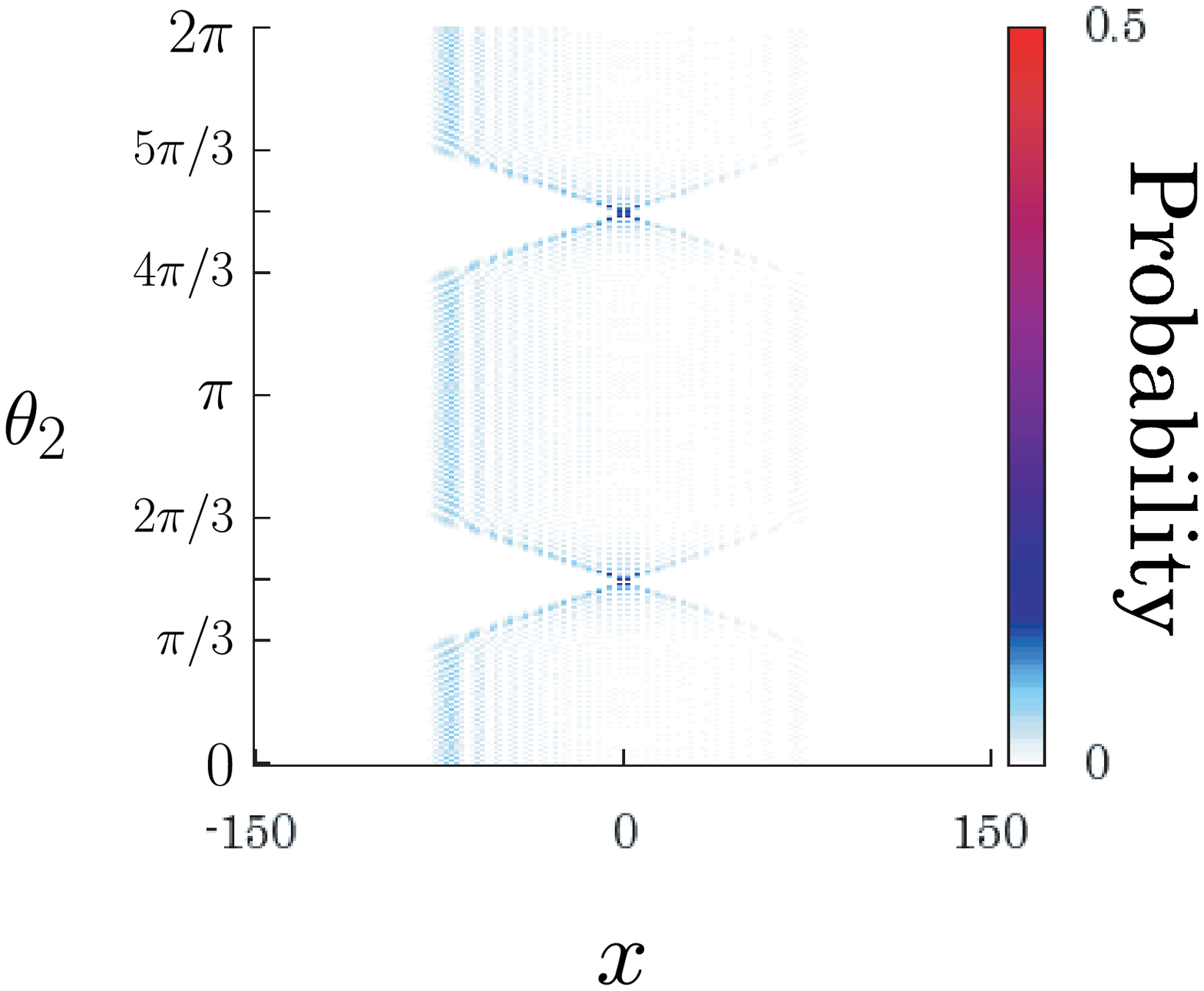}\\[2mm]
  (a) $\mathbb{P}(Y_{150}^L=x;0)$
  \end{center}
 \end{minipage}
 \begin{minipage}{35mm}
  \begin{center}
   \includegraphics[scale=0.2]{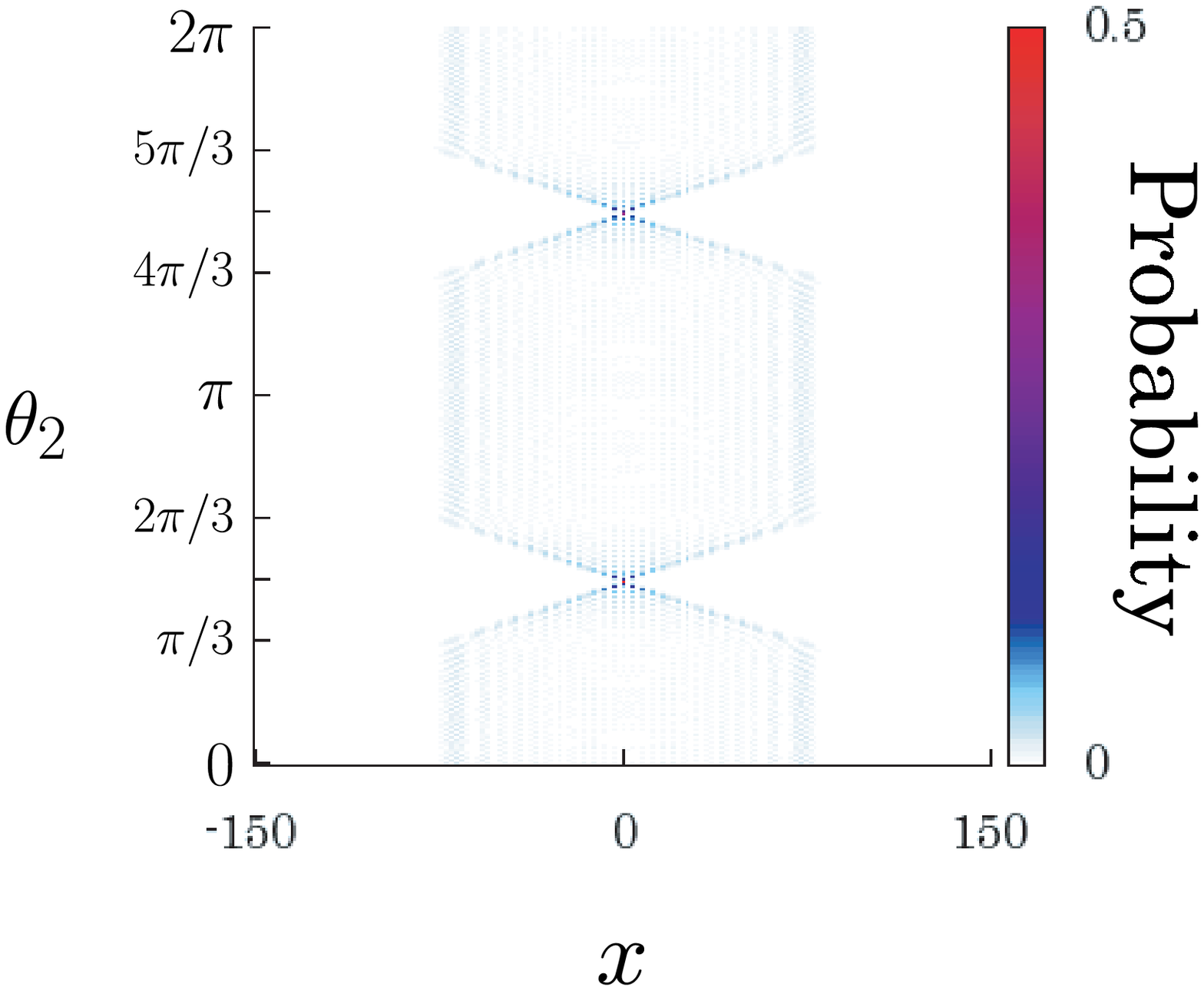}\\[2mm]
  (b) $\mathbb{P}(Y_{150}^L=x;1)$
  \end{center}
 \end{minipage}
 \begin{minipage}{35mm}
  \begin{center}
   \includegraphics[scale=0.2]{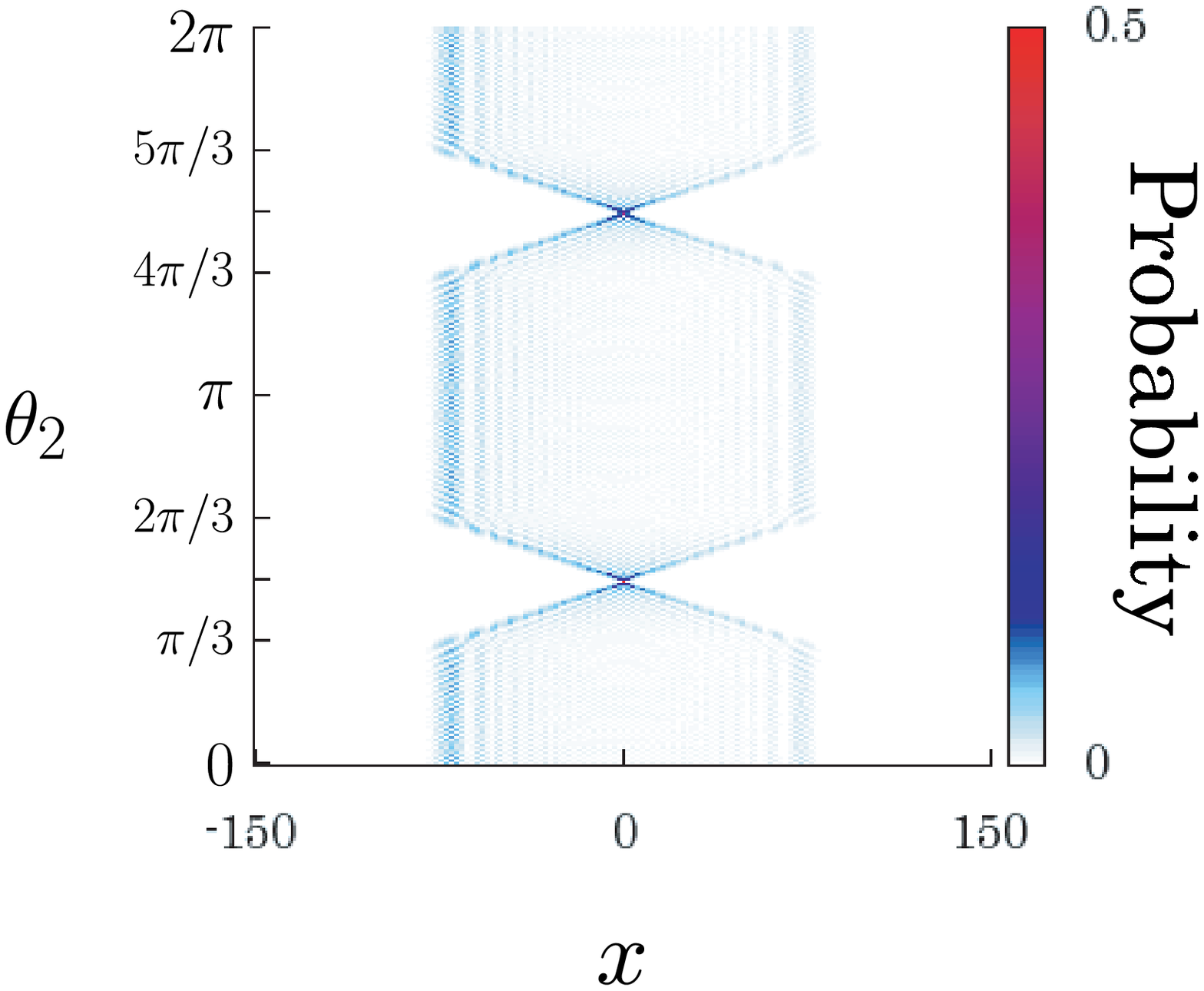}\\[2mm]
  (c) $\mathbb{P}(Y_{150}^L=x)$
  \end{center}
 \end{minipage}
\vspace{5mm}
\caption{$\theta_1=\pi/3$ : These pictures show how the distributions depend on the parameter $\theta_2$. ($\alpha_{-1}=1/\sqrt{2},\, \beta_{-1}=0,\, \alpha_0=1/\sqrt{2},\,\beta_0=0$)}
\label{fig:160718_22}
\end{center}
\end{figure}

Let $\xi\in\left\{1,2\right\}$ be the subscription such that $|c_\xi|=\min\left\{|c_1|, |c_2|\right\}$.
Then we get limit distributions of the quantum walk on the line. 
For $\theta_1,\theta_2\neq 0,\pi/2,\pi,3\pi/2$, the distributions of the random variable $Y_t^L/t$ converge to integral representations as $t\to\infty$, 
\begin{align}
 \lim_{t\to\infty}\mathbb{P}\left(\frac{Y_t^L}{t}\leq x;0\right)&=\int_{-\infty}^x \frac{|s_\xi|}{2\pi(1+y)\sqrt{c_\xi^2-y^2}}\,\eta(y)I_{(-|c_\xi|,|c_\xi|)}(y)\,dy,\label{eq:limit_QWL_0}\\
 \lim_{t\to\infty}\mathbb{P}\left(\frac{Y_t^L}{t}\leq x;1\right)&=\int_{-\infty}^x \frac{|s_\xi|}{2\pi(1-y)\sqrt{c_\xi^2-y^2}}\,\eta(y)I_{(-|c_\xi|,|c_\xi|)}(y)\,dy,\label{eq:limit_QWL_1}\\
 \lim_{t\to\infty}\mathbb{P}\left(\frac{Y_t^L}{t}\leq x\right)&=\int_{-\infty}^x \frac{|s_\xi|}{\pi(1-y^2)\sqrt{c_\xi^2-y^2}}\,\eta(y)I_{(-|c_\xi|,|c_\xi|)}(y)\,dy,\label{eq:limit_QWL_prob}
\end{align}
where
\begin{equation}
 \eta(x)=1-\left(|\alpha_{-1}|^2+|\alpha_0|^2-|\beta_{-1}|^2-|\beta_0|^2+\frac{2s_1\Re(\alpha_{-1}\overline{\beta_{-1}}+\alpha_0\overline{\beta_0})}{c_1}\right)x,\label{eq:160810_eta}
\end{equation}
in which $\Re(z)$ means the real part of a complex number $z$.
These limit distributions can be computed by Fourier analysis which was applied for a time-dependent quantum walk with a localized initial state~\cite{MachidaKonno2010}.
However, since the computation for Eqs.~\eqref{eq:limit_QWL_0}, \eqref{eq:limit_QWL_1}, and \eqref{eq:limit_QWL_prob} is almost same as that for the limit distribution shown in the past study~\cite{MachidaKonno2010}, the proof is omitted in this paper. 

Here, we will understand that the quantum walk on the line with a delocalized initial state can hold all the information, leveled at amplitude, about the quantum walk on the line with a localized initial state.
With the representations  
\begin{align}
 \ket{\Psi_t}=&\sum_{x=0}^\infty \ket{x}\otimes\Bigl(\alpha_t(x)\ket{0}+\beta_t(x)\ket{1}\Bigr),\\
 \ket{\Phi_t}=&\sum_{x\in\mathbb{Z}} \ket{x}\otimes\Bigl(\gamma_t(x)\ket{0}+\delta_t(x)\ket{1}\Bigr),
\end{align}
where $\alpha_t(x), \beta_t(x), \gamma_t(x)$, and $\delta_t(x)$ have to be complex numbers, a desired connection comes up as the following lemma.
\begin{lem}
\label{lem:160612_01}
 Let $\alpha_{-1}, \beta_{-1}, \alpha_0$, and $\beta_0$ be the real numbers such that
 \begin{equation}
  \alpha_{-1}=\Im(\beta),\quad \beta_{-1}=-\Im(\alpha),\quad \alpha_0=\Re(\alpha),\quad \beta_0=\Re(\beta),
 \end{equation}
 from which the delocalized initial state becomes of the form
 \begin{equation}
  \ket{\Phi_0}=\ket{-1}\otimes\Bigl(\Im(\beta)\ket{0}-\Im(\alpha)\ket{1}\Bigr)+\ket{0}\otimes\Bigl(\Re(\alpha)\ket{0}+\Re(\beta)\ket{1}\Bigr).\label{eq:160720_09}
 \end{equation}
 The letters $\Re(z)$ and $\Im(z)$ indicate the real part and the imaginary part of a complex number $z$ respectively.
 Then, for $x=0,1,2,\ldots$, we have
 \begin{align}
  \alpha_{2t}(2x)=&\gamma_{2t}(2x)-i\delta_{2t}(-2x-1),\label{eq:160720_01}\\
  \beta_{2t}(2x)=&\delta_{2t}(2x)+i\gamma_{2t}(-2x-1),\label{eq:160720_02}\\
  \alpha_{2t}(2x+1)=&\delta_{2t}(-2x-2)+i\gamma_{2t}(2x+1),\label{eq:160720_03}\\
  \beta_{2t}(2x+1)=&-\gamma_{2t}(-2x-2)+i\delta_{2t}(2x+1),\label{eq:160720_04}\\[2mm]
  \alpha_{2t+1}(2x)=&-\delta_{2t+1}(-2x-1)+i\gamma_{2t+1}(2x),\label{eq:160720_05}\\
  \beta_{2t+1}(2x)=&\gamma_{2t+1}(-2x-1)+i\delta_{2t+1}(2x),\label{eq:160720_06}\\
  \alpha_{2t+1}(2x+1)=&\gamma_{2t+1}(2x+1)+i\delta_{2t+1}(-2x-2),\label{eq:160720_07}\\
  \beta_{2t+1}(2x+1)=&\delta_{2t+1}(2x+1)-i\gamma_{2t+1}(-2x-2).\label{eq:160720_08}
 \end{align}
\end{lem}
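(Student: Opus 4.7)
The plan is induction on $t$, verifying the eight identities simultaneously. For the base case $t=0$, Eq.~\eqref{eq:HL_initial_state} gives $\alpha_0(0)=\alpha$ and $\beta_0(0)=\beta$ with all other amplitudes zero, while Eq.~\eqref{eq:160720_09} gives $\gamma_0(-1)=\Im(\beta)$, $\delta_0(-1)=-\Im(\alpha)$, $\gamma_0(0)=\Re(\alpha)$, $\delta_0(0)=\Re(\beta)$. Substitution into Eqs.~\eqref{eq:160720_01}--\eqref{eq:160720_04} at $x=0$ recovers $\alpha=\Re(\alpha)+i\Im(\alpha)$ and $\beta=\Re(\beta)+i\Im(\beta)$, while both sides vanish for $x\geq 1$.

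For the inductive step I first record the scalar recurrences. Writing $c_k,s_k$ with $k=1$ when $t$ is even and $k=2$ when $t$ is odd, the line walk satisfies $\gamma_{t+1}(x)=c_k\gamma_t(x+1)+s_k\delta_t(x+1)$ and $\delta_{t+1}(x)=s_k\gamma_t(x-1)-c_k\delta_t(x-1)$ for every $x\in\mathbb{Z}$, and the half-line walk obeys the same relations for $x\geq 1$. The reflecting term in Eq.~\eqref{eq:HL_shift_operator} modifies the boundary to $\alpha_{t+1}(0)=c_k\alpha_t(1)+s_k\beta_t(1)$ and $\beta_{t+1}(0)=c_k\alpha_t(0)+s_k\beta_t(0)$. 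Assuming Eqs.~\eqref{eq:160720_01}--\eqref{eq:160720_08} hold at time $t$, I would substitute the line recurrences into the right-hand sides of the appropriate four identities at time $t+1$ and invoke the inductive hypothesis to rearrange them into the half-line recurrences on the left-hand sides. The computation branches into sub-cases indexed by the parities of $t$ and $x$, each routine.

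The main obstacle is the boundary case $x=0$ for identities that output $\beta_{t+1}(0)$, since the half-line rule collapses to $c_k\alpha_t(0)+s_k\beta_t(0)$ with no contribution from a negative position, whereas the line-side expression naturally pulls in both $\gamma_t(-1)$ and $\delta_t(-1)$. This is precisely where the particular interleaving of real and imaginary parts in Eq.~\eqref{eq:160720_09} is essential: after applying Eqs.~\eqref{eq:160720_01}--\eqref{eq:160720_02} at $x=0$, the imaginary units attached to $\gamma_t(-1)$ and $\delta_t(-1)$ align so that $c_k\alpha_t(0)+s_k\beta_t(0)$ equals $c_k\gamma_t(0)+s_k\delta_t(0)+i(s_k\gamma_t(-1)-c_k\delta_t(-1))$, which is exactly $\gamma_{t+1}(-1)+i\delta_{t+1}(0)$ as the line recurrence requires. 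Once this boundary match is verified, the remaining sub-cases at $x\geq 1$ reduce to mechanical index shifts with bookkeeping of $\pm i$ factors.
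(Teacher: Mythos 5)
Your proposal is correct and follows essentially the same route as the paper: induction on $t$ (alternating between the even-time identities \eqref{eq:160720_01}--\eqref{eq:160720_04} and the odd-time identities \eqref{eq:160720_05}--\eqref{eq:160720_08}), with the base case read off from the two initial states and the inductive step carried out via the scalar recurrences, the only delicate point being the reflected component $\beta_{t+1}(0)=c_k\alpha_t(0)+s_k\beta_t(0)$, which you verify correctly. Just note that the two parities of $t$ produce slightly different target forms at that boundary ($\gamma_{t+1}(-1)+i\delta_{t+1}(0)$ versus $\delta_{t+1}(0)+i\gamma_{t+1}(-1)$), so that sub-case is not literally a uniform index shift, though it goes through the same way.
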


\begin{proof}{%
Since the initial states imply
\begin{align}
 &\alpha_0(0)=\alpha,\,\beta_0(0)=\beta,\\
 &\alpha_0(x)=\beta_0(x)=0\quad(x=1,2,\ldots),\\[3mm]
 &\gamma_0(-1)=\Im(\beta),\,\delta_0(-1)=-\Im(\alpha),\,\gamma_0(0)=\Re(\alpha),\,\delta_0(0)=\Re(\beta),\\
 &\gamma_0(x)=\delta_0(x)=0\quad (x= 1,\pm 2,\pm 3,\ldots),
\end{align}
we see Eqs.~\eqref{eq:160720_01}--\eqref{eq:160720_04} for $t=0$,
\begin{align}
 &\left\{\begin{array}{l}
   \alpha_0(0)=\alpha,\\
	  \gamma_0(0)-i\delta_0(-1)=\Re(\alpha)-i(-\Im(\alpha))=\Re(\alpha)+i\Im(\alpha)=\alpha,\\
	  \beta_0(0)=\beta,\\
	  \delta_0(0)+i\gamma_0(-1)=\Re(\beta)+i\Im(\beta)=\beta,
	 \end{array}\right.\label{eq:160721_01}\\
 &\left\{\begin{array}{l}
   \alpha_0(2x)=0,\\
	  \gamma_0(2x)-i\delta_0(-2x-1)=0-i\cdot 0=0,\\
	  \beta_0(2x)=0,\\
	  \delta_0(2x)+i\gamma_0(-2x-1)=0+i\cdot 0=0,
	 \end{array}\right.\quad (x=1,2,\ldots),\label{eq:160721_02}\\
 &\left\{\begin{array}{l}
   \alpha_0(2x+1)=0,\\
	  \delta_0(-2x-2)+i\gamma_0(2x+1)=0+i\cdot 0=0,\\
	  \beta_0(2x+1)=0,\\
	  -\gamma_0(-2x-2)+i\delta_0(2x+1)=-0+i\cdot 0=0,
	 \end{array}\right.\quad (x=0,1,2,\ldots).\label{eq:160721_03}
\end{align}
On the other hand, assuming Eqs.~\eqref{eq:160720_01}--\eqref{eq:160720_04} are true, Eqs.~\eqref{eq:160720_05}--\eqref{eq:160720_08} are derived by the usage of the assumption,
\begin{align}
 \alpha_{2t+1}(0)=&c_1\alpha_{2t}(1)+s_1\beta_{2t}(1)\nonumber\\
 =&c_1\left\{\delta_{2t}(-2)+i\gamma_{2t}(1)\right\}+s_1\left\{-\gamma_{2t}(-2)+i\delta_{2t}(1)\right\}\nonumber\\
 =&-\left\{s_1\gamma_{2t}(-2)-c_1\delta_{2t}(-2)\right\}+i\left\{c_1\gamma_{2t}(1)+s_1\delta_{2t}(1)\right\}\nonumber\\
 =&-\delta_{2t+1}(-1)+i\gamma_{2t+1}(0),\label{eq:160721_04}\\[2mm]
 \beta_{2t+1}(0)=&c_1\alpha_{2t}(0)+s_1\beta_{2t}(0)\nonumber\\
 =&c_1\left\{\gamma_{2t}(0)-i\delta_{2t}(-1)\right\}+s_1\left\{\delta_{2t}(0)+i\gamma_{2t}(-1)\right\}\nonumber\\
 =&c_1\gamma_{2t}(0)+s_1\delta_{2t}(0)+i\left\{s_1\gamma_{2t}(-1)-c_1\delta_{2t}(-1)\right\}\nonumber\\
 =&\gamma_{2t+1}(-1)+i\delta_{2t+1}(0),\label{eq:160721_05}
\end{align}
\begin{align}
 \alpha_{2t+1}(2x)=&c_1\alpha_{2t}(2x+1)+s_1\beta_{2t}(2x+1)\nonumber\\
 =&c_1\left\{\delta_{2t}(-2x-2)+i\gamma_{2t}(2x+1)\right\}+s_1\left\{-\gamma_{2t}(-2x-2)+i\delta_{2t}(2x+1)\right\}\nonumber\\
 =&-\left\{s_1\gamma_{2t}(-2x-2)-c_1\delta_{2t}(-2x-2)\right\}+i\left\{c_1\gamma_{2t}(2x+1)+s_1\delta_{2t}(2x+1)\right\}\nonumber\\
 =&-\delta_{2t+1}(-2x-1)+i\gamma_{2t+1}(2x)\qquad (x=1,2,\ldots),\label{eq:160721_06}\\[2mm]
 \beta_{2t+1}(2x)=&s_1\alpha_{2t}(2x-1)-c_1\beta_{2t}(2x-1)\nonumber\\
 =&s_1\left\{\delta_{2t}(-2x)+i\gamma_{2t}(2x-1)\right\}-c_1\left\{-\gamma_{2t}(-2x)+i\delta_{2t}(2x-1)\right\}\nonumber\\
 =&c_1\gamma_{2t}(-2x)+s_1\delta_{2t}(-2x)+i\left\{s_1\gamma_{2t}(2x-1)-c_1\delta_{2t}(2x-1)\right\}\nonumber\\
 =&\gamma_{2t+1}(-2x-1)+i\delta_{2t+1}(2x)\qquad (x=1,2,\ldots),\label{eq:160721_07}
\end{align}
\begin{align}
 \alpha_{2t+1}(2x+1)=&c_1\alpha_{2t}(2x+2)+s_1\beta_{2t}(2x+2)\nonumber\\
 =&c_1\left\{\gamma_{2t}(2x+2)-i\delta_{2t}(-2x-3)\right\}+s_1\left\{\delta_{2t}(2x+2)+i\gamma_{2t}(-2x-3)\right\}\nonumber\\
 =&c_1\gamma_{2t}(2x+2)+s_1\delta_{2t}(2x+2)+i\left\{s_1\gamma_{2t}(-2x-3)-c_1\delta_{2t}(-2x-3)\right\}\nonumber\\
 =&\gamma_{2t+1}(2x+1)+i\delta_{2t+1}(-2x-2)\qquad (x=0,1,2,\ldots),\label{eq:160721_08}\\[2mm]
 \beta_{2t+1}(2x+1)=&s_1\alpha_{2t}(2x)-c_1\beta_{2t}(2x)\nonumber\\
 =&s_1\left\{\gamma_{2t}(2x)-i\delta_{2t}(-2x-1)\right\}-c_1\left\{\delta_{2t}(2x)+i\gamma_{2t}(-2x-1)\right\}\nonumber\\
 =&s_1\gamma_{2t}(2x)-c_1\delta_{2t}(2x)-i\left\{c_1\gamma_{2t}(-2x-1)+s_1\delta_{2t}(-2x-1)\right\}\nonumber\\
 =&\delta_{2t+1}(2x+1)-i\gamma_{2t+1}(-2x-2)\qquad (x=0,1,2,\ldots).\label{eq:160721_09}
\end{align}
In a similar way, assuming Eqs.~\eqref{eq:160720_05}--\eqref{eq:160720_08} allows us to hold Eqs.~\eqref{eq:160720_01}--\eqref{eq:160720_04} in which $t$ is replaced with $t+1$,
\begin{align}
 \alpha_{2t+2}(0)=&c_2\alpha_{2t+1}(1)+s_2\beta_{2t+1}(1)\nonumber\\
 =&c_2\left\{\gamma_{2t+1}(1)+i\delta_{2t+1}(-2)\right\}+s_2\left\{\delta_{2t+1}(1)-i\gamma_{2t+1}(-2)\right\}\nonumber\\
 =&c_2\gamma_{2t+1}(1)+s_2\delta_{2t+1}(1)-i\left\{s_2\gamma_{2t+1}(-2)-c_2\delta_{2t+1}(-2)\right\}\nonumber\\
 =&\gamma_{2t+2}(0)-i\delta_{2t+2}(-1),\label{eq:160721_10}\\[2mm]
 \beta_{2t+2}(0)=&c_2\alpha_{2t+1}(0)+s_2\beta_{2t+1}(0)\nonumber\\
 =&c_2\left\{-\delta_{2t+1}(-1)+i\gamma_{2t+1}(0)\right\}+s_2\left\{\gamma_{2t+1}(-1)+i\delta_{2t+1}(0)\right\}\nonumber\\
 =&s_2\gamma_{2t+1}(-1)-c_2\delta_{2t+1}(-1)+i\left\{c_2\gamma_{2t+1}(0)+s_2\delta_{2t+1}(0)\right\}\nonumber\\
 =&\delta_{2t+2}(0)+i\gamma_{2t+2}(-1),\label{eq:160721_11}
\end{align}
\begin{align}
 \alpha_{2t+2}(2x)=&c_2\alpha_{2t+1}(2x+1)+s_2\beta_{2t+1}(2x+1)\nonumber\\
 =&c_2\left\{\gamma_{2t+1}(2x+1)+i\delta_{2t+1}(-2x-2)\right\}+s_2\left\{\delta_{2t+1}(2x+1)-i\gamma_{2t+1}(-2x-2)\right\}\nonumber\\
 =&c_2\gamma_{2t+1}(2x+1)+s_2\delta_{2t+1}(2x+1)-i\left\{s_2\gamma_{2t+1}(-2x-2)-c_2\delta_{2t+1}(-2x-2)\right\}\nonumber\\
 =&\gamma_{2t+2}(2x)-i\delta_{2t+2}(-2x-1)\qquad (x=1,2,\ldots),\label{eq:160721_12}\\[2mm]
 \beta_{2t+2}(2x)=&s_2\alpha_{2t+1}(2x-1)-c_2\beta_{2t+1}(2x-1)\nonumber\\
 =&s_2\left\{\gamma_{2t+1}(2x-1)+i\delta_{2t+1}(-2x)\right\}-c_2\left\{\delta_{2t+1}(2x-1)-i\gamma_{2t+1}(-2x)\right\}\nonumber\\
 =&s_2\gamma_{2t+1}(2x-1)-c_2\delta_{2t+1}(2x-1)+i\left\{c_2\gamma_{2t+1}(-2x)+s_2\delta_{2t+1}(-2x)\right\}\nonumber\\
 =&\delta_{2t+2}(2x)+i\gamma_{2t+2}(-2x-1)\qquad (x=1,2,\ldots),\label{eq:160721_13}
\end{align}
\begin{align}
 \alpha_{2t+2}(2x+1)=&c_2\alpha_{2t+1}(2x+2)+s_2\beta_{2t+1}(2x+2)\nonumber\\
 =&c_2\left\{-\delta_{2t+1}(-2x-3)+i\gamma_{2t+1}(2x+2)\right\}+s_2\left\{\gamma_{2t+1}(-2x-3)+i\delta_{2t+1}(2x+2)\right\}\nonumber\\
 =&s_2\gamma_{2t+1}(-2x-3)-c_2\delta_{2t+1}(-2x-3)+i\left\{c_2\gamma_{2t+1}(2x+2)+s_2\delta_{2t+1}(2x+2)\right\}\nonumber\\
 =&\delta_{2t+2}(-2x-2)+i\gamma_{2t+2}(2x+1)\qquad (x=0,1,2,\ldots),\label{eq:160721_14}\\[2mm]
 \beta_{2t+2}(2x+1)=&s_2\alpha_{2t+1}(2x)-c_2\beta_{2t+1}(2x)\nonumber\\
 =&s_2\left\{-\delta_{2t+1}(-2x-1)+i\gamma_{2t+1}(2x)\right\}-c_2\left\{\gamma_{2t+1}(-2x-1)+i\delta_{2t+1}(2x)\right\}\nonumber\\
 =&-\left\{c_2\gamma_{2t+1}(-2x-1)+s_2\delta_{2t+1}(-2x-1)\right\}+i\left\{s_2\gamma_{2t+1}(2x)-c_2\delta_{2t+1}(2x)\right\}\nonumber\\
 =&-\gamma_{2t+2}(-2x-2)+i\delta_{2t+2}(2x+1)\qquad (x=0,1,2,\ldots).\label{eq:160721_15}
\end{align}
Combining Eqs.~\eqref{eq:160721_01}--\eqref{eq:160721_15}, one can tell the statement of Lemma \ref{lem:160612_01} by mathematical induction.
}
\end{proof}
\bigskip

\begin{lem}
\label{lem:160612_03}
 If the walker on the line starts off with the initial state
 \begin{equation}
  \ket{\Phi_0}=\ket{-1}\otimes\Bigl(\Im(\beta)\ket{0}-\Im(\alpha)\ket{1}\Bigr)+\ket{0}\otimes\Bigl(\Re(\alpha)\ket{0}+\Re(\beta)\ket{1}\Bigr),\label{eq:ass_lem_160612_03}
 \end{equation}
 its probability distributions reproduce those of the walker on the half line,
 \begin{align}
  \mathbb{P}(X_t^{HL}=x;0)=&\mathbb{P}(Y_t^L=-x-1;1)+\mathbb{P}(Y_t^L=x;0),\label{eq:lem_160612_03_01}\\
  \mathbb{P}(X_t^{HL}=x;1)=&\mathbb{P}(Y_t^L=-x-1;0)+\mathbb{P}(Y_t^L=x;1),\label{eq:lem_160612_03_02}\\
  \mathbb{P}(X_t^{HL}=x)=&\mathbb{P}(Y_t^L=-x-1)+\mathbb{P}(Y_t^L=x),\label{eq:lem_160612_03_03}
 \end{align}
 which hold for $x=0,1,2,\ldots$.
\end{lem}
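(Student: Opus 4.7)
The strategy is to reduce the three probabilistic identities to the amplitude identities already established in Lemma~\ref{lem:160612_01}, after first recording a reality property of the line-walk amplitudes. The initial state in Eq.~\eqref{eq:ass_lem_160612_03} has purely real coefficients, and the operators $C_1$, $C_2$, and $\tilde{S}^L$ all have real matrix entries in the basis $\left\{\ket{x}\otimes\ket{j}\right\}$. A one-line induction on $t$ then shows that $\gamma_t(x)$ and $\delta_t(x)$ remain real for every $t$ and every $x\in\mathbb{Z}$.

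With this reality in hand, I would rewrite each half-line probability using Lemma~\ref{lem:160612_01}. Take the even-time, even-position case: $\alpha_{2t}(2x)=\gamma_{2t}(2x)-i\delta_{2t}(-2x-1)$, and since both $\gamma_{2t}(2x)$ and $\delta_{2t}(-2x-1)$ are real, the modulus squared splits with no cross term,
\begin{equation}
|\alpha_{2t}(2x)|^2=\gamma_{2t}(2x)^2+\delta_{2t}(-2x-1)^2=\mathbb{P}(Y_{2t}^L=2x;0)+\mathbb{P}(Y_{2t}^L=-2x-1;1),
\end{equation}
which is exactly Eq.~\eqref{eq:lem_160612_03_01} at that location. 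The remaining seven parity combinations (even/odd $t$ crossed with even/odd $x$, for either $\alpha_t$ or $\beta_t$) are dispatched identically: each amplitude in Eqs.~\eqref{eq:160720_01}--\eqref{eq:160720_08} appears as $p\pm iq$ with $p$ and $q$ real, so $|p\pm iq|^2=p^2+q^2$, and a short inspection confirms that $p^2$ and $q^2$ are in every case precisely the two line probabilities assembled on the right-hand side of Eq.~\eqref{eq:lem_160612_03_01} or Eq.~\eqref{eq:lem_160612_03_02}.

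Once Eqs.~\eqref{eq:lem_160612_03_01} and \eqref{eq:lem_160612_03_02} are established, Eq.~\eqref{eq:lem_160612_03_03} follows by adding them and using that both $\mathbb{P}(X_t^{HL}=x)$ and $\mathbb{P}(Y_t^L=x)$ are defined as sums of the inner-state probabilities over $j\in\left\{0,1\right\}$. The only point requiring care is the preliminary reality statement; without it, cross terms of the form $-2\,\Im(\gamma\overline{\delta})$ would survive in $|p+iq|^2$ and spoil the clean splitting. This is not a serious obstacle, but it is the only nontrivial ingredient beyond the mechanical application of Lemma~\ref{lem:160612_01}, so I would make sure to state and verify it before starting the case analysis.
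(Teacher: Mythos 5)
Your proposal is correct and follows essentially the same route as the paper: observe that the real initial state and real matrix entries keep $\gamma_t(x)$ and $\delta_t(x)$ real, so the moduli in Lemma~\ref{lem:160612_01} split as $|p\pm iq|^2=p^2+q^2$ with no cross term, yielding the probability identities directly. Your explicit remark that the reality statement is the one nontrivial ingredient is exactly the point the paper's proof also hinges on, albeit stated there more tersely.
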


\begin{proof}{%
Noting that the complex numbers $\gamma_t(x)$ and $\delta_t(x)$ stay in the set of real numbers because of the initial state in Eq.~\eqref{eq:ass_lem_160612_03}, Lemma \ref{lem:160612_01} finds
\begin{align}
 |\alpha_t(x)|^2=&|\gamma_t(x)|^2+|\delta_t(-x-1)|^2,\\
 |\beta_t(x)|^2=&|\gamma_t(-x-1)|^2+|\delta_t(x)|^2.
\end{align}
Expressing these equations with the probability distributions, we realize the statement of Lemma \ref{lem:160612_03}.
}
\end{proof}
\bigskip

The numerical experiments carried out in Fig. \ref{fig:160719_01} support the validity of Lemma \ref{lem:160612_03}.
The bars depict the distributions of the quantum walk on the half line, and the circles are estimated by the right hand sides of Eqs.~\eqref{eq:lem_160612_03_01}, \eqref{eq:lem_160612_03_02}, and \eqref{eq:lem_160612_03_03}.
\begin{figure}[h]
\begin{center}
 \begin{minipage}{35mm}
  \begin{center}
   \includegraphics[scale=0.25]{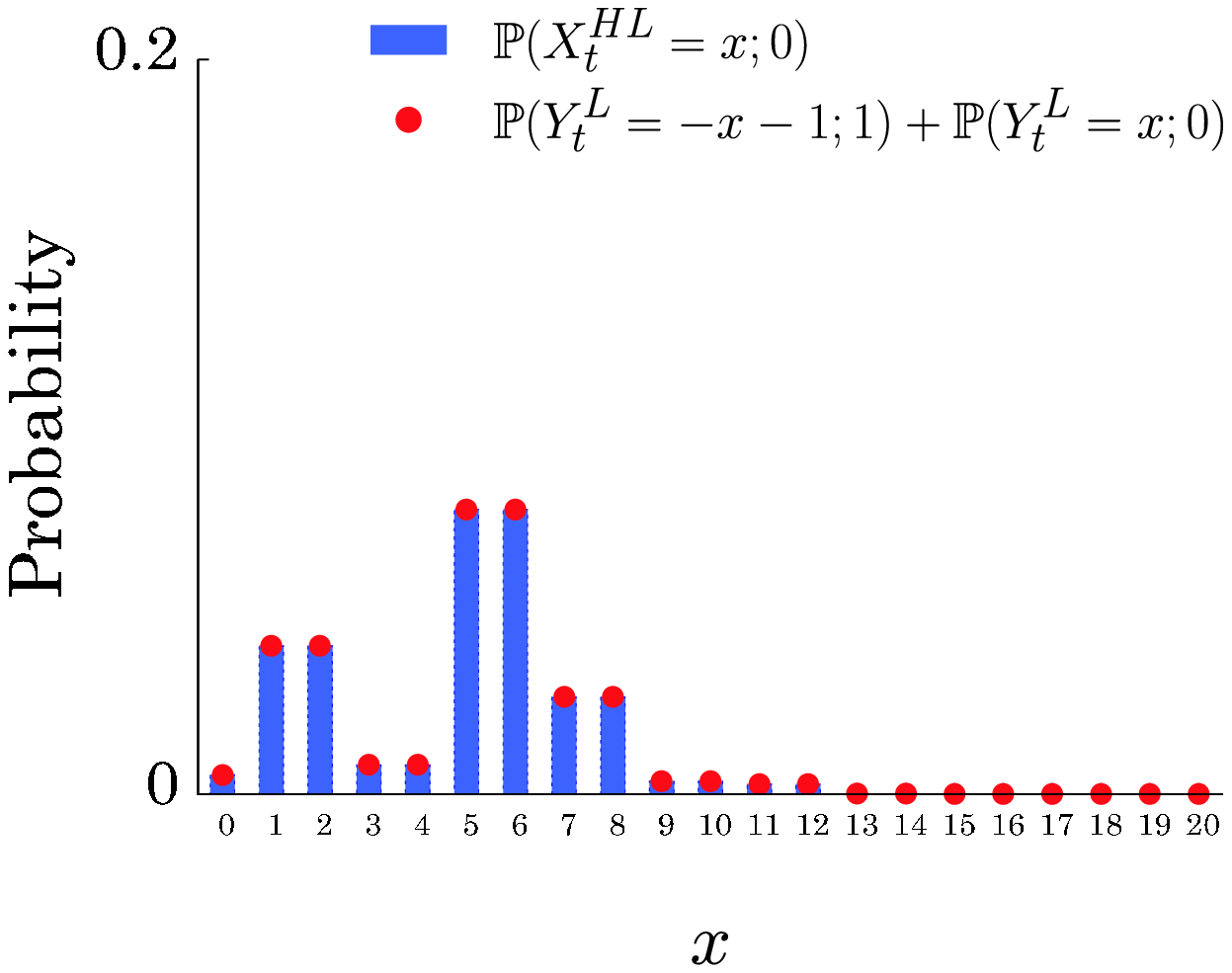}\\[2mm]
  (a) $\mathbb{P}(X_{20}^{HL}=x;0)$
  \end{center}
 \end{minipage}
 \begin{minipage}{35mm}
  \begin{center}
   \includegraphics[scale=0.25]{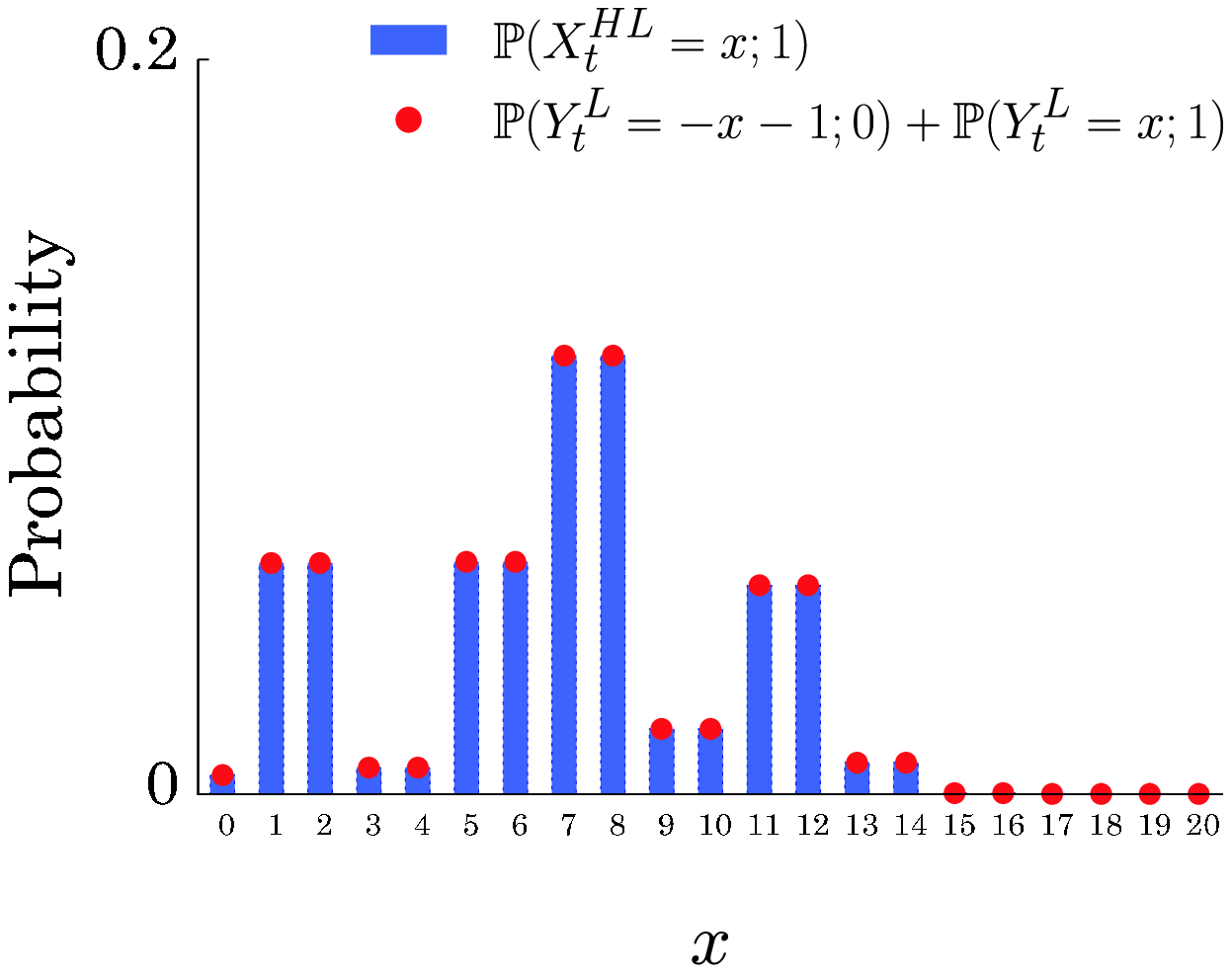}\\[2mm]
  (b) $\mathbb{P}(X_{20}^{HL}=x;1)$
  \end{center}
 \end{minipage}
 \begin{minipage}{35mm}
  \begin{center}
   \includegraphics[scale=0.25]{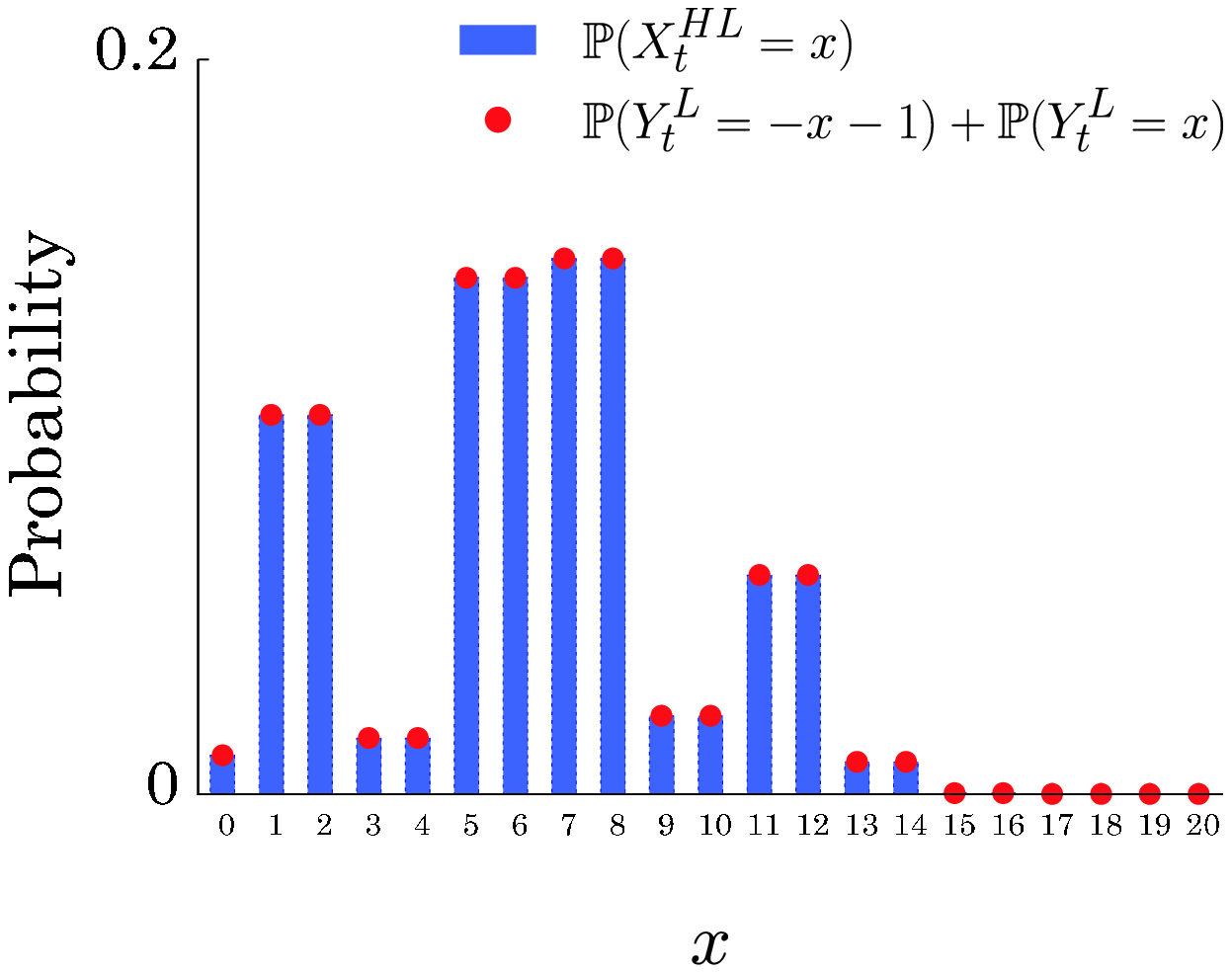}\\[2mm]
  (c) $\mathbb{P}(X_{20}^{HL}=x)$
  \end{center}
 \end{minipage}
\vspace{5mm}
\caption{$\theta_1=\pi/3,\,\theta_2=\pi/4$ : The blue bars show the distributions of the quantum walk on the half line at $t=20$. The red circles are obtained from the right hand sides of Eqs.~\eqref{eq:lem_160612_03_01}, \eqref{eq:lem_160612_03_02}, and \eqref{eq:lem_160612_03_03} as $t=20$.}
\label{fig:160719_01}
\end{center}
\end{figure}

Now, recalling the subscription $\xi\in\left\{1,2\right\}$ such that $|c_\xi|=\min\left\{|c_1|,|c_2|\right\}$, we reach our target, that is, limit distributions of the time-dependent quantum walk on the half line.
\begin{thm}
\label{th:limit}
 Assume that $\theta_1, \theta_2\neq 0,\pi/2,\pi,3\pi/2$.
 For a real number $x$, we have
 \begin{align}
  \lim_{t\to\infty}\mathbb{P}\left(\frac{X_t^{HL}}{t}\leq x;0\right)=&\int_{-\infty}^x \frac{|s_\xi|}{\pi(1+y)\sqrt{c_\xi^2-y^2}}I_{[0,|c_\xi|)}(y)\,dy,\label{eq:th_limit_01}\\
  \lim_{t\to\infty}\mathbb{P}\left(\frac{X_t^{HL}}{t}\leq x;1\right)=&\int_{-\infty}^x \frac{|s_\xi|}{\pi(1-y)\sqrt{c_\xi^2-y^2}}I_{[0,|c_\xi|)}(y)\,dy,\label{eq:th_limit_02}\\
  \lim_{t\to\infty}\mathbb{P}\left(\frac{X_t^{HL}}{t}\leq x\right)=&\int_{-\infty}^x \frac{2|s_\xi|}{\pi(1-y^2)\sqrt{c_\xi^2-y^2}}I_{[0,|c_\xi|)}(y)\,dy.\label{eq:th_limit_03}
 \end{align}
\end{thm}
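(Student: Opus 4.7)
The plan is to reduce everything to the line walk by invoking Lemma~\ref{lem:160612_03}, and then to exploit the precise form of the initial state dictated by Lemma~\ref{lem:160612_01}. Fix $x\geq 0$ (the cases $x<0$ are trivial since $X_t^{HL}\geq 0$). Summing the identity in Eq.~\eqref{eq:lem_160612_03_01} over $0\le k\le \lfloor tx\rfloor$ yields
\begin{equation}
\mathbb{P}\!\left(\frac{X_t^{HL}}{t}\leq x;0\right)=\mathbb{P}\!\left(0\le \frac{Y_t^L}{t}\leq x;\,0\right)+\mathbb{P}\!\left(-x-\tfrac{1}{t}\le \frac{Y_t^L}{t}\leq -\tfrac{1}{t};\,1\right),
\end{equation}
and analogous relations from \eqref{eq:lem_160612_03_02} and \eqref{eq:lem_160612_03_03} for the other two cases. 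The line walk is run from the delocalized initial state fixed by Lemma~\ref{lem:160612_01}, so its long-time behaviour is governed by \eqref{eq:limit_QWL_0}--\eqref{eq:limit_QWL_prob}.

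Next I would substitute those line-walk limit densities into each of the two summands and make the change of variable $y\mapsto -y$ in the integral coming from the negative-position sum. For the inner state $0$ case this produces
\begin{equation}
\lim_{t\to\infty}\mathbb{P}\!\left(\frac{X_t^{HL}}{t}\leq x;0\right)=\int_0^{x}\frac{|s_\xi|}{2\pi(1+y)\sqrt{c_\xi^{\,2}-y^{2}}}\bigl[\eta(y)+\eta(-y)\bigr]I_{[0,|c_\xi|)}(y)\,dy,
\end{equation}
where the change of variable turns the $1-y$ denominator of the inner-state-$1$ density into a $1+y$ denominator, matching the first term exactly. The same manipulation works for \eqref{eq:th_limit_02} (now the two terms both carry $1-y$ in the denominator) and for the total probability \eqref{eq:th_limit_03}.

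The decisive step is now arithmetic rather than analytic: $\eta$ is affine in $y$ by \eqref{eq:160810_eta}, so $\eta(y)+\eta(-y)=2$ identically, and the coefficient depending on $\alpha_{-1},\beta_{-1},\alpha_0,\beta_0$ (hence on $\alpha,\beta$) cancels. Inserting this collapses the bracket to $2$ and reproduces the three densities in Eqs.~\eqref{eq:th_limit_01}--\eqref{eq:th_limit_03}; in particular the dependence on the localized initial amplitudes disappears, and the minimization $|c_\xi|=\min\{|c_1|,|c_2|\}$ inherited from the line walk explains the ``either $\theta_1$ or $\theta_2$, but not both'' phenomenon observed in the figures.

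The main obstacle I anticipate is bookkeeping rather than conceptual: one must verify that the weak convergence of $Y_t^L/t$ given by \eqref{eq:limit_QWL_0}--\eqref{eq:limit_QWL_prob} can indeed be applied simultaneously to the two intervals $[0,x]$ and $[-x,0]$ (including the endpoint $0$ and the shrinking offset $1/t$), and to justify that the absolutely continuous limits have no atom at $0$ so that boundary adjustments are negligible. Once the endpoints are handled, the $\eta(y)+\eta(-y)=2$ cancellation is the whole content of the theorem.
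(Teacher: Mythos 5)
Your proposal is correct and follows essentially the same route as the paper: reduce to the line walk via Lemma~\ref{lem:160612_03}, insert the limit densities \eqref{eq:limit_QWL_0}--\eqref{eq:limit_QWL_prob} for the initial state of Lemma~\ref{lem:160612_01}, reflect the negative-position integral by $y\mapsto -y$, and use that $\eta$ is affine so $\eta(y)+\eta(-y)=2$. The only cosmetic difference is that the paper explicitly rewrites the coefficient of $\eta$ as $\Re\bigl(\alpha^2-\beta^2+\tfrac{2s_1}{c_1}\alpha\beta\bigr)$ before observing the same cancellation.
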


\begin{proof}{%
With Lemma \ref{lem:160612_03}, we can derive these limit distributions.
When the quantum walk on the line starts with the initial state
\begin{equation}
  \ket{\Phi_0}=\ket{-1}\otimes\Bigl(\Im(\beta)\ket{0}-\Im(\alpha)\ket{1}\Bigr)+\ket{0}\otimes\Bigl(\Re(\alpha)\ket{0}+\Re(\beta)\ket{1}\Bigr),
 \end{equation}
the function $\eta(x)$, which is a part of the limit density functions of the walk in Eq.~\eqref{eq:160810_eta}, has the representation
\begin{align}
 \eta(x)&=1-\left\{\Re(\alpha)^2-\Im(\alpha)^2-\Bigl(\Re(\beta)^2-\Im(\beta)^2\Bigr)+\frac{2s_1\Bigl(\Re(\alpha)\Re(\beta)-\Im(\alpha)\Im(\beta)\Bigr)}{c_1}\right\}x\nonumber\\
 &=1-\left(\Re(\alpha^2)-\Re(\beta^2)+\frac{2s_1\Re(\alpha\beta)}{c_1}\right)x
 =1-\Re\left(\alpha^2-\beta^2+\frac{2s_1}{c_1}\alpha\beta\right)x.
\end{align}
Keeping in mind the relation shown in Eq.~\eqref{eq:lem_160612_03_01}, we make a computation of the long-time limit probability law that the quantum walker on the half line is observed in inner state $0$.
For a non-negative real number $x$, the finding probability as $t\to\infty$ is given by the limit distributions of the quantum walk on the line, 
\begin{align}
 &\lim_{t\to\infty}\mathbb{P}\left(\frac{X_t^{HL}}{t}\leq x;0\right)
 =\lim_{t\to\infty}\mathbb{P}\left(-x\leq\frac{Y_t^L}{t}<0;1\right)+\lim_{t\to\infty}\mathbb{P}\left(0\leq\frac{Y_t^L}{t}\leq x;0\right)\nonumber\\
 =&\int_{-x}^0\frac{|s_\xi|}{2\pi (1-y)\sqrt{c_\xi^2-y^2}}\left\{1-\Re\left(\alpha^2-\beta^2+\frac{2s_1}{c_1}\alpha\beta\right)y\right\}I_{(-|c_\xi|,\,|c_\xi|)}(y)\,dy\nonumber\\ 
 &+\int_0^x\frac{|s_\xi|}{2\pi (1+y)\sqrt{c_\xi^2-y^2}}\left\{1-\Re\left(\alpha^2-\beta^2+\frac{2s_1}{c_1}\alpha\beta\right)y\right\}I_{(-|c_\xi|,\,|c_\xi|)}(y)\,dy\nonumber\\
 =&\int_0^x\frac{|s_\xi|}{2\pi (1+y)\sqrt{c_\xi^2-y^2}}\left\{1+\Re\left(\alpha^2-\beta^2+\frac{2s_1}{c_1}\alpha\beta\right)y\right\}I_{(-|c_\xi|,\,|c_\xi|)}(y)\,dy\nonumber\\
 &+\int_0^x\frac{|s_\xi|}{2\pi (1+y)\sqrt{c_\xi^2-y^2}}\left\{1-\Re\left(\alpha^2-\beta^2+\frac{2s_1}{c_1}\alpha\beta\right)y\right\}I_{(-|c_\xi|,\,|c_\xi|)}(y)\,dy\nonumber\\ 
 =&\int_0^x\frac{|s_\xi|}{\pi (1+y)\sqrt{c_\xi^2-y^2}}I_{(-|c_\xi|,\,|c_\xi|)}(y)\,dy
 =\int_{-\infty}^x\frac{|s_\xi|}{\pi (1+y)\sqrt{c_\xi^2-y^2}}I_{[0,\,|c_\xi|)}(y)\,dy.
\end{align}
The other finding probabilities come from Eqs.~\eqref{eq:lem_160612_03_02} and \eqref{eq:lem_160612_03_03} as well,
\begin{align}
 \lim_{t\to\infty}\mathbb{P}\left(\frac{X_t^{HL}}{t}\leq x;1\right)
 =&\lim_{t\to\infty}\mathbb{P}\left(-x\leq\frac{Y_t^L}{t}< 0;0\right)+\lim_{t\to\infty}\mathbb{P}\left(0\leq\frac{Y_t^L}{t}\leq x;1\right)\nonumber\\
 =&\int_{-\infty}^x\frac{|s_\xi|}{\pi (1-y)\sqrt{c_\xi^2-y^2}}I_{[0,\,|c_\xi|)}(y)\,dy,\\[3mm]
 \lim_{t\to\infty}\mathbb{P}\left(\frac{X_t^{HL}}{t}\leq x\right)
 =&\lim_{t\to\infty}\mathbb{P}\left(-x\leq\frac{Y_t^L}{t}< 0\right)+\lim_{t\to\infty}\mathbb{P}\left(0\leq\frac{Y_t^L}{t}\leq x\right)\nonumber\\
 =&\int_{-\infty}^x\frac{2|s_\xi|}{\pi (1-y^2)\sqrt{c_\xi^2-y^2}}I_{[0,\,|c_\xi|)}(y)\,dy.
\end{align}
Since the walker on the half line, which consists of the non-negative positions in this paper, never be observed at the negative positions, these equations are also true for $x<0$ due to the presence of the indicator function $I_{[0,\,|c_\xi|)}(y)$.
Hence, one may arrive at Theorem \ref{th:limit}.
}
\end{proof}
\bigskip

Taking a good look at the limit distributions in Theorem \ref{th:limit}, we realize that they do not depend on the complex numbers $\alpha$ and $\beta$ which produce the localized initial state of the quantum walk on the half line.
Also, while the quantum walk on the half line is operated by both $U_1$ and $U_2$, its limit distributions are determined by either $U_1$ or $U_2$, but not both, because the index $\xi$ is definined by $\cos\theta_\xi=\min\left\{|\cos\theta_1|,\,|\cos\theta_2|\right\}$.
One can, hence, get approximations independent from the complex numbers $\alpha$ and $\beta$,
\begin{align}
 \lim_{t\to\infty}\mathbb{P}\left(X_t^{HL}=x;0\right)\sim &\frac{|s_\xi|t}{\pi(t+x)\sqrt{(c_\xi t)^2-x^2}}I_{[0,|c_\xi|t)}(x),\label{eq:app_state_0}\\
 \lim_{t\to\infty}\mathbb{P}\left(X_t^{HL}=x;1\right)\sim&\frac{|s_\xi|t}{\pi(t-x)\sqrt{(c_\xi t)^2-x^2}}I_{[0,|c_\xi|t)}(x),\label{eq:app_state_1}\\
 \lim_{t\to\infty}\mathbb{P}\left(X_t^{HL}=x\right)\sim&\frac{2|s_\xi|t^2}{\pi(t^2-x^2)\sqrt{(c_\xi t)^2-x^2}}I_{[0,|c_\xi|t)}(x).\label{eq:app_prob}
\end{align}
Since the right hand sides have the parameters $c_\xi=\cos\theta_\xi$ and $s_\xi=\sin\theta_\xi$, it is figured out that the asymptotic behavior of the quantum walk as $t\to\infty$ is featured by only one of the unitary operations $U_1$ and $U_2$.
The approximations indeed catch the features of the probability distributions $\mathbb{P}(X_t^{HL}=x;0), \mathbb{P}(X_t^{HL}=x;1)$, and $\mathbb{P}(X_t^{HL}=x)$ at time $500$, as shown in Figures \ref{fig:160720_04}, \ref{fig:160718_28}, and \ref{fig:160718_31}.
These pictures show up when the parameters of the unitary operations $U_1$ and $U_2$ are set as $\theta_1=\pi/3$ and $\theta_2=\pi/4$ respectively.
Note that all the graphs plotted by Eq.~\eqref{eq:app_state_0} in the pictures (a), represented by circles, are same, and it is also said for the pictures (b) and (c) because of Eqs.~\eqref{eq:app_state_1} and \eqref{eq:app_prob}. 
\begin{figure}[h]
\begin{center}
 \begin{minipage}{35mm}
  \begin{center}
   \includegraphics[scale=0.3]{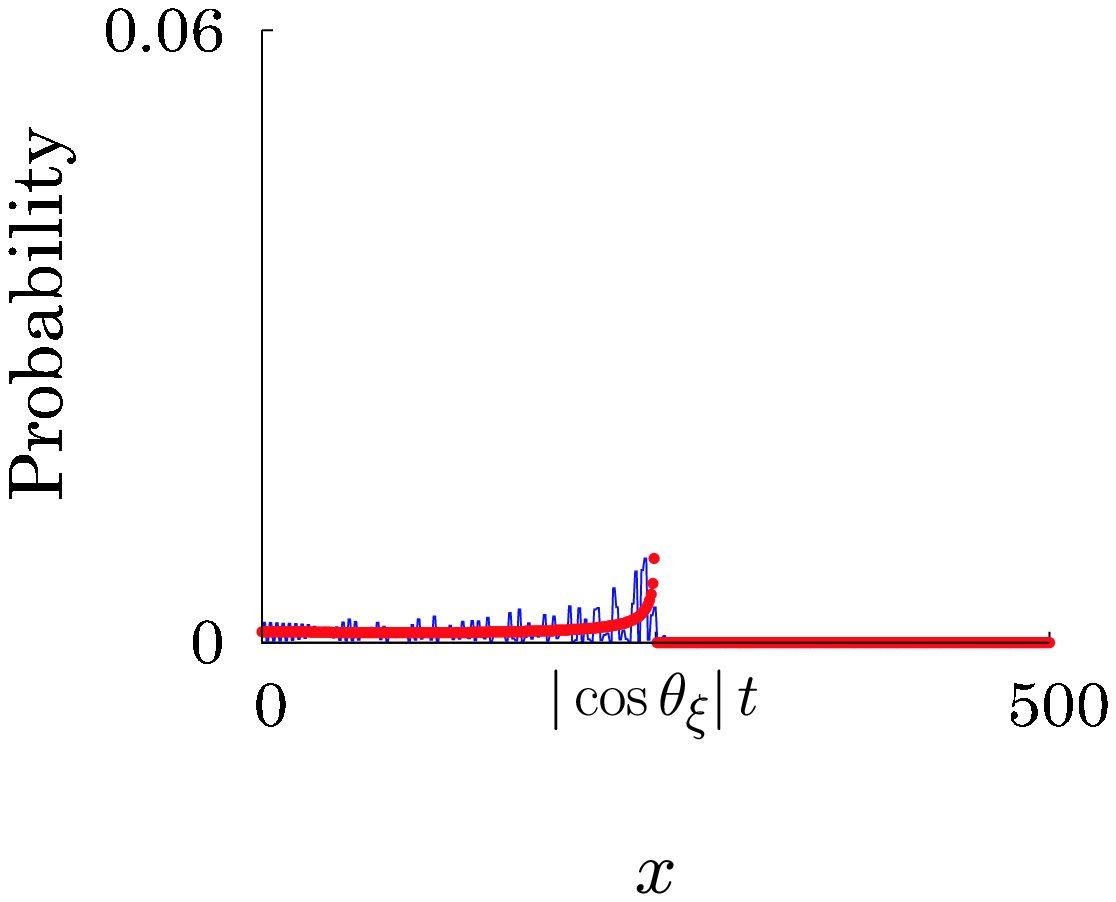}\\[2mm]
  (a) $\mathbb{P}(X_t^{HL}=x;0)$
  \end{center}
 \end{minipage}
 \begin{minipage}{35mm}
  \begin{center}
   \includegraphics[scale=0.3]{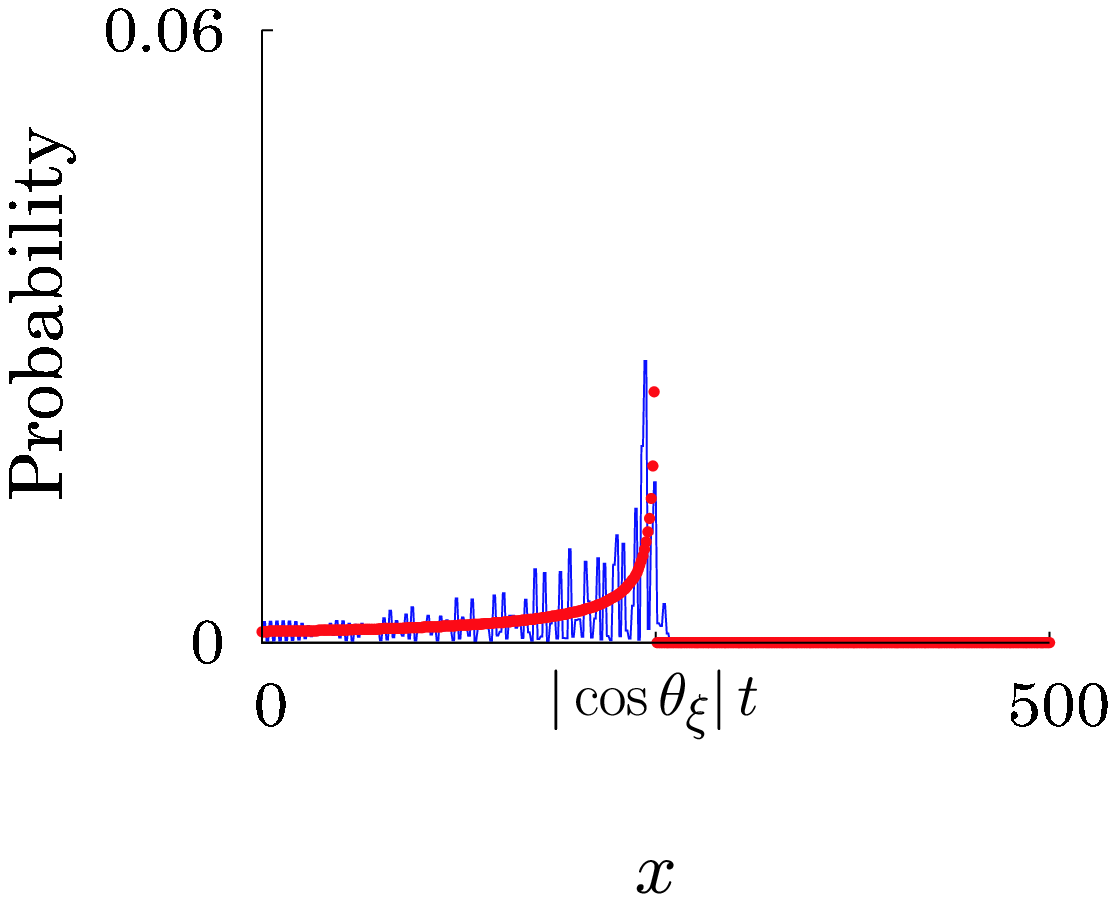}\\[2mm]
  (b) $\mathbb{P}(X_t^{HL}=x;1)$
  \end{center}
 \end{minipage}
 \begin{minipage}{35mm}
  \begin{center}
   \includegraphics[scale=0.3]{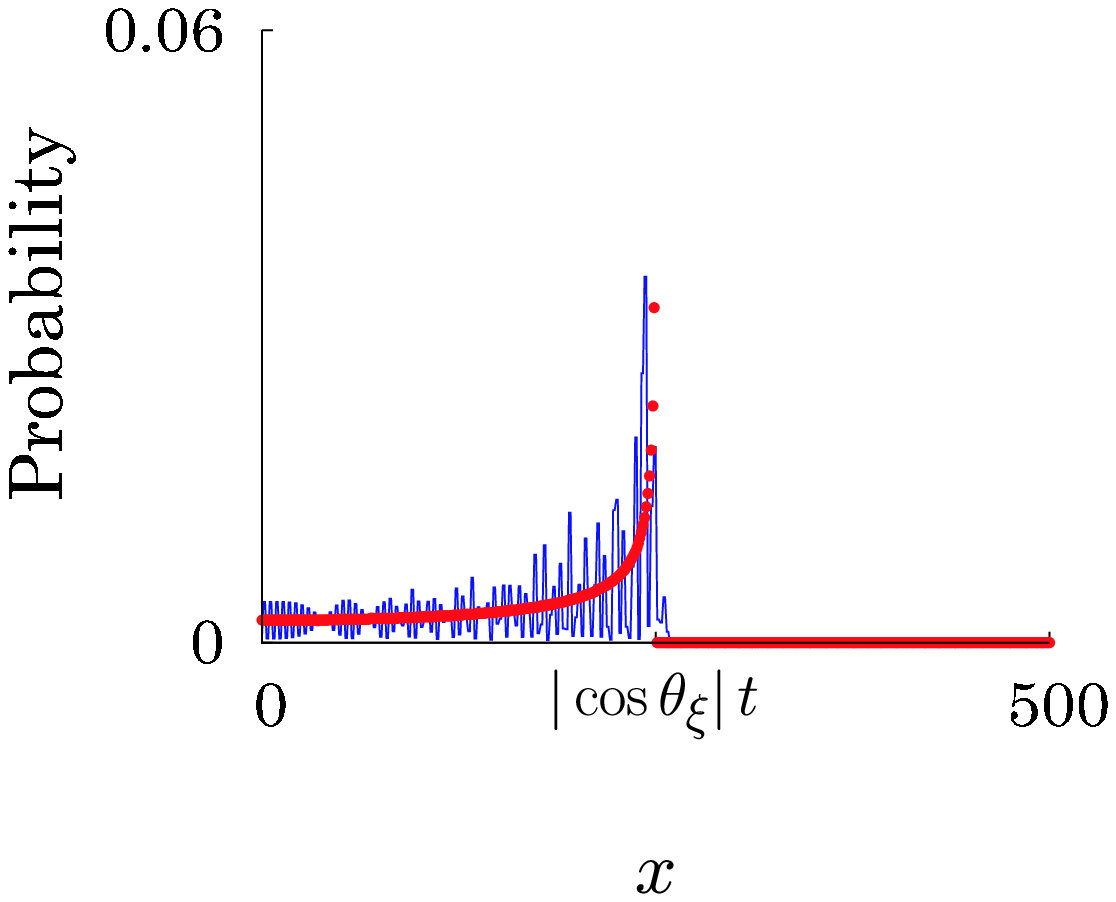}\\[2mm]
  (c) $\mathbb{P}(X_t^{HL}=x)$
  \end{center}
 \end{minipage}
\vspace{5mm}
\caption{$\alpha=1/\sqrt{2},\, \beta=i/\sqrt{2}$ : The blue lines show the probability distributions of the time-dependent quantum walk on the half line at time $500$. The red circles indicate values obtained from the approximations in Eqs.~\eqref{eq:app_state_0}, \eqref{eq:app_state_1}, and \eqref{eq:app_prob} as $t=500$. ($\theta_1=\pi/3,\, \theta_2=\pi/4$)}
\label{fig:160720_04}
\end{center}
\end{figure}

\begin{figure}[h]
\begin{center}
 \begin{minipage}{35mm}
  \begin{center}
   \includegraphics[scale=0.3]{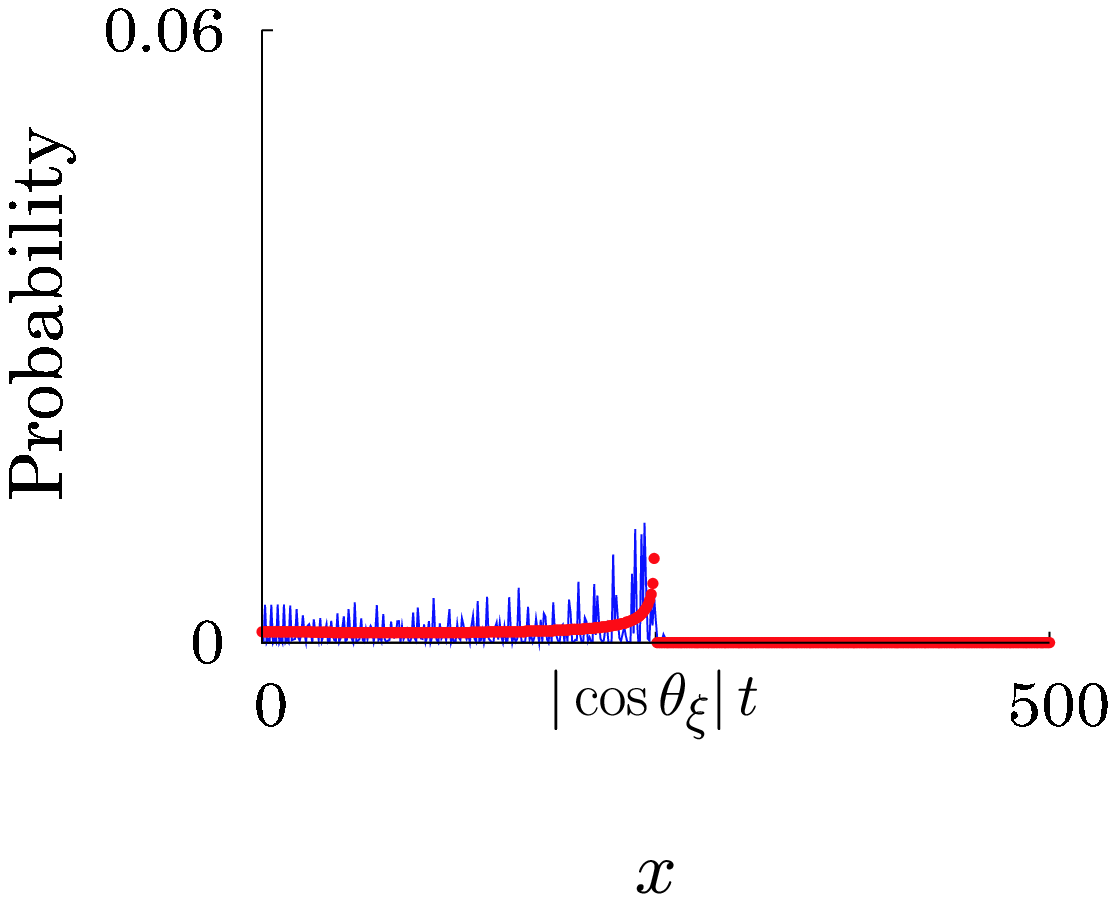}\\[2mm]
  (a) $\mathbb{P}(X_t^{HL}=x;0)$
  \end{center}
 \end{minipage}
 \begin{minipage}{35mm}
  \begin{center}
   \includegraphics[scale=0.3]{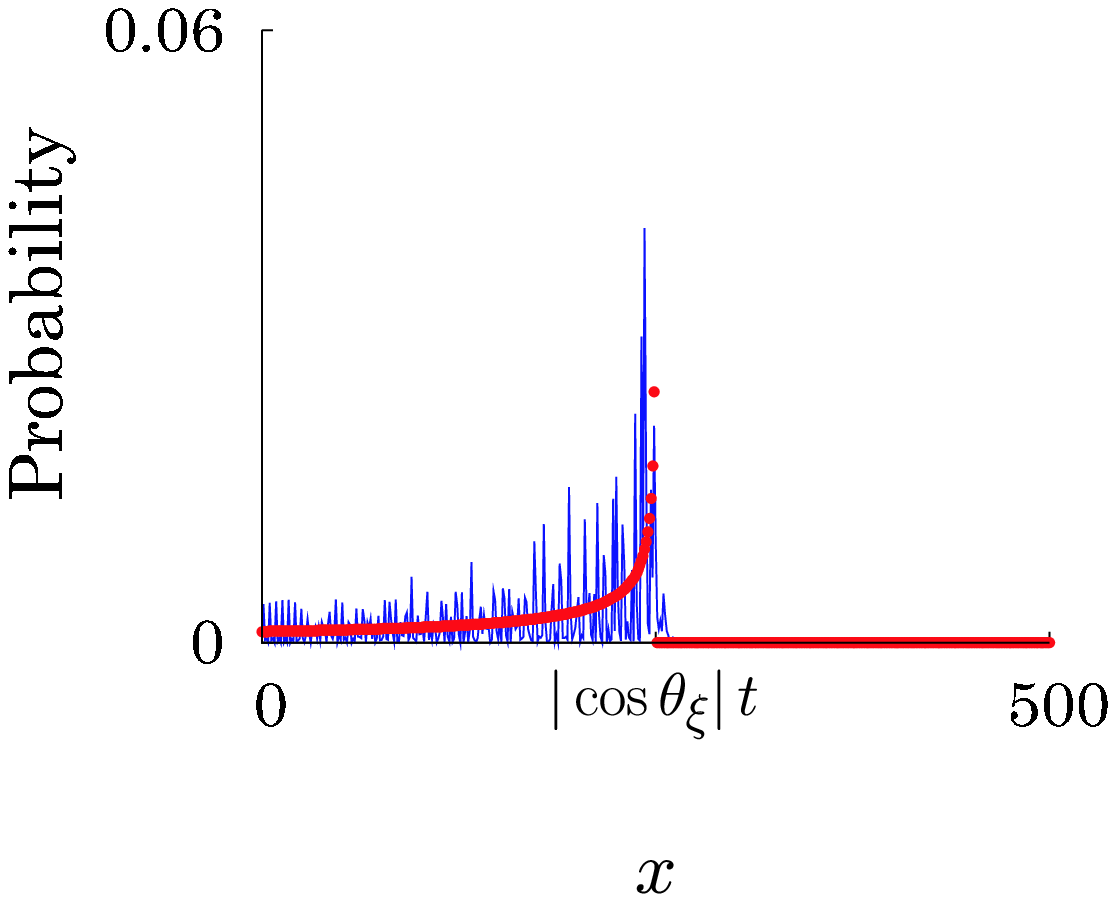}\\[2mm]
  (b) $\mathbb{P}(X_t^{HL}=x;1)$
  \end{center}
 \end{minipage}
 \begin{minipage}{35mm}
  \begin{center}
   \includegraphics[scale=0.3]{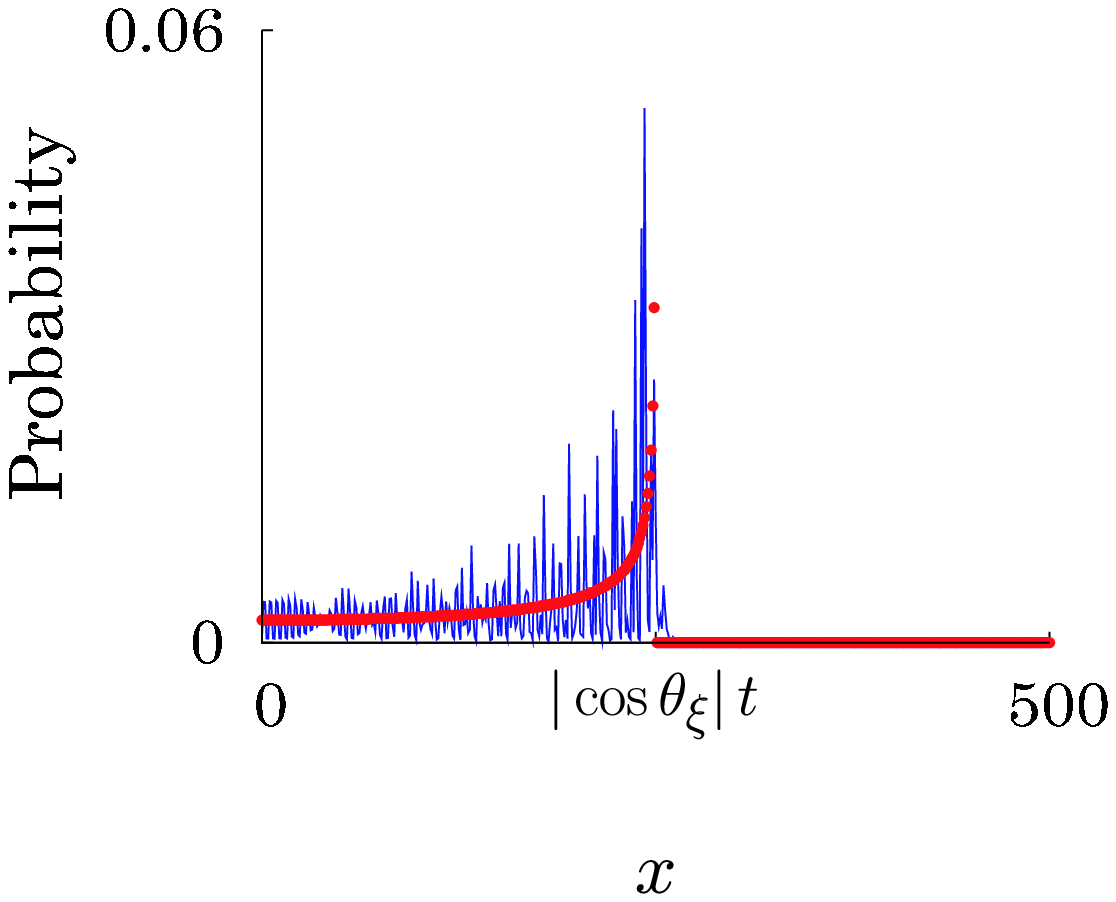}\\[2mm]
  (c) $\mathbb{P}(X_t^{HL}=x)$
  \end{center}
 \end{minipage}
\vspace{5mm}
\caption{$\alpha=1,\, \beta=0$ : The blue lines show the probability distributions of the time-dependent quantum walk on the half line at time $500$. The red circles indicate values obtained from the approximations in Eqs.~\eqref{eq:app_state_0}, \eqref{eq:app_state_1}, and \eqref{eq:app_prob} as $t=500$. ($\theta_1=\pi/3,\, \theta_2=\pi/4$)}
\label{fig:160718_28}
\end{center}
\end{figure}

\begin{figure}[h]
\begin{center}
 \begin{minipage}{35mm}
  \begin{center}
   \includegraphics[scale=0.3]{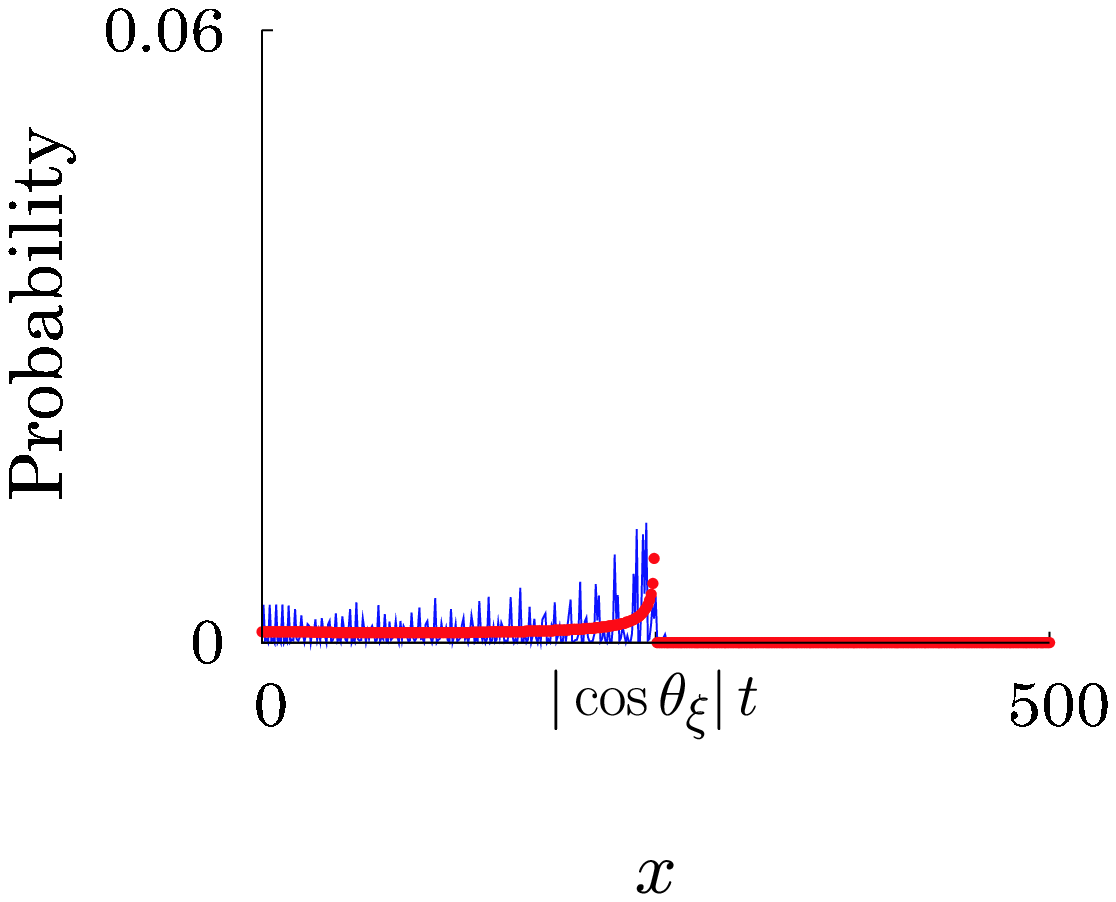}\\[2mm]
  (a) $\mathbb{P}(X_t^{HL}=x;0)$
  \end{center}
 \end{minipage}
 \begin{minipage}{35mm}
  \begin{center}
   \includegraphics[scale=0.3]{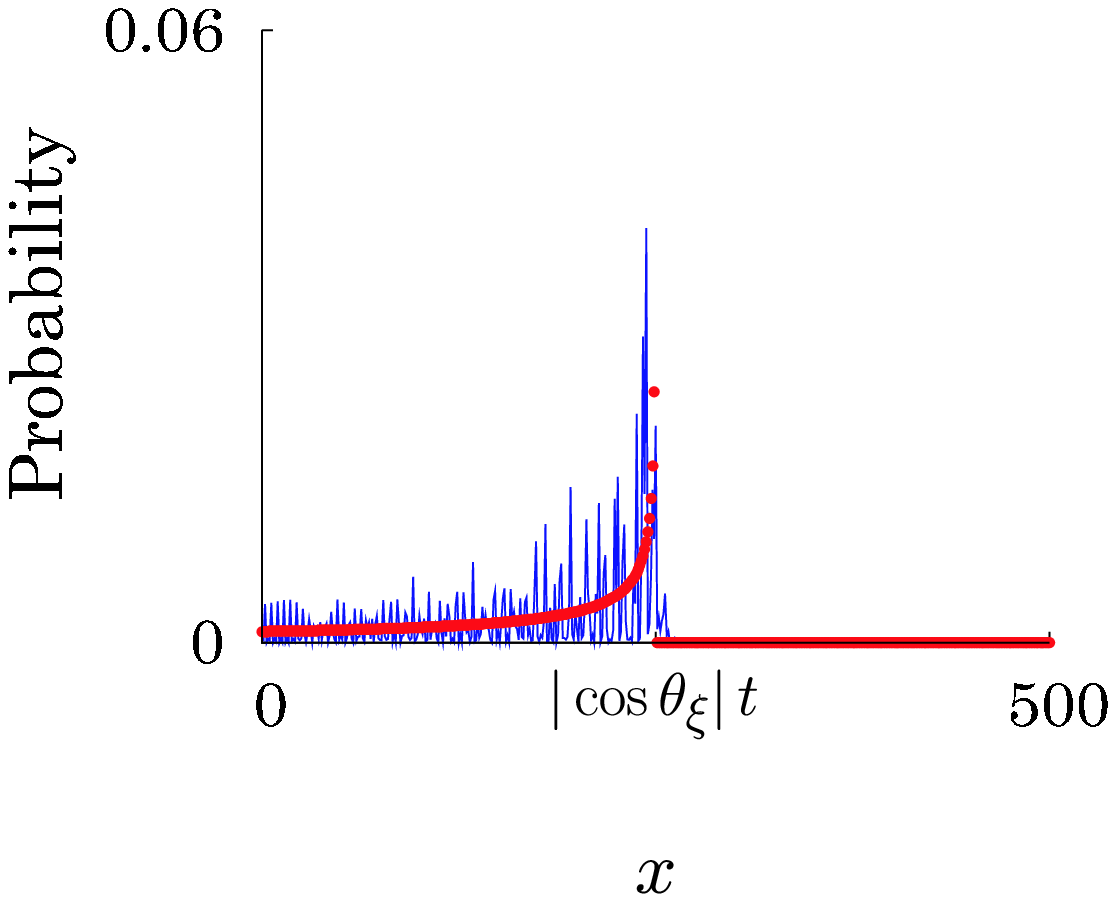}\\[2mm]
  (b) $\mathbb{P}(X_t^{HL}=x;1)$
  \end{center}
 \end{minipage}
 \begin{minipage}{35mm}
  \begin{center}
   \includegraphics[scale=0.3]{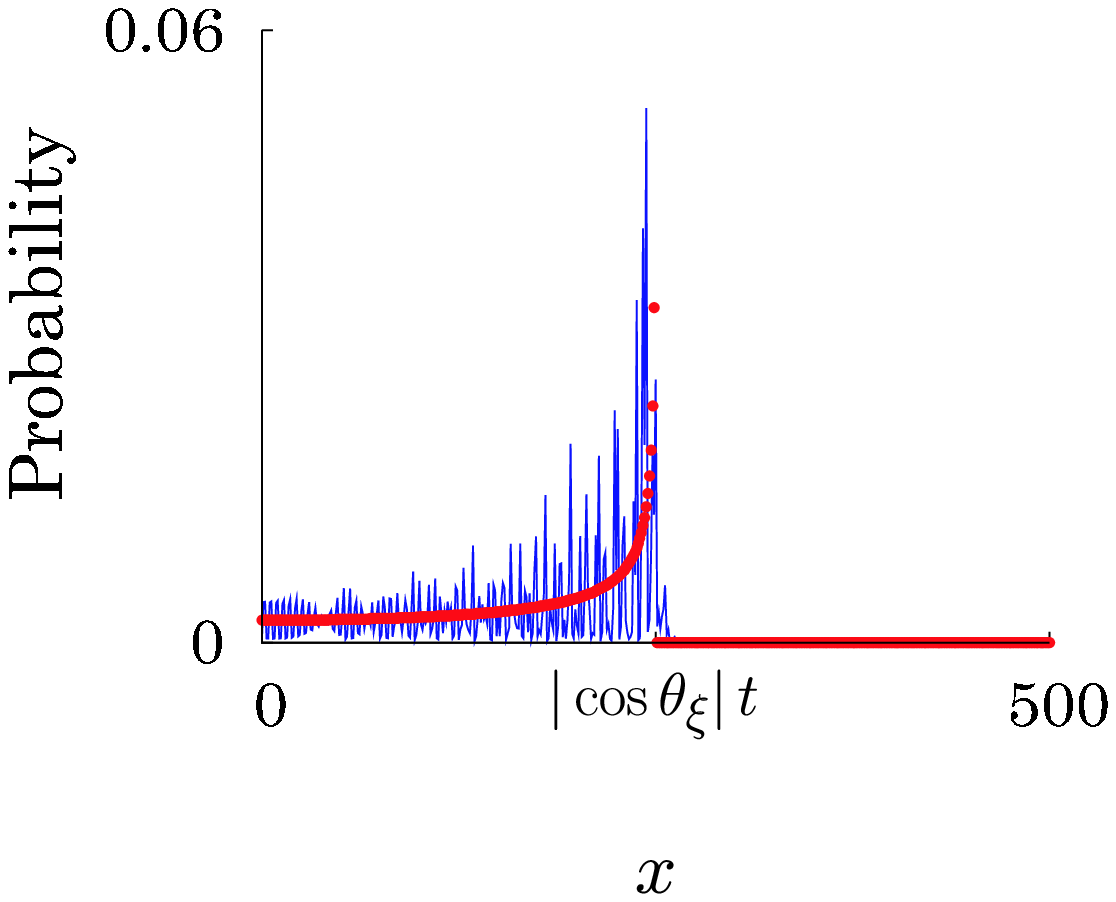}\\[2mm]
  (c) $\mathbb{P}(X_t^{HL}=x)$
  \end{center}
 \end{minipage}
\vspace{5mm}
\caption{$\alpha=0,\, \beta=1$ : The blue lines show the probability distributions of the time-dependent quantum walk on the half line at time $500$. The red circles indicate values obtained from the approximations in Eqs.~\eqref{eq:app_state_0}, \eqref{eq:app_state_1}, and \eqref{eq:app_prob} as $t=500$. ($\theta_1=\pi/3,\, \theta_2=\pi/4$)}
\label{fig:160718_31}
\end{center}
\end{figure}

\clearpage

\section{Time-independent quantum walk on the half line}

We see exact representations for the probability distributions of a time-independent quantum walk on the half line in Machida~\cite{Machida2016}.
They were, however, the result for a special initial state.
On the other hand, since Lemma \ref{lem:160612_03} was also available for the time-independent quantum walk on the half line, one can say the exact representations for the probability distributions of the time-independent quantum walk with a general localized initial state. 
If a value $\theta\in[0,2\pi)$ is substituted to both the parameters $\theta_1$ and $\theta_2$, the quantum walk on the half line becomes a time-independent walk.
Making the most of the result given in Konno~\cite{Konno2002a} under the assumption $\theta\neq 0,\pi/2,\pi,3\pi/2$, we can see representations for the positive values of the probability distribution $\mathbb{P}(X_t^{HL}=x)\,(t=1,2,\ldots)$.
For $m=1,2,\ldots,\left[t/2\right]$, writing $\cos\theta$ and $\sin\theta$ as $c$ and $s$ respectively, we have
\begin{align}
 &\mathbb{P}(X_t^{HL}=t-2m)\nonumber\\
 =&c^{2(t-1)}\sum_{j_1=1}^m\sum_{j_2=1}^m\left(-\frac{s^2}{c^2}\right)^{j_1+j_2}
 {m-1\choose j_1-1}{m-1\choose j_2-1}{t-m-1\choose j_1-1}{t-m-1\choose j_2-1}\nonumber\\
 &\times\left(\frac{1}{j_1 j_2}\right)\biggl[\left\{m^2c^2+(t-m)^2s^2-(j_1+j_2)(t-m)\right\}|\alpha|^2\nonumber\\
 &\qquad\qquad\qquad+\left\{m^2s^2+(t-m)^2c^2-(j_1+j_2)m\right\}|\beta|^2\nonumber\\
 &\qquad\qquad\qquad+\frac{1}{s^2}\Bigl[\left\{(t-2m)(j_1+j_2)+2t(2m-t)s^2\right\}cs\Re(\alpha\overline{\beta})+j_1j_2\Bigr]
 \biggr],\label{eq:160803_01}\\[2mm]
 &\mathbb{P}(X_t^{HL}=t-2m-1)\nonumber\\
 =&c^{2(t-1)}\sum_{j_1=1}^m\sum_{j_2=1}^m\left(-\frac{s^2}{c^2}\right)^{j_1+j_2}
 {m-1\choose j_1-1}{m-1\choose j_2-1}{t-m-1\choose j_1-1}{t-m-1\choose j_2-1}\nonumber\\
 &\times\left(\frac{1}{j_1 j_2}\right)\biggl[\left\{m^2s^2+(t-m)^2c^2-(j_1+j_2)m\right\}|\alpha|^2\nonumber\\
 &\qquad\qquad\qquad+\left\{m^2c^2+(t-m)^2s^2-(j_1+j_2)(t-m)\right\}|\beta|^2\nonumber\\
 &\qquad\qquad\qquad-\frac{1}{s^2}\Bigl[\left\{(t-2m)(j_1+j_2)+2t(2m-t)s^2\right\}cs\Re(\alpha\overline{\beta})-j_1j_2\Bigr]
 \biggr],\label{eq:160803_02}\\[2mm]
 &\mathbb{P}(X_t^{HL}=t)=c^{2(t-1)}|s\alpha-c\beta|^2,\label{eq:160803_03}\\[2mm]
 &\mathbb{P}(X_t^{HL}=-t)=c^{2(t-1)}|c\alpha+s\beta|^2.\label{eq:160803_04}
\end{align}
Note that, for a real number $x$, the floor function $[x]$ outputs the maximum integer less than or equal to the real number $x$. 
The second equation above is good under the condition $m\leq (t-1)/2$, which comes from $t-2m-1\geq 0$.
If we put $\alpha=1/\sqrt{2}$ and $\beta=i/\sqrt{2}$, the representations of the probability distribution totally agree with the ones demonstrated in Machida~\cite{Machida2016}.
The values obtained from Eqs.~\eqref{eq:160803_01}--\eqref{eq:160803_04} completely match numerical experiments, as shown in Fig.~\ref{fig:160803_01}, 
\begin{figure}[h]
\begin{center}
 \begin{minipage}{35mm}
  \begin{center}
   \includegraphics[scale=0.25]{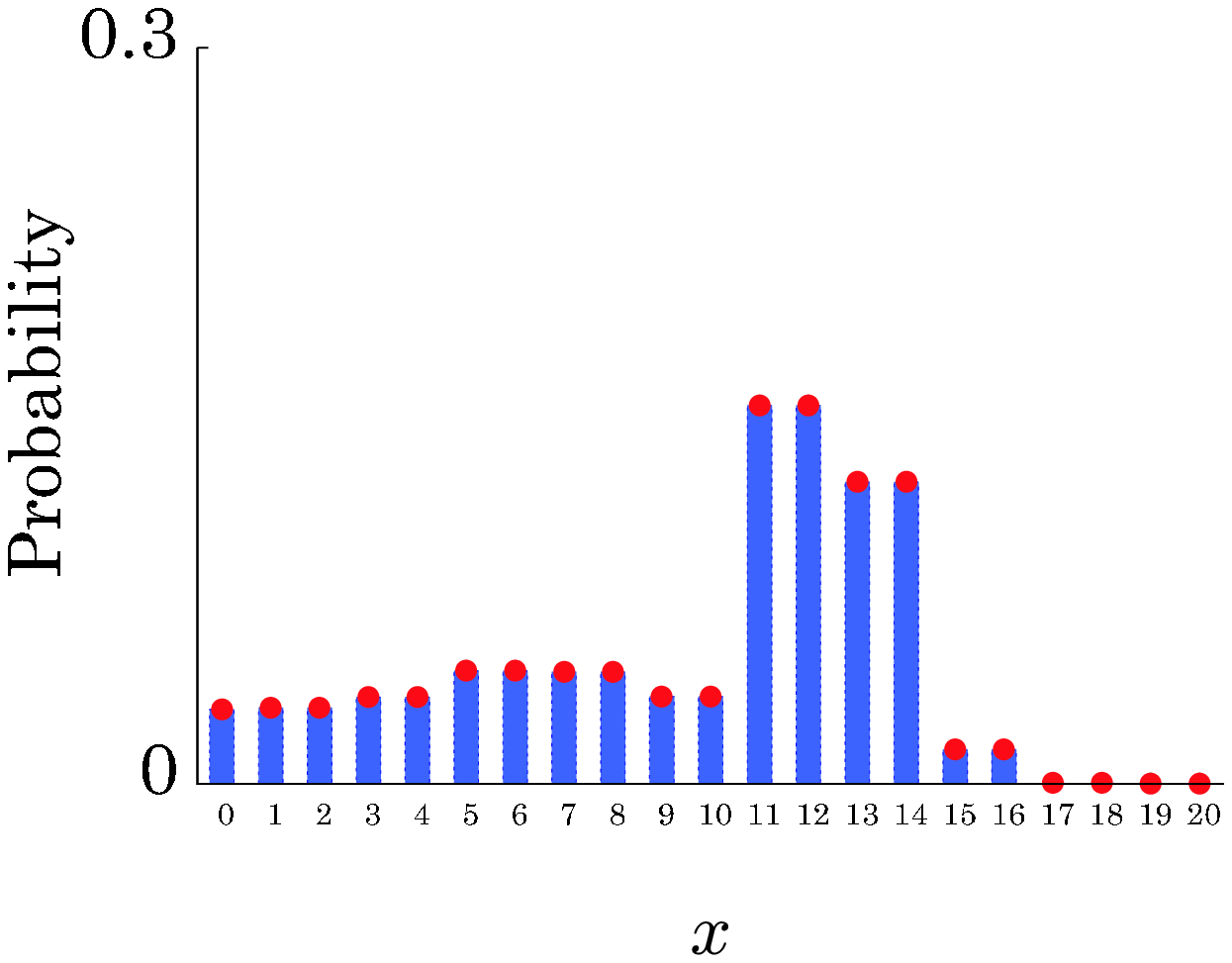}\\[2mm]
  (a) $\alpha=1/\sqrt{2},\,\beta=i/\sqrt{2}$
  \end{center}
 \end{minipage}
 \begin{minipage}{35mm}
  \begin{center}
   \includegraphics[scale=0.25]{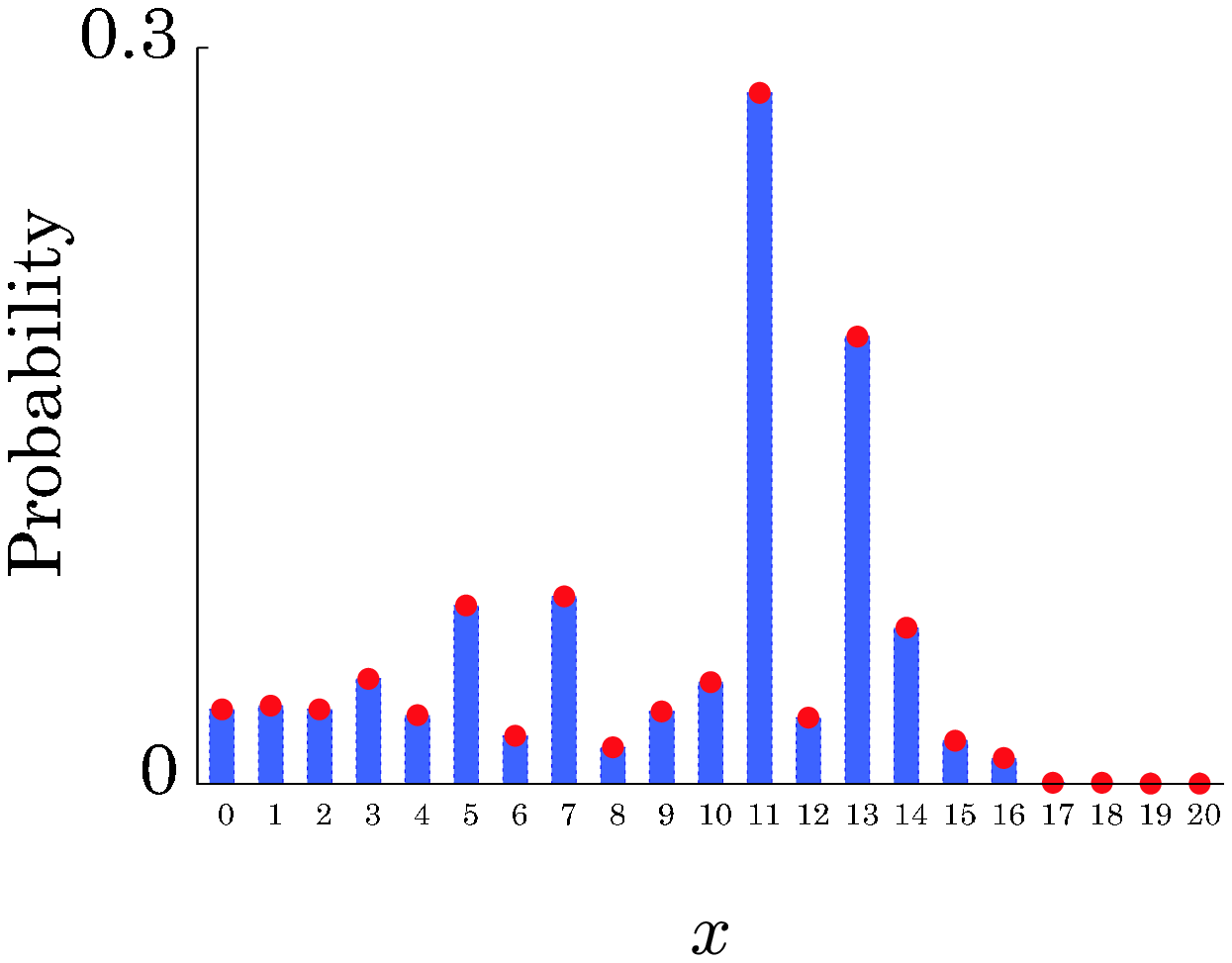}\\[2mm]
  (b) $\alpha=1,\,\beta=0$
  \end{center}
 \end{minipage}
 \begin{minipage}{35mm}
  \begin{center}
   \includegraphics[scale=0.25]{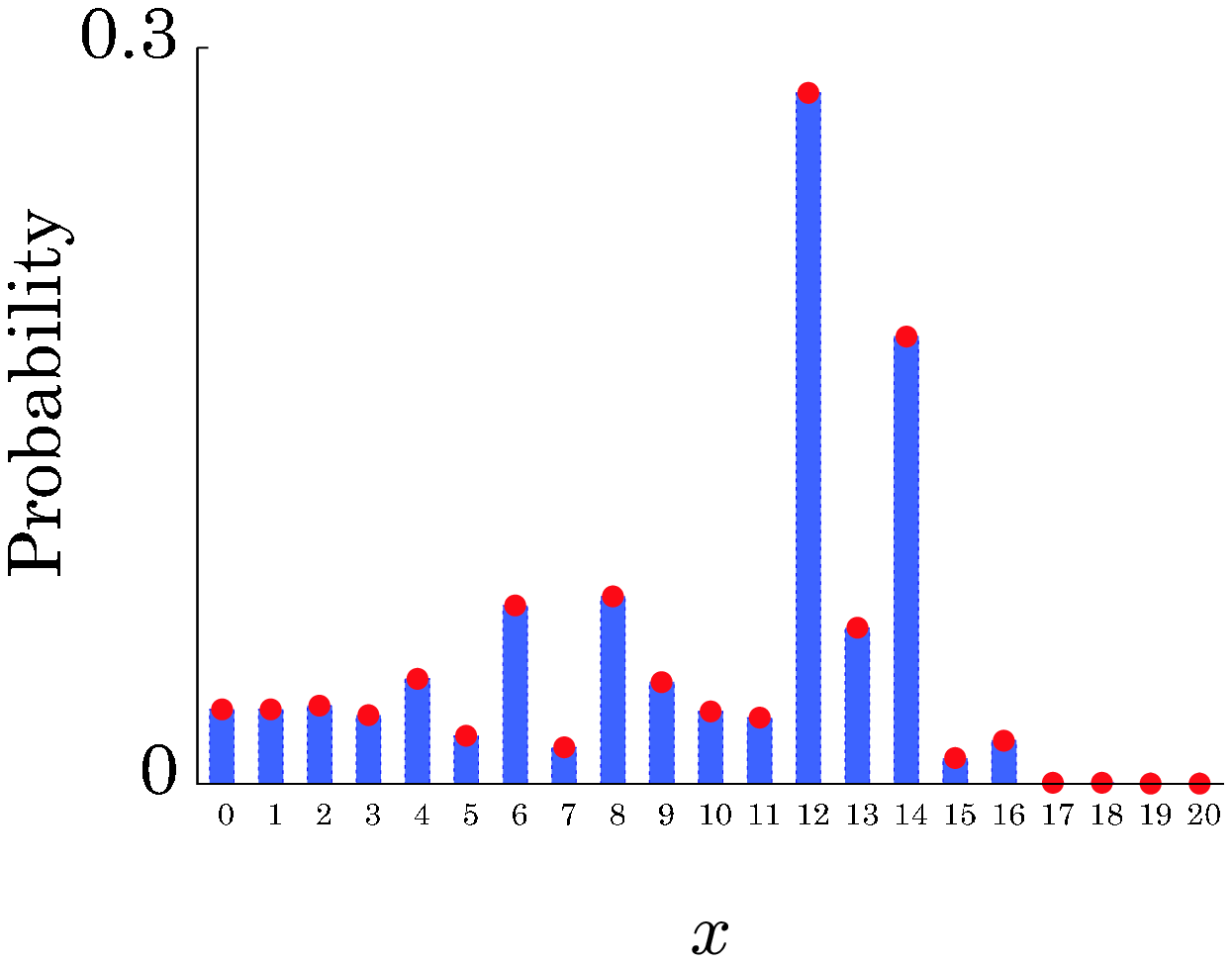}\\[2mm]
  (c) $\alpha=0,\,\beta=1$
  \end{center}
 \end{minipage}
\vspace{5mm}
\caption{$\theta=\pi/4$ : The blue bars represent $\mathbb{P}(X_{20}^{HL}=x)$ of the quantum walk on the half line at time $20$. The values computed from Eqs.~\eqref{eq:160803_01}--\eqref{eq:160803_04} as  $t=20$ are plotted with the red circles.}
\label{fig:160803_01}
\end{center}
\end{figure}

More importantly, the representations are related with the probability distribution of the time-independent quantum walk on the line with the localized initial state $\ket{0}\otimes\left(\alpha\ket{0}+\beta\ket{1}\right)$, which is different from the delocalized initial state in Eq.~\eqref{eq:L_initial_state}.
Once again, the complex numbers $\alpha$ and $\beta$ are supposed to satisfy the condition $|\alpha|^2+|\beta|^2=1$.
Let $\mathbb{P}(Z_t^L=x)\,(x\in\mathbb{Z})$ be the probability distribution of the quantum walk on the line with the localized initial state.
Then, looking at the representations for the probability distribution $\mathbb{P}(Z_t^L=x)$ shown in Konno~\cite{Konno2002a}, one figures out the relation $\mathbb{P}(X_t^{HL}=x)=\mathbb{P}(Z_t^L=-x-1)+\mathbb{P}(Z_t^L=x)\,(x\in\left\{0,1,2,\ldots\right\})$ for which we should note that since the quantum walker launches with the localized initial state, either $\mathbb{P}(Z_t^L=-x-1)$ or $\mathbb{P}(Z_t^L=x)$ is certainly equal to zero.
We, hence, can say that the probability distribution of the time-independent quantum walk on the half line with  the localized initial state $\ket{0}\otimes\left(\alpha\ket{0}+\beta\ket{1}\right)$ is totally described by the probability distribution of the time-independent quantum walk on the line with the same localized initial state $\ket{0}\otimes\left(\alpha\ket{0}+\beta\ket{1}\right)$.
Figure \ref{fig:160803_04} numerically shows that the relation $\mathbb{P}(X_t^{HL}=x)=\mathbb{P}(Z_t^L=-x-1)+\mathbb{P}(Z_t^L=x)$ is true.
\begin{figure}[h]
\begin{center}
 \begin{minipage}{35mm}
  \begin{center}
   \includegraphics[scale=0.25]{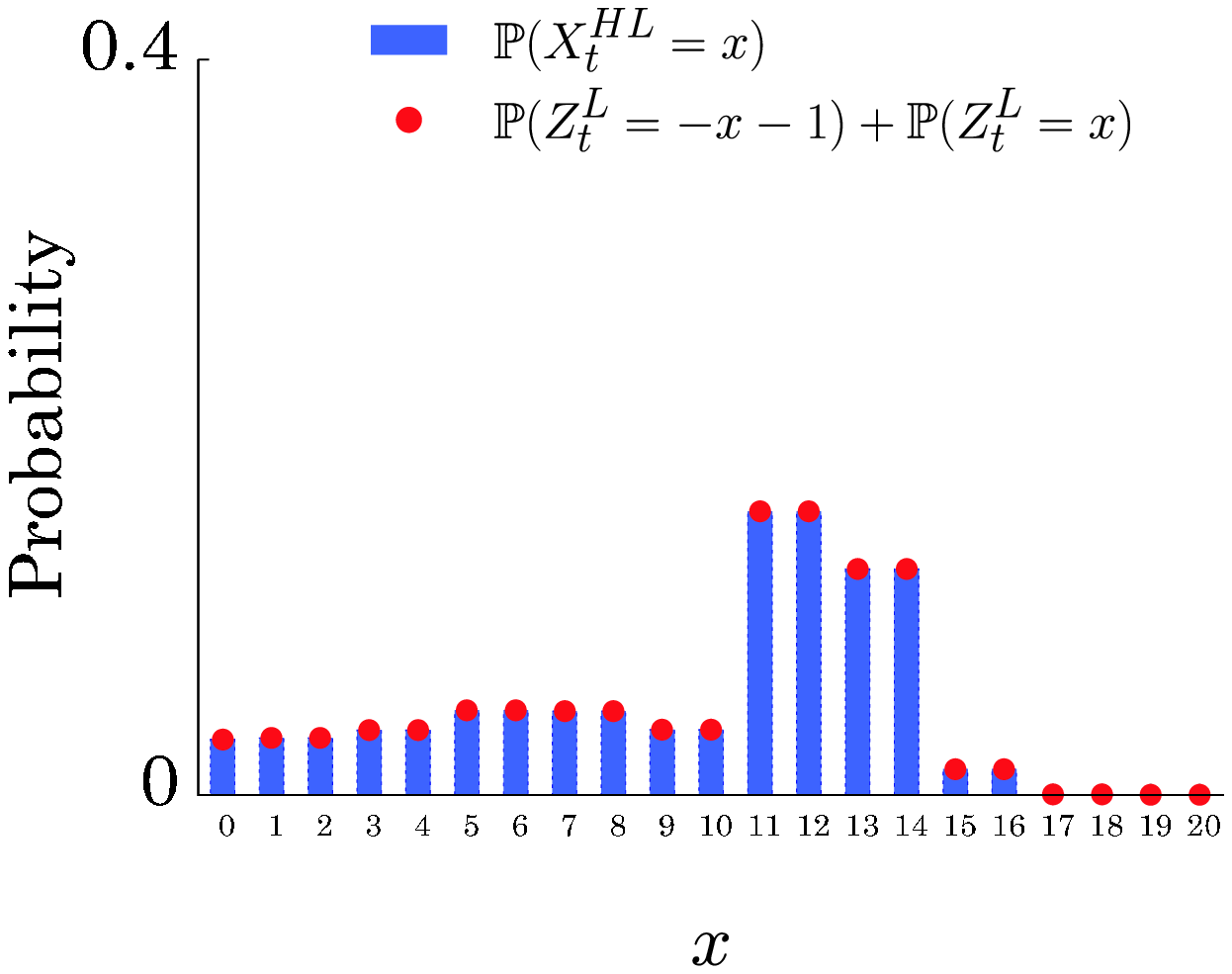}\\[2mm]
  (a) $\alpha=1/\sqrt{2},\,\beta=i/\sqrt{2}$
  \end{center}
 \end{minipage}
 \begin{minipage}{35mm}
  \begin{center}
   \includegraphics[scale=0.25]{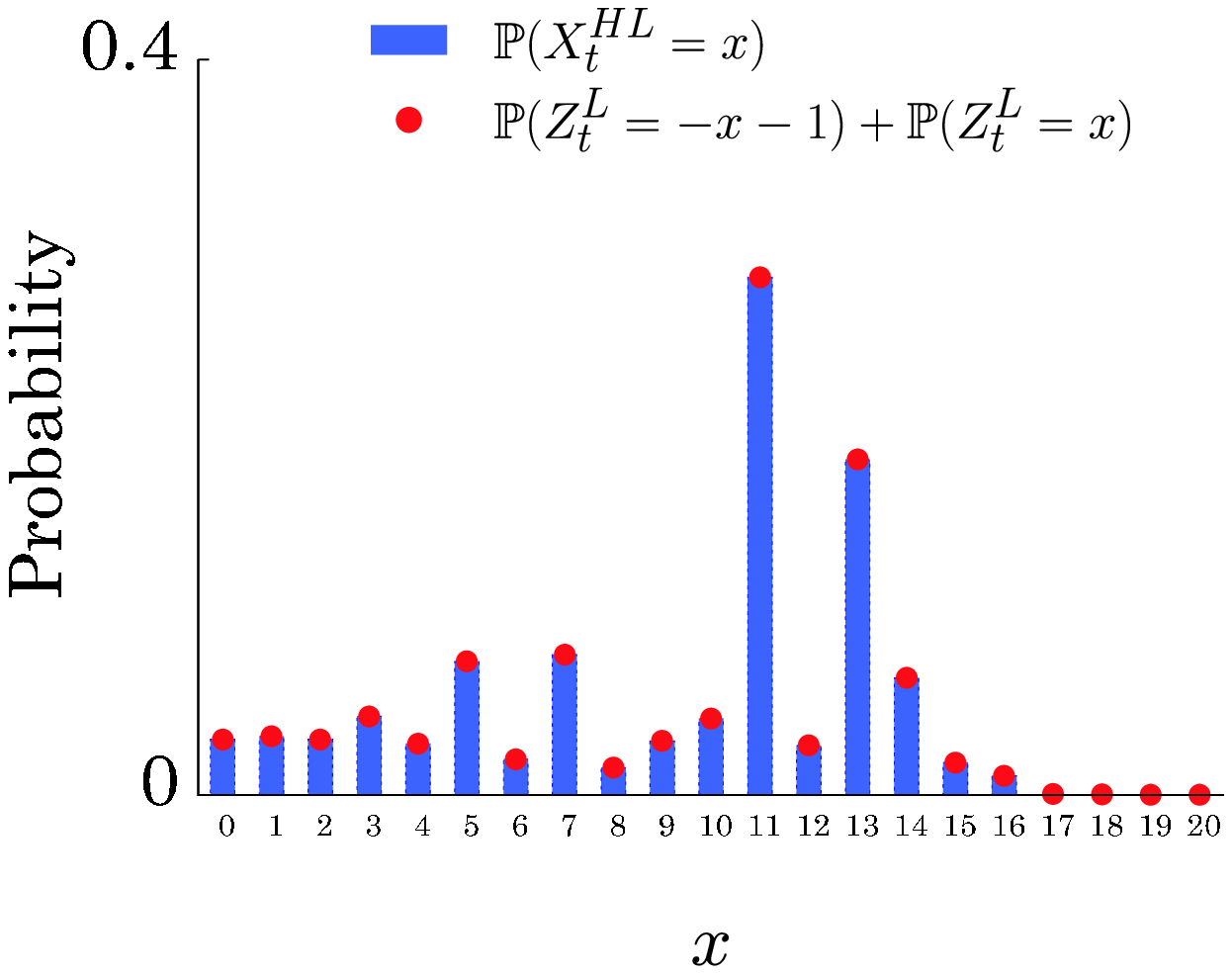}\\[2mm]
  (b) $\alpha=1,\,\beta=0$
  \end{center}
 \end{minipage}
 \begin{minipage}{35mm}
  \begin{center}
   \includegraphics[scale=0.25]{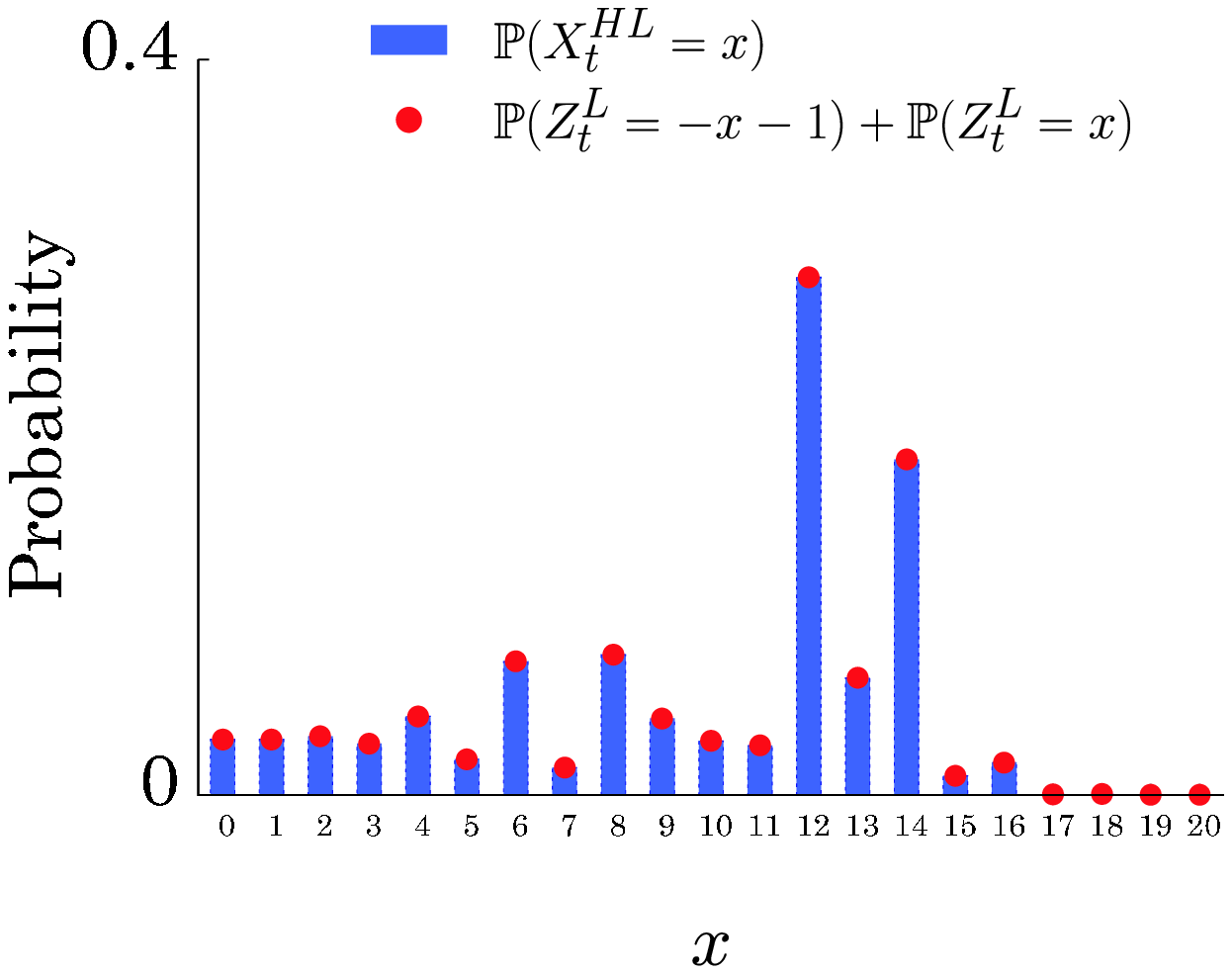}\\[2mm]
  (c) $\alpha=0,\,\beta=1$
  \end{center}
 \end{minipage}
\vspace{5mm}
\caption{$\theta=\pi/4$ : The blue bars represent $\mathbb{P}(X_{20}^{HL}=x)$ of the quantum walk on the half line at time $20$. The red circles represent the sum of two probabilities $\mathbb{P}(Z_{20}^L=-x-1)$ and  $\mathbb{P}(Z_{20}^L=x)$ in which $Z_{20}^L$ means the position of the quantum walker on the line at time $20$. Both quantum walks launch with the localized initial state $\ket{0}\otimes\left(\alpha\ket{0}+\beta\ket{1}\right)$ at time $0$.}
\label{fig:160803_04}
\end{center}
\end{figure}

\section{Summary}
\label{sec:summary}
In this paper we studied a time-dependent quantum walk which started off at the edge of the half line with a localized initial state and repeatedly got updated with two unitary operations cast to the walk alternately.
As Lemma \ref{lem:160612_01} mentioned, the quantum walk had a connection, at the level on amplitude, to a 2-period time-dependent quantum walk on the line with a delocalized initial state, and that fact worked for derivation of Theorem \ref{th:limit}, that is, limit distributions for the quantum walk on the half line. 
To prove the limit distributions, we first found limit distributions, which were shown in Eqs.~\eqref{eq:limit_QWL_0}, \eqref{eq:limit_QWL_1}, and \eqref{eq:limit_QWL_prob}, for the quantum walk on the line.
Then, combining those equations and Lemma \ref{lem:160612_03}, we approached the limit distributions of the quantum walk on the half line.
The limit distributions had a compact support dictated by only one of the two unitary operations.
Most remarkably, they were totally independent from the parameters $\alpha$ and $\beta$ producing the localized initial state of the quantum walk.
Although we took care of a special type of unitary operations in Eqs.~\eqref{eq:coin-flip_operator_1} and \eqref{eq:coin-flip_operator_2}, it would be a future task to struggle with a time-dependent quantum walk on the half line defined by a general form of unitary operations.

\begin{center}
{\bf Acknowledgements}
\end{center}
The author is supported by JSPS Grant-in-Aid for Scientific Research (C) (No. 19K03625).


\end{document}